\documentclass[a4paper,11pt]{article}
\usepackage[utf8]{inputenc}

\usepackage{amsthm,amsmath,amssymb,amsfonts,graphics}

\voffset=0mm \hoffset=0mm \oddsidemargin=-5.4mm \topmargin=-1.4mm \headsep=0mm
\headheight=0mm \textwidth=166mm \textheight=235mm \footskip=11mm \parindent=0mm
\parskip=0.5\baselineskip

\theoremstyle{plain}
\newtheorem{theorem}{Theorem}
\newtheorem{Pro}[theorem]{Proposition}
\newtheorem{Lem}[theorem]{Lemma}
\newtheorem{Cor}[theorem]{Corollary}
\newtheorem{Ex}{Example}

\theoremstyle{definition}

\newtheorem{definition}[theorem]{Definition}

\theoremstyle{remark}

\def\Z{\mathbb{Z}}	
\def\C{\mathbb{C}}	
\def\R{\mathbb{R}}	
\def\N{\mathbb{N}}	
\def\S{\mathcal{S}}

\def\cT{\mathcal{T}}
\def\cS{{\mathcal S}}
\def\bbbn{{\mathbb N}}
\def\bbbz{{\mathbb Z}}

\def\i{{\rm i}}
\def\xx{{\Sigma}}

\renewcommand{\leq}{\leqslant} 		
\renewcommand{\geq}{\geqslant}
\def\hp{{\hat{p}}}
\def\hq{{\hat{q}}}

\def\bbbc{{\mathbb C}}
\newcommand\cD{{\mathcal D}}

\newcommand\cH{{\mathcal H}}

\def\a{\alpha}
\def\b{\beta}
\def\om{\omega}

\def\d{\partial}
\def\fieldk{\bbbc}

\def\fA{\mathfrak{A}}

\def\ff{\mathfrak{f}}
\def\cA{\mathcal{A}}
\def\cT{\mathcal{T}}
\def\cS{{\mathcal S}}
\def\cZ{\mathcal{Z}}
\def\cH{\mathcal{H}}

\def\fI{\mathfrak{I}}
\def\fJ{\mathfrak{I}}

\def\pI{\pi_{\fI_a}}
\def\pJ{\pi_{\fI_b}}
\def\cI{\mathcal{I}}

\begin{document}

\bibliographystyle{unsrt}
\title{Quantisations of the Volterra hierarchy}
\author{Sylvain Carpentier $^\ddagger$, Alexander V. Mikhailov$^{\star}$ and 
Jing Ping 
Wang$ ^\dagger $
\\
$\ddagger$ QSMS, Seoul National University, South Korea, 
sylvain.carpentier23@gmail.com 
\\
$\star$ School of Mathematics, University of Leeds, UK, 
a.v.mikhailov@leeds.ac.uk\\
$\dagger$ School of Mathematics, Statistics \& Actuarial Science, University of 
Kent, UK, J.Wang@kent.ac.uk 
}
\date{}  

\maketitle
\begin{abstract}
In this paper we explore a recently emerged approach to
the problem of quantisation based on the notion of quantisation ideals. We 
explicitly prove that the nonabelian Volterra together with the whole hierarchy of its symmetries 
admit a deformation quantisation. We show that all odd-degree symmetries of the Volterra 
 hierarchy admit also a non-deformation quantisation. We  
discuss the quantisation problem for periodic Volterra hierarchy 
including their quantum Hamiltonians, central elements of the
  quantised algebras, and demonstrate  super-integrability of the quantum systems obtained. 
We show that the Volterra system with period $3$ admits a bi-quantum structure, 
which can be regarded as a quantum deformation of its classical bi-Hamiltonian structure.
\end{abstract}

\section{Introduction}

The problem of quantisation has a century long history. In 1925, inspired by  Heisenberg's commutation relations between coordinates and momenta \cite{Heisenberg25}, namely, 
\begin{equation}\label{qp}
 \hq_n\hp_m-\hp_m\hq_n=i\hbar \delta_{n,m},\quad \hq_n\hq_m-\hq_m\hq_n=0,\quad \hp_n\hp_m-\hp_m\hp_n=0, \qquad n,m=1,\ldots,N,
\end{equation}
Dirac proposed the concept of {\em quantum algebra} and noticed that in the 
limit $\hbar\to 0$ the commutators of observables are proportional to their  
Poisson brackets in classical mechanics $[\hq_n,\hp_m]\to i\hbar \{q_n,p_m\}$. 
He raised the issue of consistency of the commutation relations (\ref{qp}) with
each other and with the equations of motion for a finite Plank constant $\hbar\ne  0$
 \cite{Dirac25}. In fact, Dirac proposed the problem of non-commutative deformations of multiplication on Poisson 
manifolds that is presently an active research area. Important results in this direction have been obtained by Kontsevich \cite{Kontsevich2003}.
Witten, in his recent lectures \cite{Witten2021},  
pointed out that due to ``the operator ordering problem, there is no natural, general 
procedure to quantize a classical system'', and described some partial remedies 
to this problem. The general problem of quantisation is still open.

Recently, a fresh approach to the quantisation problem was proposed in 
\cite{AvM20}. It is proposed to start from a dynamical system defined on a 
free associative algebra  $\fA$ with a finite or infinite number of 
multiplicative generators. The dynamical system defines a derivation  
$\partial_t:\fA\mapsto\fA$. 
By quantisation it is understood a reduction of the dynamical
system on $\fA$ to the system defined on a quotient algebra 
$\fA_\fJ=\fA\diagup\fJ$ over a two-sided ideal 
$\fJ\subset\fA$ satisfying the following properties: 
\begin{enumerate}
 \item[({\rm i})] the ideal $\fJ$ is $\partial_t$--stable, that is, $\partial_t(\fJ)\subset\fJ$;
 \item[({\rm ii})] the quotient algebra $\fA_\fJ$ admits an additive basis of normally ordered monomials.
\end{enumerate}
In \cite{AvM20} an ideal satisfying the above two conditions is called  a
{\em 
quantisation ideal }, and $\fA_\fJ$ is called a {\em quantised algebra}.

The condition (i) is crucial. The reduction of a dynamical system 
corresponding to the derivation $\partial_t$ to the quotient algebra $\fA_\fJ$ 
is well defined if and only if the ideal is $\partial_t$--stable.

The second condition (ii)  enables one to define commutation relations 
between any two elements of the quotient algebra and uniquely represent  
elements of  $\fA_\fJ$ in the basis of normally ordered monomials 
(similar to a normal 
ordering in quantum physics). Finitely generated algebras, admitting a 
Poincar{\'e}--Birkhoff--Witt basis, and their quotients, satisfy the 
condition (ii). They have a wide range of applications, and share  some
properties with the commutative polynomial rings (see
\cite{berg, Lev} and references in).

Any finitely generated associative algebra can be presented as (is isomorphic to) a quotient of a 
free associative algebra over a suitable two-sided ideal. For example,  Dirac's 
quantum algebra is a quotient of the free algebra  $\bbbc\langle 
q_1,p_1,\ldots,q_N,p_N\rangle$ over the two-sided ideal generated by the 
commutation relations (\ref{qp}).

We emphasise that  quantisation proposed in \cite{AvM20} guarantee the consistency of
the ``commutation relations'' with each other and with the equations of motion 
(resolving the issue raised by Dirac) and the associativity of the 
non-commutative multiplication in the quantised algebra (which potentially could be an issue 
in the deformation quantisation). This new approach also results in examples of
non-deformation quantisations.

In order to apply this method of quantisation to a classical dynamical system
with commutative variables one needs to lift it to a system on a nonabelian
free associative algebra. Such   lifting is not unique (on the quantum level it has been
noted already by Dirac \cite{Dirac25}, and highlighted by Witten in his 
lectures \cite{Witten2021}). The guiding principle here is to preserve the most 
important properties of the classical system in the lifted one. For 
example, integrable systems admit  hierarchies of symmetries and we 
would like to have this property for the corresponding systems defined on a 
free associative algebras and for the quantised systems as well. Fortunately 
many integrable  systems admit such liftings \cite{EGR98,os98,miksok_CMP,ow2000, cw19-2},  and can be quantised by the
method 
proposed in \cite{AvM20}. Recently, the hierarchies of stationary Korteweg 
de--Vries equation and Novikov's equations have been quantised using the method 
of quantisation ideals \cite{BM2021}.

In this paper we study the  quantisation problem for the integrable 
nonabelian Volterra system 
\begin{equation}\label{vol}
\partial_{t_1 }(u_n)= \varrho K^{(1)}, \quad K^{(1)}=u_{n+1} u_n- u_n 
u_{n-1},\qquad n\in\bbbz
\end{equation}
and its hierarchy of symmetries. Here $\varrho\in\C$ is a constant which can be 
set to be equal to $1$ by the re-scaling  $u_n\to\varrho u_n$.  
In the classical (commutative) case system 
(\ref{vol}) was introduced by Zakharov,  Musher and  Rubenchik 
for 
the description of the fine structure of
the spectra of Langmuir oscillations in a plasma \cite{ZMR74}. Its 
integrability and Lax representation were discovered by Manakov 
\cite{Manakov74} 
and independently by Kac and van Moerbeke \cite{KacVanM75}.
The nonabelian version of the system (\ref{vol}), with variables 
$u_n(t_1 )$ taking values in a free associative algebra,  was studied by 
Bogoyavlensky \cite{Bog91}.

The Volterra system (\ref{vol}) is the first member of the infinite hierarchy 
of commuting symmetries
\[
\partial_{t_\ell}(u_n)=K^{(\ell)}(u_{n+\ell},\ldots ,u_{n-\ell}),\qquad 
\ell=1,2,\ldots,\ \ n\in\bbbz,
\]
where $K^{(\ell)}(u_{n+\ell},\ldots ,u_{n-\ell})$ are homogeneous polynomials of degree 
$\ell+1$ which can be found explicitly \cite{cw19-2}. The second member of 
the hierarchy 
\begin{equation}
\label{voltf2}
\partial_{t_2}(u_n)=K^{(2)}=u_{n+2}  u_{n+1}  u_n +u_{n+1}^2  u_n+  u_{n+1}  u_n^2 - 
u_n^2  u_{n-1}-u_n   u_{n-1}^2- u_n   u_{n-1}  u_{n-2}
\end{equation}
is given by the cubic polynomial.  It can be 
straightforwardly verified that 
$\partial_{t_2}(\partial_{t_1}(u_n))=\partial_{t_1}(\partial_{t_2}(u_n))$ and 
thus (\ref{voltf2}) is a cubic symmetry of (\ref{vol}).

In the new approach the quantisation problem for equation  (\ref{vol}) reduces 
to the problem of finding two-sided ideals in the free associative algebra  
$\fA=\fieldk\langle u_n\,;\, n\in\bbbz\rangle$ 
  generated by an infinite number of non-commuting variables such that the
above conditions (i) and (ii) are satisfied. It is obvious that the ideal $\fJ$ 
generated by the infinite set of polynomials 
\begin{equation}\label{ideal0}
\fJ= \langle u_nu_m -\omega_{n,m}u_mu_n\, ;\ n, m \in \mathbb{Z}, 
\omega_{n,m} \in \fieldk^*  
\rangle
\end{equation}
satisfies the condition (ii) for any choice of 
the parameters $\omega_{n,m}=\omega^{-1}_{m,n}$. In \cite{AvM20} it was stated 
that the ideal  $\fJ$ satisfies the condition (i) if and only if 
\[
 \omega_{n,n+1}=\omega_{n+1,n}^{-1}=\omega,\qquad  \omega_{n,m}=1 \ \ 
\mbox{if}\ \ |n-m|\geqslant 2. 
\]
Thus the quantisation ideal suitable for the Volterra system (\ref{vol}) is 
\begin{equation}\label{idi}
\fI_a= \langle \{ u_nu_{n+1}-\omega u_{n+1}u_n\,;\ n \in \mathbb{Z} \} \cup 
\{u_nu_m-u_mu_n\,;\ |n-m| >1 ,\ n,m \in\bbbz\ \} \rangle , 
 \end{equation}
leading to the commutation relations
\begin{equation}\label{comm1}
  u_nu_{n+1}=\omega u_{n+1}u_n,\qquad u_nu_m=u_mu_n\ \ 
\mbox{if}\ \ |n-m|\geqslant 2,\quad n,m \in\bbbz
\end{equation}
in the quotient algebra $\fA\diagup\fI_a$.
It was verified by direct computations that the ideal $\fI_a$ is invariant 
with respect to derivations defined by a few first symmetries of the Volterra 
hierarchy and conjectured that it is also true for the whole hierarchy. In this 
paper we give an explicit proof for the above conjecture (Theorem \ref{main1}). 
The ideal $\fJ_a$ corresponds to a 
deformation quantisation. In the limit $\omega\to 1$ it leads to the classical 
commutative case. 

 It was claimed in 
\cite{AvM20} that the cubic symmetry of the Volterra system, equation 
(\ref{voltf2}), admits two distinct quantisations 
ideals of the form (\ref{ideal0}). The first one coincides with $\fI_a$ defined 
by
(\ref{idi}), while the second one is 
\begin{equation}\label{idj}
\fI_b= \langle \{ u_nu_{n+1}-(-1)^n \omega u_{n+1}u_n\,;\,n \in \mathbb{Z} \} 
\cup 
\{u_nu_m+u_mu_n\,;\,|n-m| >1,\  n,m \in \mathbb{Z}\} \rangle\,.
\end{equation}
Note that the quantisation corresponding to the ideal $\fI_b$ is not a 
deformation of a commutative or Grassmann algebra. It is a new and {\em 
non-deformation} quantisation of equation (\ref{voltf2}) with the commutation 
relations
\begin{equation}\label{comm2}
 u_nu_{n+1}=(-1)^n \omega u_{n+1}u_n,\qquad  
u_nu_m+u_mu_n=0\ \ 
\mbox{if}\ \ |n-m|\geqslant 2,\quad n,m \in\bbbz
\end{equation}
in the quotient algebra $\fA\diagup\fI_b$. 
The ideal $\fI_b$ given by (\ref{idj}) is not invariant with respect to the 
Volterra 
system (\ref{vol}) and thus it is not suitable for its quantisation. In 
\cite{AvM20} it was claimed that the ideal $\fI_b$ is 
invariant with respect to a first few odd degree symmetries of the Volterra 
equation. In this paper we  prove  that the ideal $\fI_b$ 
(\ref{idj}) is a quantisation ideal for all odd degree members of the Volterra 
hierarchy (Theorem \ref{main2}).

In the quantum theory we replace real valued commutative variables $u_n$ by 
Hermitian elements. Their commutation relations are defined by the quantisation
ideal, which should be stable with respect to the Hermitian conjugation
(Definition \ref{herm}). In the case of the ideals $\fI_a$ and $\fI_b$, it implies
that $\omega=e^{2\i\hbar}$, where $\hbar$ is an arbitrary real parameter, an analogue of the Plank constant, and $\i^2=-1$.
Moreover, in the quantised equations of the Volterra hierarchy, we should introduce the factors $e^{\i\ell\hbar}$ which make
 the right-hand side of the equations self-adjoint, that is,
\begin{equation}\label{qVh}
\d_{t_\ell}(u_n)=e^{\i\ell\hbar}K^{(\ell)}(u_{n+\ell},\ldots ,u_{n-\ell}),\qquad
\ell=1,2,\ldots,\ \ n\in\bbbz .
\end{equation}

 In the algebra 
$\fA_{\fI_a}$ with commutation relations (\ref{comm1}) the  quantised Volterra 
equation   and its symmetry 
 can be represented in the Heisenberg form
\begin{eqnarray}
 &&   
\d_{t_1}(u_n)=e^{\i\hbar}K^{(1)}=\dfrac{\i}{2\sin(\hbar)}[H_1,u_n]
 \label{hei2},\\
&&    
\d_{t_2}(u_n)=e^{2\i\hbar}K^{(2)}=\dfrac{\i}{2\sin(2\hbar)}[H_2,u_n],\label{hei3}
\end{eqnarray}
where
\[
H_1=\sum\limits_{k\in\Z}u_k
\qquad H_2=
    \sum\limits_{k\in\Z}(u_k^2+u_{k+1}u_k+u_ku_{k+1}).
\]

In the algebra $\fA_{\fI_b}$ with commutation relations (\ref{comm2}), the 
first member of the quantised Volterra sub-hierarchy of odd degree symmetries
has the same Heisenberg form (\ref{hei3}). Moreover, in the case of 
the algebra  $\fA_{\fI_b}$ we have $H_2=H_1^2$, which is 
not true for the algebra $\fA_{\fI_a}$. 

The quantisation of the Volterra system  was studied by 
Volkov and Babelon in the frame of the quantum inverse scattering method \cite{volkov,babel}.  In the paper by Inoue and Hikami \cite{InKa},  the commutation relations (\ref{comm1}), as well as a first few Hamiltonians of the classical and quantum Volterra hierarchy were found using ultra-local Lax representation and  $R$--matrix technique. Our alternative approach does not rely on the existence of a Lax or Hamiltonian structures, and it enables us to reproduce the results presented in \cite{InKa} and to find a non-deformation quantisation (\ref{comm2}) for odd degree members of the Volterra hierarchy which is new and rather surprising.

The Volterra equation and its hierarchy admit periodic reductions with 
arbitrary positive integer period $M\in\mathbb{N}$. The periodic reduction is 
the identification $u_{n+M}=u_n$ for all $n\in\Z$. 
It reduces the infinite system of equations (\ref{vol}) to a system of $M$ 
equations on a finitely generated free algebra $\fA_M=\C\langle 
u_1,\ldots,u_M\rangle$. 
The problem of quantisation of the periodic Volterra hierarchies is
discussed in Section \ref{periodic}. In particular,  we show that the
Volterra system with period $3$ admits bi-quantum structure, which is a quantum 
analogue of its bi-Hamiltonian structure in the classical case.  In the
case $M=4$ we obtain three possible quantisations, and show that the obtained quantised
systems are super-integrable, whose first integrals and
central elements are explicitly presented.

\section{Integrable nonabelian Volterra hierarchy}\label{sec2}

In this section we introduce some basic notations required for this paper, and present
the Volterra hierarchy on a free associative algebra in an explicit form.

Let $\fA=\fieldk\langle u_n\,;\, n\in\bbbz\rangle$ be a free associative 
algebra generated by an infinite number of non-commuting variables.   
There is a natural automorphism $\cS\,:\,\fA\mapsto\fA$, 
which we call the {\em shift operator}, defined as 
\[
 \cS: a( u _k,\ldots , u _r)\mapsto a( u _{k+1},\ldots , u _{r+1}),\quad 
\cS:\alpha\mapsto\alpha, \qquad a( u _k,\ldots , u _r)\in \fA,\ \ 
\alpha\in\fieldk.
\]
Thus $\fA$ is a difference algebra. Let $\cT$ denote the antiautomorphism of 
$\fA$ defined by
\[
 \cT(u_k)=u_{-k},\quad \cT(a\cdot b)=\cT(b)\cdot\cT(a),\quad 
\cT(\alpha)=\alpha, \ \ \ a,b\in\fA,\ \ \ \alpha\in\fieldk.
\]
The involution $\cT$ is a composition of the reflection in the alphabet index 
$u_k\mapsto u_{-k}$ and the transposition of the monomials. For example:
\[
 \cT(u u_1+u_4u_1u_{-3}u_{-2})=u_{-1}u+u_{2}u_{3}u_{-1}u_{-4}.
\]

A derivation  $\cD  $ of the algebra  $\fA$ is a $\C$--linear map
satisfying   Leibniz's rule  
\[\cD  (\alpha a+\beta  b)=\alpha\cD  (a)+\beta\cD  (b),\qquad  \cD  (a\cdot 
b)=\cD (a)\cdot b+a\cdot\cD (b),\qquad  a,b\in\fA,\ \ \alpha,\beta\in\C.\] 
Thus a derivation $\cD $ can be uniquely defined by its action on the 
generators 
and $\cD  (\alpha)=0,\ \alpha\in\C$.

A derivation $\cD $ is called evolutionary if it commutes with the automorphism 
$\cS$. An evolutionary derivation is completely characterised by its action on 
the generator $u$ (we often write $u$ instead of $u_0$), that is,
\[
 \cD  (u)=a\quad \mbox{and} \quad \cD (u_k)=\cS^k (a),\qquad a\in \fA.
\]
Thus it is natural to adopt the notation $\cD_a$, such that $\cD_a(u)=a$, for 
an evolutionary derivation with the characteristic $a$.
A commutator of evolutionary derivations $\cD_a,\cD_b$ is also the evolutionary 
derivation   $[\cD_a,\cD_b]=\cD_c$ with the characteristic 
$c=\cD_a(b)-\cD_b(a)$, 
which is  called the Lie bracket  of the elements $a$ and $b$.  Evolutionary 
derivations form a Lie subalgebra of the Lie algebra  of derivations of $\fA$.

Assuming that the generators $u_k$ depend on $t\in\C$ we can identify an 
evolutionary $\cD_a$ with an infinite system of differential-difference 
equations
\[
 \d_t(u_n)=\cD_a(u_n)=\cS^n(a),\qquad n\in\Z.
\]
Therefore we can say that $\d_t(u)=a$ defines a derivation of $\fA$.

The Volterra 
system (\ref{vol}) defines the derivation $\d_{t_1}\,:\,\fA\mapsto\fA,$ which
commutes with the automorphism and anti-commute with the involution $\cT$, i.e.,
\[ 
 \cS\cdot \d_{t_1}=\d_{t_1}\cdot \cS,\qquad \cT\cdot \d_{t_1}=-\d_{t_1}\cdot \cT\, .
\]
The differential-difference system (\ref{voltf2}) defines another evolutionary 
derivation $\d_{t_2}$ commuting with $\cS$ and anti-commuting with $\cT$.
Evolutionary derivations commuting with $\d_{t_1}$ are symmetries of the Volterra
system. It can be straightforwardly verified that $[\d_{t_1},\d_{t_2}]=0$ and
thus equation (\ref{voltf2}) is a symmetry of the Volterra system.

It is well known that the Volterra system has an infinite hierarchy of 
commuting symmetries.  They can be found using Lax representations both in 
commutative \cite{Manakov74} and non-commutative \cite{Bog91} cases, or the 
recursion operators \cite{wang12, cw19-2}. Remarkably, the explicit expressions 
for generalised symmetries of the Volterra system (\ref{vol}) can be presented 
in terms of a family of nonabelian homogeneous difference polynomials  
\cite{cw19-2}, which is inspired by 
the polynomials in the commutative case discovered in \cite{svin09,svin11}.

Let us assume that the generators  $u_k$ of the free associative algebra $\fA$ 
depend on an infinite set of ``times'' $t_1,t_2, \ldots$ . It follows from  
\cite{cw19-2} that the hierarchy of commuting symmetries of the Volterra system 
(\ref{vol}) can be written in the following explicit form
\begin{equation}\label{volh}
 \d_{t_\ell}(u)=\cS(X^{(\ell)})u-u\cS^{-1}(X^{(\ell)}),\qquad \ell\in\N\, ,
\end{equation}
where the (noncommutative) polynomials $X^{(\ell)}$ are given by explicit 
formulae
\begin{equation}\label{xl}
 X^{(\ell)}=
 \sum_{0\leq \lambda_{1}\leq \cdots \leq \lambda_{\ell}\leq \ell-1}
	\left(\prod_{j=1}^{\rightarrow{\ell}} u_{\lambda_{j}+1-j} \right).
\end{equation}
Here $\prod_{j=1}^{\rightarrow{\ell}}$   denotes the order of the values $j$, 
from $1$ to $\ell$ in the product of the noncommutative generators 
$u_{\lambda_{j}+1-j}$.
For example, we have $X^{(1)}=u$ and
\begin{eqnarray}
 &&\hspace{-1cm}X^{(2)}=u_1 u +u^2+u u_{-1};\label{x1}\\
 &&\hspace{-1cm} X^{(3)}=u_2 u_1 u+u_1^2 u + u u_1 u +u_1 u^2+u^3+ u u_{-1} 
u+u_1 u u_{-1}+u^2 u_{-1}+uu_{-1}^2+u u_{-1} u_{-2}.\label{x2}
\end{eqnarray}
Note that $\cT(X^{(\ell)})=X^{(\ell)}$, and thus we have
$\cT\cdot \d_{t_\ell}=-\d_{t_\ell}\cdot \cT$ for all $\ell$.
Clearly, we get the Volterra equation (\ref{vol}) when $\ell=1$ and the system 
(\ref{voltf2}) when $\ell=2$.

\section{Quantisation ideals of the Volterra equation and its 
symmetry}\label{sec3}
In this section, we prove the statements on quantisation ideals for the 
Volterra equation (\ref{vol})
itself and its symmetry (\ref{voltf2}) stated in \cite{AvM20}.

Let  $\fJ\subset\fA$ be a two-sided ideal generated by the infinite set of 
polynomials $\ff_{i,j}$:
\begin{equation}\label{ideal00}
 \fJ=\langle \ff_{i,j}\,;\, i<j,\ i,j\in\Z\rangle,\qquad 
 \, \ff_{i,j}=u_i u_j-\omega_{i,j}u_j u_i,
\end{equation}
where $\omega_{i,j}\in\C^*$ are arbitrary non-zero complex parameters.
Given an ideal $\fI$, we denote the projection on the quotient algebra by
by $\pi_{\fJ}: \fA \rightarrow \fA/\fJ $.
The quotient algebra $\fA\diagup\fJ$ has an additive basis of  {\em standard 
normally ordered monomials}
\[
 u_{i_1}u_{i_2}\cdots u_{i_n}\, ;\qquad i_1\geqslant 
i_2\geqslant\cdots\geqslant i_n,\ i_k\in\Z,\ n\in\N.
\]
Indeed, in $\fA\diagup\fJ$ any polynomial can be represented in this basis by 
recursive replacements 
$u_n u_m\to \omega_{n,m}u_m u_n$ if $m>n$ in the monomials. Thus the condition 
(ii) for the ideal $\fJ$ is satisfied. The condition (i) imposes constraints on 
the structure constants $\omega_{n,m}$ of the ideal. 

\begin{Pro}\label{proV}
 The ideal $\fJ$ (\ref{ideal00}) is invariant with respect to the Volterra 
dynamics (\ref{vol}) if and only if
\[
 \omega_{n,n+1}= \omega_{0,1},\qquad  \omega_{n,m}=1 \ \ 
\mbox{if}\ \ m-n\geqslant 2,\qquad n,m\in\Z. 
\]
\end{Pro}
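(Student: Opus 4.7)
The plan is to translate the invariance condition $\partial_{t_1}(\fJ) \subseteq \fJ$ into a system of polynomial equations on the constants $\omega_{n,m}$ and solve it. Since $\fJ$ is generated by the $\ff_{i,j}$ and $\partial_{t_1}$ is a derivation, invariance is equivalent to $\partial_{t_1}(\ff_{i,j}) \in \fJ$ for every $i < j$; and since $\fA/\fJ$ has a normally ordered basis for any choice of the $\omega_{n,m}$, this in turn means that the normal form of $\partial_{t_1}(\ff_{i,j})$ must vanish identically. Expanding by Leibniz's rule,
$$\partial_{t_1}(\ff_{i,j}) = (u_{i+1}u_i - u_iu_{i-1})u_j + u_i(u_{j+1}u_j - u_ju_{j-1}) - \omega_{i,j}\bigl[(u_{j+1}u_j - u_ju_{j-1})u_i + u_j(u_{i+1}u_i - u_iu_{i-1})\bigr],$$
which is a polynomial supported on the generators indexed by $\{i-1, i, i+1, j-1, j, j+1\}$.

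I would then analyse this case by case on $d = j - i$. The $d = 1$ computation, after a careful rewrite using the generators of $\fJ$, reduces to $\omega_{i,i+1}\bigl[(1 - \omega_{i-1,i+1})u_{i+1}u_iu_{i-1} + (\omega_{i,i+2} - 1)u_{i+2}u_{i+1}u_i\bigr]$; since $\omega_{i,i+1} \neq 0$, vanishing forces $\omega_{n,n+2} = 1$ for all $n$. Feeding this in, the reduction of $\partial_{t_1}(\ff_{i,i+2})$ yields three independent normally ordered monomials $u_{i+2}u_{i+1}u_i$, $u_{i+2}u_iu_{i-1}$, $u_{i+3}u_{i+2}u_i$ with respective coefficients $\omega_{i+1,i+2} - \omega_{i,i+1}$, $1 - \omega_{i-1,i+2}$, $\omega_{i,i+3} - 1$; vanishing delivers both the constancy $\omega_{n,n+1} = \omega_{0,1}$ and $\omega_{n,n+3} = 1$.

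For $d \geq 3$ I would proceed by induction with the hypothesis $\omega_{n,m} = 1$ whenever $2 \leq m - n \leq d$. The hypothesis in particular gives $\omega_{i,i+d} = 1$, so $\ff_{i,i+d}$ is the commutator $[u_i, u_{i+d}]$, and every cross-cluster commutation between the ``left cluster'' $\{u_{i-1}, u_i, u_{i+1}\}$ and the ``right cluster'' $\{u_{i+d-1}, u_{i+d}, u_{i+d+1}\}$ involving a pair of generators at distance at most $d$ becomes trivial. Upon reduction to normal form, most terms cancel in symmetric pairs and only $(1 - \omega_{i-1,i+d})u_{i+d}u_iu_{i-1} + (\omega_{i,i+d+1} - 1)u_{i+d+1}u_{i+d}u_i$ survives, whose vanishing extends the hypothesis to distance $d+1$ and closes the induction. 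The converse is read off from the same three computations: substituting $\omega_{n,n+1} = \omega_{0,1}$ and $\omega_{n,m} = 1$ for $m - n \geq 2$ makes every coefficient vanish.

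The main obstacle is purely combinatorial bookkeeping: each rewrite $u_nu_m \to \omega_{n,m}u_mu_n$ introduces a factor that must be tracked through several subsequent rewrites, and the cancellation patterns must be verified precisely. The $d = 1$ case is the most delicate, since the six generic indices collapse to four and many monomials collide in normal form; for $d = 3$ the adjacent cross pair $(u_{i+1}, u_{i+2})$ contributes factors of $\omega_{0,1}$ which one must check drop out symmetrically, reducing to the same two-coefficient equation as the generic $d \geq 4$ case.
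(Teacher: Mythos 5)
Your proof is correct and follows essentially the same strategy as the paper: project $\partial_{t_1}(\ff_{i,j})$ onto the quotient algebra and require every coefficient in the normally ordered basis to vanish. The only difference is the order of cases — the paper treats all $j>i+2$ uniformly (the four surviving monomials there are linearly independent, giving $\omega_{n,m}=1$ for $m-n\geqslant 2$ in one step) and then specialises to $j=i+2$, whereas you go bottom-up from $j=i+1$ and close the large-separation cases by induction; both orderings are sound, and your cautionary remark about the pair $(u_{i+1},u_{i+2})$ at $d=3$ is in fact moot since those two generators never occur in a common monomial of $\partial_{t_1}(\ff_{i,i+3})$.
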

Denoting $\omega_{0,1}=\omega$, we arrive to the commutation relations 
(\ref{comm1}) and the ideal $\fI_a$ given by (\ref{idi}).
\begin{proof}
 Let us differentiate $\ff_{i,j}$ ($i<j$) by the derivation $\partial_{t_1}$ 
associated to the Volterra equation (\ref{vol}). We have
 \begin{eqnarray*}
 &&\partial_{t_1}\left(\ff_{i,j}\right) = u_{i+1}u_{i}u_j-u_i u _{i-1}u_j + 
u_iu_{j+1}u_j-u_iu_ju_{j-1} \\
&&\qquad -\omega_{i,j}(u_{j+1}u_{j}u_i-u_j u _{j-1}u_i + 
u_ju_{i+1}u_i-u_ju_iu_{i-1} ).
 \end{eqnarray*}
We project this equation on the quotient algebra and require
\begin{eqnarray}
&&0=\pi_{\fI}\left(\partial_{t_1}(\ff_{i,j})\right) = \omega_{i,j} 
(\omega_{i+1,j}-1)u_ju_{i+1}u_{i}+ \omega_{i,j} (1-\omega_{i-1,j}) u_ju_i u 
_{i-1}\nonumber\\
&&\qquad +  \omega_{i,j} (\omega_{i,j+1}-1)u_{j+1}u_ju_i+ \omega_{i,j} 
(1-\omega_{i,j-1})u_ju_{j-1}u_i, \label{pjv}
\end{eqnarray}
where we use the convention $\omega_{i,i}=1$.
When $j > i+2$, the four monomials $u_{j+1}u_ju_i$, $u_ju_iu_{i-1}$, 
$u_ju_{i+1}u_i$ and $u_ju_{j-1}u_i$ are linearly independent. Thus 
$\pi_{\fI}\left(\partial_{t_1}(\ff_{i,j})\right) =0$ if and only if all their 
coefficients vanish since $\omega_{i,j}\neq 0$. 
This leads to 
$$\omega_{i+1,j}=\omega_{i-1,j}=\omega_{i,j+1}=\omega_{i,j-1}=1.$$
Hence we must have $\omega_{i,j} = 1$ whenever $i+1 < j$. Using this result, it 
follows from (\ref{pjv}) that
\begin{eqnarray*}
0=\pi_{\fI}\left(\partial_{t_1}(\ff_{i,i+2})\right) = \omega_{i,i+2} 
(\omega_{i+1,i+2}-\omega_{i,i+1})u_{i+2}u_{i+1}u_{i}.
\end{eqnarray*}
This implies that all the $\omega_{i,i+1}$ are equal to each other. Let 
$\omega=\omega_{i,i+1}$. It remains to check that (\ref{pjv}) is valid for 
$j=i+1$. Indeed, 
\begin{eqnarray*}
&&\pi_{\fI}\left(\partial_{t_1}(\ff_{i,i+1})\right) =  
\omega(1-\omega_{i-1,i+1}) u_{i+1}u_i u _{i-1} +  \omega 
(\omega_{i,i+2}-1)u_{i+2}u_{i+1} u_i=0,
\end{eqnarray*}
and we proved the statement.
\end{proof}
\begin{Pro}\label{proV2}
 The ideal $\fJ$ (\ref{ideal00}) is invariant with respect to the  dynamical 
system (\ref{voltf2}), i.e.,
 $\d_{t_2}(u)=\cS(X^{(2)})u-u\cS^{-1}(X^{(2)})$ only in two cases:
 \begin{enumerate}
  \item[{\rm (a)}.]
$\qquad
 \omega_{n,n+1}= \omega,\qquad  \omega_{n,m}=1 \ \ 
\mbox{if}\ \ m-n\geqslant 2,\qquad n,m\in\Z;
$
  \item[{\rm (b)}.]
$\qquad
 \omega_{n,n+1}= (-1)^{n}\omega,\qquad  \omega_{n,m}=-1 \ \ 
\mbox{if}\ \ m-n\geqslant 2,\qquad n,m\in\Z, 
$
 \end{enumerate}
where $\omega\in\C^*$ is an arbitrary non-zero complex parameter.
\end{Pro}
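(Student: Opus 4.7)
The plan is to mimic the proof of Proposition~\ref{proV}: for each generator $\ff_{i,j}=u_iu_j-\omega_{i,j}u_ju_i$ with $i<j$, I would differentiate by $\d_{t_2}$ using the Leibniz rule,
\[
\d_{t_2}(\ff_{i,j})=\d_{t_2}(u_i)u_j+u_i\d_{t_2}(u_j)-\omega_{i,j}\bigl(\d_{t_2}(u_j)u_i+u_j\d_{t_2}(u_i)\bigr),
\]
rewrite each resulting quartic monomial in the standard normally ordered basis of $\fA\diagup\fJ$, and impose $\pi_{\fJ}(\d_{t_2}(\ff_{i,j}))=0$. Since $K^{(2)}$ consists of six cubic monomials supported on $\{u_{n-2},\ldots,u_{n+2}\}$, the full expression is a linear combination of $24$ quartic monomials in the $\omega_{r,s}$.

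I would first treat the generic case $j-i\geqslant 5$, where the index windows $\{i-2,\ldots,i+2\}$ and $\{j-2,\ldots,j+2\}$ are disjoint, so the six terms of $\d_{t_2}(u_i)u_j-\omega_{i,j}u_j\d_{t_2}(u_i)$ project onto six distinct basis monomials of the form $u_j\,w$, where $w$ runs through the six cubic monomials of $K^{(2)}$ at position $i$. Matching coefficients to zero yields the identities
\[
\omega_{i+1,j}^{2}=\omega_{i-1,j}^{2}=1,\quad \omega_{i,j}\omega_{i+1,j}=\omega_{i,j}\omega_{i-1,j}=1,\quad \omega_{i+1,j}\omega_{i+2,j}=\omega_{i-1,j}\omega_{i-2,j}=1,
\]
which force $\omega_{i-2,j}=\omega_{i-1,j}=\omega_{i,j}=\omega_{i+1,j}=\omega_{i+2,j}\in\{+1,-1\}$. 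The symmetric analysis of $u_i\d_{t_2}(u_j)-\omega_{i,j}\d_{t_2}(u_j)u_i$ produces the analogous equalities in the second index. Varying $i$ and $j$ and propagating these local equalities yields a single global sign $\epsilon\in\{+1,-1\}$ such that $\omega_{n,m}=\epsilon$ whenever $m-n\geqslant 3$.

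Next, I would run the same computation for the remaining gaps $j-i\in\{1,2,3,4\}$, substituting the already determined values $\omega_{n,m}=\epsilon$ for $m-n\geqslant 3$. The gap-$2$ equations force $\omega_{i,i+2}=\epsilon$ and, via the four contributions to the basis monomial $u_{i+2}u_{i+1}u_i^2$ in $\pi_{\fJ}(\d_{t_2}(\ff_{i,i+2}))$, produce the decisive relation
\[
\omega_{i+1,i+2}-\epsilon\,\omega_{i,i+1}=0 .
\]
When $\epsilon=+1$ this gives $\omega_{n,n+1}=\omega_{n+1,n+2}$ for all $n$, hence a single constant $\omega$, which is case~(a); when $\epsilon=-1$ it gives $\omega_{n+1,n+2}=-\omega_{n,n+1}$, hence $\omega_{n,n+1}=(-1)^{n}\omega$, which is case~(b). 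The remaining equations at gaps~$1$, $3$ and $4$ then serve as consistency checks, automatically satisfied once the pattern of~(a) or~(b) is enforced.

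The main obstacle is combinatorial bookkeeping in the small-gap regime: the $24$ monomials produce many contributions after normal ordering, and several of them collapse onto the same basis element when indices coincide. Organising the calculation by the ordered index tuple of each basis monomial and handling each gap $k\in\{1,2,3,4\}$ separately keeps the computation tractable; the delicate point is to verify that no mixed assignment (e.g.~distinct values of $\omega_{n,m}$ at the same gap) survives all the constraints, so that the dichotomy $\epsilon=\pm 1$ truly exhausts all solutions.
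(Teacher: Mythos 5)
Your plan matches the paper's proof in all essentials: differentiate each generator $\ff_{i,j}$ by $\partial_{t_2}$, project onto the normally ordered basis, and exploit the linear independence of the resulting quartic monomials; start from a large gap $j-i$ to force $\omega_{n,m}=\epsilon\in\{\pm1\}$ on well-separated pairs, then descend to the gap-$2$ relations $\ff_{i,i+2}$ to extract $\omega_{i,i+1}=\epsilon\,\omega_{i+1,i+2}$ and hence the dichotomy~(a)/(b). The only cosmetic difference is the generic threshold: the paper observes that $j-i\geqslant 4$ already makes all twelve basis monomials in $\pi_{\fJ}(\partial_{t_2}(\ff_{i,j}))$ distinct, which lets it read off $\omega_{n,m}=\epsilon$ for all $m-n\geqslant 2$ directly, whereas your more conservative choice $j-i\geqslant 5$ only yields $m-n\geqslant 3$ at that stage and pushes $\omega_{i,i+2}=\epsilon$ into the gap-$2$ analysis -- which, as you correctly note, does supply it (e.g.\ from the coefficient of $u_{i+3}u_{i+2}^{2}u_i$). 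The paper also uses the gap-$3$ equation $\partial_{t_2}(\ff_{i,i+3})$ to establish $\cS^{2}$-invariance before treating gap $2$; in your scheme this follows as a corollary of the gap-$2$ relation, so listing gaps $1,3,4$ as final consistency checks is fine.
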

Thus, equation  (\ref{voltf2}) admits the same quantisation $\fA\diagup\fI_a$ 
(\ref{idi}) as the Volterra system. Additionally, it admits the quantisation 
with the ideal $\fI_b$ (\ref{idj}), which is not invariant with respect to the 
Volterra system (\ref{vol}). The latter quantisation is not a deformation of a 
commutative system.

\begin{proof} We  differentiate $\ff_{i,j}$ ($i<j$) by the derivation 
$\partial_{t_2}$ defined by equation (\ref{voltf2}) and project on the quotient 
algebra. When $i+2\leq j$ we have 
\begin{eqnarray}
&&\omega_{i,j}^{-1}\pi_{\fI}\left(\partial_{t_2}(\ff_{i,j})\right)=(\omega_{i+1,
j}  \omega_{i+2,j}-1) u_ju_{i+2}u_{i+1}u_i
+  ( \omega_{i+1,j}^2-1) u_ju_{i+1}^2 u_i
\nonumber\\&&\quad
+   (\omega_{i,j}  \omega_{i+1,j}-1) u_ju_{i+1}u_i^2
-  (\omega_{i,j}  \omega_{i-1,j}-1) u_ju_i^2 u_{i-1}
-   (\omega_{i-1,j}^2  -1) u_ju_iu_{i-1}^2
\nonumber\\&&\quad
-   (\omega_{i-1,j}  \omega_{i-2,j}-1) u_ju_iu_{i-1}u_{i-2}
+(\omega_{i,j+1}\omega_{i,j+2}-1) u_{j+2}u_{j+1}u_ju_i
\nonumber\\&&\quad
 + (\omega_{i,j+1}^2-1) u_{j+1}^2 u_ju_i
 +  (\omega_{i,j}\omega_{i,j+1}-1) u_{j+1}u_j^2 u_i
 -  (\omega_{i,j}\omega_{i,j-1}-1) u_j^2 u_{j-1}u_i
\nonumber\\&&\quad 
 - (\omega_{i,j-1}^2-1) u_ju_{j-1}^2 u_i
 - ( \omega_{i,j-1}\omega_{i,j-2}-1) u_ju_{j-1}u_{j-2} u_i, \label{pjv2}
\end{eqnarray}
where we use the convention $\omega_{i,i}=1$.
If $i+3 < j$ all monomials in (\ref{pjv2}) are distinct and one deduces from 
$\pi_{\fI}\left(\partial_{t_2}(\ff_{i,j})\right)=0$ that 
\begin{equation*}
\begin{split} 
\omega_{i+1,j}  \omega_{i+2,j}  &= \omega_{i+1,j}^2 =   \omega_{i,j}  
\omega_{i+1,j} =  \omega_{i,j}  \omega_{i-1,j} =   \omega_{i-1,j}^2 =  
\omega_{i-1,j}  \omega_{i-2,j} \\
=\omega_{i,j+1}\omega_{i,j+2}  &=  \omega_{i,j+1}^2 = \omega_{i,j}\omega_{i,j+1}
 =  \omega_{i,j}\omega_{i,j-1} =  \omega_{i,j-1}^2=  
\omega_{i,j-1}\omega_{i,j-2} = 1
 \end{split}
 \end{equation*}
 It follows that $\omega_{i,j} = \epsilon$ for all $i+1 < j$ where $\epsilon = 
\pm 1$. Next let us look at $\partial_{t_2}( \ff_{i,i+3})$. When $j=i+3$, 
(\ref{pjv2}) becomes
 \begin{eqnarray*}
&&\epsilon \pi_{\fI}\left(\partial_{t_2}(\ff_{i,i+3})\right)=\epsilon 
(\omega_{i+2,i+3}-\omega_{i,i+1}) u_{i+3}u_{i+2}u_{i+1}u_i, 
\end{eqnarray*}
which leads to $\omega_{i,i+1} = \omega_{i+2,i+3}$ for all $i \in \Z$. So the 
ideal is invariant under the automorphism $\cS^2$.
We now look at $\partial_{t_2}( \ff_{i,i+2})$. Substituting  $j=i+2$ into 
(\ref{pjv2}), we get
\begin{eqnarray*}
&&\epsilon \pi_{\fI}\left(\partial_{t_2}(\ff_{i,i+2})\right)=(\omega_{i+1,i+2}  
-\epsilon\omega_{i,i+1}) u_{i+2}^2 u_{i+1}u_i
\nonumber\\&&\qquad
+  ( \omega_{i+1,i+2}^2-\omega_{i,i+1}^2) u_{i+2} u_{i+1}^2u_i
+   (\epsilon  \omega_{i+1,i+2}-\omega_{i,i+1}) u_{i+2}u_{i+1}u_i^2 ,
\end{eqnarray*}
which vanishes if and only if $\omega_{i,i+1}=\epsilon  \omega_{i+1,i+2}$. 
Combining all the constraints obtained on $\omega_{i,j}$, we obtain the two 
cases listed in the statement.
Finally, we check
\begin{eqnarray*}
&&\omega_{i,i+1}^{-1}\pi_{\fI}\left(\partial_{t_2}(\ff_{i,i+1})\right)=(\omega_{
i,i+1} \epsilon- \omega_{i+1,i+2}) u_{i+2} u_{i+1}^2 u_i
-  (\omega_{i,i+1}  \epsilon-\omega_{i-1,i}) u_{i+1} u_i^2 u_{i-1} =0.
\end{eqnarray*}
Thus we complete the proof.
\end{proof}
In section \ref{secproof} we will show that
 every member of the Volterra hierarchy 
(\ref{volh}) admits the quantisation $\fA\diagup\fI_a$ (Theorem \ref{main1}) 
and that
 every even member of the Volterra hierarchy 
 $$\d_{t_{2\ell}}(u)=\cS(X^{(2\ell)})u-u\cS^{-1}(X^{(2\ell)}),\qquad \ell\in\N$$
 also admits the quantisation $\fA\diagup\fI_b$ (Theorem \ref{main2}).

In the classical commutative case the variables $u_n$ are usually assumed to be 
real valued. Thus, in the quantum case they should be presented by self adjoint 
operators with respect to the Hermitian conjugation $\dagger$. 
\begin{definition}\label{herm}
 The Hermitian conjugation $\dagger$ in algebra  $\fA$ is defined by the following rules
\[
 u_n^\dagger=u_n,\quad \alpha^\dagger=\bar{\alpha},\quad (a+b)^\dagger=a^\dagger+ b^\dagger,\quad (ab)^\dagger=b^\dagger a^\dagger,\qquad u_n,a,b\in\fA,\ \ \alpha\in\bbbc,
\]
where $\bar{\alpha}$ is the complex conjugate of $\alpha\in \bbbc$.
\end{definition}

The  algebra $\fA$ is $\bbbz_2$-graded as a linear space. It can be represented as a direct sum of self-adjoint and anti-self-adjoint subspaces
\[
 \fA=\fA^{+}\bigoplus\fA^{-},\qquad \fA^+=\{a\in\fA\,;\, a^\dagger=a\},\quad 
 \fA^-=\{a\in\fA\,;\, a^\dagger=-a\}\, .
\]
The Hermitian conjugation $\dagger$ can be extended to the quantised algebra $\fA \diagup\fJ$ if the ideal $\fJ$ is $\dagger$-stable: $\fJ^\dagger=\fJ$.  
\begin{Pro}\label{omegadag}
 The quantisation ideals $\fJ_a$ (\ref{idi}) and $\fJ_b$ (\ref{idj}) 
 are $\dagger$--stable if and only if $\omega^\dagger=\omega^{-1}$.
\end{Pro}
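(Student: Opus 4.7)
The plan is to reduce $\dagger$-stability of each ideal to a finite computation on its generators. Since $\dagger$ is a conjugate-linear antiautomorphism and an involution ($\dagger^2 = \mathrm{id}$), the image $\fI^\dagger$ of any two-sided ideal is the two-sided ideal generated by the $\dagger$-images of a generating set of $\fI$. Hence $\fI^\dagger = \fI$ is equivalent to $\fI^\dagger \subseteq \fI$, which in turn is equivalent to checking that the $\dagger$-image of each defining generator in (\ref{idi}), respectively (\ref{idj}), lies in the same ideal. The constraints this produces on $\omega$ will automatically give both necessity and sufficiency.

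For the ``distance $\geq 2$'' generators the check is immediate and imposes no condition on $\omega$: in $\fI_a$ we have $(u_n u_m - u_m u_n)^\dagger = -(u_n u_m - u_m u_n) \in \fI_a$, and similarly the symmetric generators $u_n u_m + u_m u_n$ of $\fI_b$ are preserved by $\dagger$. The substantive check concerns the nearest-neighbour generators. For $\fI_a$ I would compute
\[
(u_n u_{n+1} - \omega\, u_{n+1} u_n)^\dagger = u_{n+1} u_n - \bar{\omega}\, u_n u_{n+1},
\]
and reduce modulo $\fI_a$ using $u_n u_{n+1} \equiv \omega\, u_{n+1} u_n$, obtaining the residue $(1 - \bar{\omega}\omega)\, u_{n+1} u_n$. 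The analogous computation for $\fI_b$ uses $u_n u_{n+1} \equiv (-1)^n \omega\, u_{n+1} u_n$; the two factors of $(-1)^n$ that arise cancel and one arrives at the same residue $(1 - \bar{\omega}\omega)\, u_{n+1} u_n$.

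The final step is to observe that $u_{n+1} u_n$ is a standard normally ordered monomial (since $n+1 > n$) and therefore, by property (ii) of a quantisation ideal, represents a nonzero basis element of the quotient algebra. Consequently the residue $(1 - \bar{\omega}\omega)\, u_{n+1} u_n$ lies in $\fI_a$ (respectively $\fI_b$) if and only if $\bar{\omega}\omega = 1$, which, since $\omega^\dagger = \bar{\omega}$ for a scalar, is precisely the condition $\omega^\dagger = \omega^{-1}$. I do not anticipate any real obstacle here; the only subtlety is the appeal to the normally ordered basis of the quotient to conclude that a single-monomial residue cannot vanish unless its scalar coefficient does, which is exactly the content of condition (ii).
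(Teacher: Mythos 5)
Your proposal is correct and follows essentially the same route as the paper: compute the $\dagger$-image of the nearest-neighbour generator, observe that it reduces modulo the ideal to a scalar multiple of $u_{n+1}u_n$, and conclude that the scalar must vanish. The paper packages this by rewriting $u_{n+1}u_n - \omega^\dagger u_n u_{n+1}$ as $-\omega^\dagger\bigl(u_n u_{n+1} - (\omega^\dagger)^{-1}u_{n+1}u_n\bigr)$ and comparing with the generator, while you reduce directly in the quotient and invoke the normally ordered basis for the ``only if'' direction; the two phrasings are equivalent, and your version makes the use of condition (ii) and the verification of the distance-$\geq 2$ generators a little more explicit.
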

\begin{proof}
 Indeed, in the case of the ideal $\fJ_a$ we have
 \[
  (u_n u_{n+1}-\omega u_{n+1}u_n)^\dagger=u_{n+1}u_n-\omega^\dagger u_n u_{n+1}=-\omega^\dagger (u_n u_{n+1}-(\omega^\dagger)^{-1} u_{n+1}u_n)\in\fJ_a\Leftrightarrow \omega^\dagger=\omega^{-1}.
 \]
In the case for $\fJ_b$, the proof is similar.
\end{proof}

It suggests to represent $\omega=q^2,\ q=e^{\i \hbar}$, where $\hbar\in\R$ is a real constant (an analog of the Plank constant). Thus $(u_{n+1} u_n)^\dagger=u_n u_{n+1}=q^2 u_{n+1} u_n$. The quantum Volterra hierarchy, which is consistent with the condition $u_n^\dagger=u_n$, can be presented in the form
\begin{equation}\label{qvolh}
  u_{t_1}=q(u_1u-uu_{-1}),\qquad u_{t_\ell}= q^\ell\left(\cS(X^{(2\ell)})u-u\cS^{-1}(X^{(2\ell)})\right), \quad \ell\in\N.
\end{equation}

Finally, we present the Volterra system and its first symmetry in the 
Heisenberg form in the quotient algebras.
 In the algebra $\fA\diagup\fI_a$ with commutation relations (\ref{comm1}) the 
Volterra
equation (\ref{vol}) and its symmetry (\ref{voltf2}) can be represented in the 
Heisenberg form
\begin{equation}\label{qvola}
\begin{array}{ll}
    \d_{t_1}(u_n)=\dfrac{1}{q^{-1}-q}[H_1,u_n],\qquad 
&H_1=\sum\limits_{k\in\Z}u_k;\\
    \d_{t_2}(u_n)=\dfrac{1}{q^{-2}-q^2}[H_2,u_n],&H_2=
    \sum\limits_{k\in\Z}(u_k^2+u_{k+1}u_k+u_ku_{k+1}),
  \end{array}
\end{equation}
where $H_1$ and $H_2$ are self-adjoint algebraically independent and commuting Hamiltonians
$[H_1,H_2]=0$ in $\fA\diagup\fI_a$.

The quantisation $\fA\diagup\fI_b$ with commutation relations (\ref{comm2}) 
also enables us to present equation (\ref{voltf2}) in the Heisenberg form
\begin{equation}\label{qvolb}
  \d_{t_2}(u_n)=\dfrac{1}{q^{-2}-q^2}[H_2,u_n].
\end{equation}
Note that in the quantised algebra $\fA\diagup\fI_b$ we have $H_2=H_1^2$ and $H_2^\dagger=H_2$.

\section{Periodic Volterra hierarchy}\label{periodic}

In the Volterra system (\ref{vol}) we can assume that the function $u_n(t_1)$ 
is periodical in $n$ with an integer period $M\in\bbbn$, that is, $u_n=u_{n+M},\
n\in\Z$. 
In this case the infinite dimensional system (\ref{vol}) reduces to the 
$M$-dimensional dynamical system on  $\fA_M=\C\langle u_1,\ldots 
u_M\rangle=\fA/\cI_M$, 
where the ideal $\cI_M=\langle u_n-u_{n+M}\, ;\, n\in\Z\rangle$. The ideal 
$\cI_M$ is obviously stable with respect to evolutionary derivations.
We can take $u_n,\ n=1,\ldots M$ as canonical representatives of the cosets 
$u_k+\cI_M,\ k\in\Z$.  
The algebra $\fA_M$ is a difference algebra with the induced automorphism  $\cS 
(u_k)=u_{(k+1)\,{\rm mod}\, M}$ of order $M$.

The hierarchy of symmetries (\ref{volh}) of the Volterra system (\ref{vol}) 
reduces to the hierarchy of symmetries of the $M$-periodic system provided we 
count the subscript $k$ in $u_{k}$ modulo $M$. The cases $M=1,2$ lead to
trivial equations.

In the case $M=3$ the periodic Volterra system takes the form
\begin{equation}\label{vol3}
 \begin{array}{l}
 \d_{t_1}(u_{1})=u_2u_1-u_1u_3,\  \d_{t_1}(u_{2})=u_3u_2-u_2u_1,\  
\d_{t_1}(u_{3})=u_1u_3-u_3u_2\, .
\end{array}
\end{equation}
It has an infinitely hierarchy of commuting symmetries:
\[
\begin{array}{ll}
\d_{t_2}(u_{1})&=u_1^2 u_3+u_1  u_3  u_2+u_1  u_3^2-u_2  u_1^2-u_2^2  u_1-u_3  
u_2  u_1,\\&\\
\d_{t_3}(u_{1})&=u_1^3  u_3+u_1 ^2  u_3  u_2+u_1 ^2  u_3 ^2+u_1  u_2  u_1  
u_3+u_1  u_3  u_1  u_3
+u_1  u_3  u_2 ^2\\&+u_1  u_3  u_2  u_3+u_1  u_3^2  u_2+u_1  u_3 ^3-u_2  
u_1^3-u_2  u_1  u_2  u_1-u_2  u_1  u_3  u_1\\&-u_2^2  u_1^2-u_2 ^3  u_1-u_2  
u_3  u_2  u_1-u_3  u_2  u_1 ^2-u_3  u_2 ^2  u_1-u_3  ^2  u_2  u_1\, ,
\\ \cdots &
\end{array}
\]
For any $M$ the nonabelian Volterra hierarchy has a common  first integral 
$H=\sum\limits_{k=1}^M u_k$.

In the case of the finitely generated free algebra $\fA_M$ we consider more
general inhomogeneous ideals $\fJ_M\subset\fA_M$ (than  (\ref{ideal0})) 
generated by the polynomials $\ff_{i,j}$:
\begin{eqnarray}
&&\fJ_M=\langle  \ff_{i,j}, 1\le i<j\le M, i,j \in \mathbb{N} \rangle, \quad  
\ff_{i,j}=u_i u_j-\omega_{i,j}u_j u_i-   \sigma^r_{i,j}u_r -\eta_{i,j}, 
\label{fijm} 
\end{eqnarray}
where $\omega_{i,j}\ne 0,\ \omega_{i,j},\sigma^r_{i,j},\eta_{i,j}\in\C$ and we 
use Einstein summation convention, namely $\sigma^r_{i,j}u_r$ 
denotes $\sum\limits_{r=1}^{M}\sigma^r_{i,j}u_r$.
In this section, we explore the quantisation problem for periodic reductions of 
the Volterra system and its cubic symmetry.

\subsection{Quantisation of the periodic Volterra system}\label{finitev}
Similarly to what we did in Section \ref{sec3}, we are able to prove the 
following statement for the periodic Volterra equation:
\begin{theorem}\label{propM}
A nonabelian periodical Volterra chain with  period  $M$ admits a 
$\fJ_M$--quantisation if and only if the following commutation relations hold:
  \begin{eqnarray}
   M=3:&& u_nu_{n+1}=\alpha u_{n+1} u_n+\beta (u_1+u_2+u_3)+\eta,\ \ n\in
\Z_3;\label{v3}\\
  M=4:&&  u_1u_{2}=\alpha u_{2} u_1+\beta u_{2}+\gamma u_{1}-\beta 
\gamma,\label{v4}\\\nonumber && u_1 u_3=u_3 u_1-\beta u_2+\beta u_4,\\\nonumber 
&
 &u_4u_{1}=\alpha u_{1} u_4+\beta u_{4}+\gamma u_{1}-\beta \gamma,\\\nonumber  
&&
 u_2u_{3}=\alpha u_{3} u_2+\beta u_{2}+\gamma u_{3}-\beta \gamma,\\\nonumber &
&u_2 u_4=u_4 u_2-\gamma u_3+\gamma u_1,\\\nonumber  &&
u_3u_{4}=\alpha u_{4} u_3+\beta u_{4}+\gamma u_{3}-\beta \gamma;\\ 
 M\ge 5:&&   u_n u_{n+1}=\alpha u_{n+1}u_n,\label{v5}\\&& u_n u_m=u_mu_n,\ \
|n-m|>1,\ n,m\in\Z_M.\nonumber
  \end{eqnarray}
The constants $\alpha,\beta,\gamma,\eta\in\C, \ \alpha\ne 0$ are arbitrary.
\end{theorem}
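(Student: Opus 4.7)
The approach mirrors that of Propositions~\ref{proV} and~\ref{proV2}. For each generator $\ff_{i,j}$ of $\fJ_M$, one computes $\d_{t_1}(\ff_{i,j})$ in $\fA_M$, reduces it to the normally ordered basis $u_{i_1}\cdots u_{i_n}$ with $i_1\geq\cdots\geq i_n$ using the inhomogeneous relations (\ref{fijm}), and equates all coefficients to zero. This yields a finite polynomial system in the parameters $\omega_{i,j}, \sigma^r_{i,j}, \eta_{i,j}$. In parallel, condition~(ii) must be imposed via Bergman's Diamond Lemma, which requires that every triple $u_i u_j u_k$ with $i<j<k$ reduce confluently; the resulting compatibility identities constrain the parameters quadratically and must be combined with the invariance equations.

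For $M\geq 5$, the argument of Proposition~\ref{proV} carries over almost verbatim. Whenever $2\leq j-i\leq M-2$, the shifted indices $i\pm 1, j\pm 1$ modulo $M$ remain pairwise distinct, so the cubic leading part of $\pi_{\fJ_M}(\d_{t_1}(\ff_{i,j}))$ decomposes along linearly independent ordered monomials exactly as in the infinite case. This forces $\omega_{i,j}=1$ whenever $|i-j|\geq 2$ modulo $M$ and $\omega_{n,n+1}=\alpha$ independent of $n$. The sub-cubic remainder is then affine in $\sigma$ and $\eta$ and vanishes only when these correction terms are zero; the Diamond Lemma becomes automatic since the surviving relations are homogeneous, giving (\ref{v5}).

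The cases $M=3$ and $M=4$ must be handled directly, because the period identifies cubic monomials that were distinct in the infinite setting and so a larger space of parameters becomes admissible. For $M=3$, cyclic symmetry under $\cS$ reduces the analysis essentially to a single orbit: I would write $\d_{t_1}(\ff_{1,2})$ explicitly in $\fA_3$, reduce to normal form, and solve; the unique Diamond Lemma ambiguity $u_1 u_2 u_3$ imposes one further compatibility, and the combined solution set is the three-parameter family (\ref{v3}). For $M=4$ the analysis splits naturally under the $\Z_2$-symmetry $n\mapsto n+2$, halving the number of independent relations; one solves the leading cubic equations first for the $\omega_{i,j}$, then the sub-leading equations for $\sigma^r_{i,j}, \eta_{i,j}$, and finally checks the four Diamond Lemma ambiguities $u_i u_j u_k$, $1\leq i<j<k\leq 4$, obtaining (\ref{v4}).

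The main obstacle is the combinatorics of $M=4$: each $\pi_{\fJ_4}(\d_{t_1}(\ff_{i,j}))$ expands into a substantial sum of normally ordered monomials once cubic words are reduced via the inhomogeneous relations, and the two possible reduction orders for any triple must be reconciled before coefficients can be extracted. Organising the calculation by decreasing degree (cubic, then quadratic, then linear, then constant), eliminating the $\omega$'s at the cubic level first, and exploiting the $n\mapsto n+2$ symmetry to cut the system in half makes the resulting polynomial system tractable and yields precisely the three-parameter family (\ref{v4}), together with the verification that no further constraints on $\alpha,\beta,\gamma$ arise from the Diamond Lemma.
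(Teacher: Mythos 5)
Your approach mirrors the paper's almost exactly: differentiate each $\ff_{i,j}$, reduce $\partial_{t_1}(\ff_{i,j})$ to the normally ordered basis, equate coefficients to zero, and solve the resulting algebraic system for the $\omega_{i,j}, \sigma^r_{i,j}, \eta_{i,j}$. The case distinctions you draw (small $M$ directly, $M\geq 5$ by reducing to the infinite-chain argument plus induction on the distance $j-i$) are also what the paper does; the paper in fact only writes out $M=3$ in full and says $M=4$ is ``similar.''

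Where you go beyond the paper is the explicit insistence on verifying condition~(ii) via Bergman's Diamond Lemma for the inhomogeneous ideals in the cases $M=3,4$. The paper's proof checks only $\partial_{t_1}$--stability, which establishes one implication cleanly (assuming the normal-form reduction is consistent) but does not explicitly verify that the surviving inhomogeneous relations, e.g.\ $u_nu_{n+1}=\alpha u_{n+1}u_n+\beta(u_1+u_2+u_3)+\eta$ for $M=3$, actually yield a PBW-type basis. This is a genuine gap in the paper for $M=3,4$; for $M\geq 5$ the surviving ideal is purely quadratic-homogeneous, where (ii) is automatic (the overlap $u_iu_ju_k$ resolves to $\omega_{ij}\omega_{ik}\omega_{jk}\,u_ku_ju_i$ on both sides), so the paper's argument suffices there. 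One caveat on your plan: you describe the Diamond Lemma compatibilities as constraints to ``be combined with the invariance equations,'' but for the families in (\ref{v3}) and (\ref{v4}) the overlaps resolve confluently with no extra constraint on $\alpha,\beta,\gamma,\eta$; the confluence must be \emph{verified}, but it does not further cut down the parameter space. That verification is nevertheless the one substantive step missing from the paper's own proof and worth including.
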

\begin{proof} When $M=3$, the ideal $\fI_3$ is generated by three polynomials 
$\ff_{1,2}, \ff_{1,3}$ and $\ff_{2,3}$. We differentiate them by the derivation 
$\partial_{t_1}$ associated to 
the Volterra equation (\ref{vol3}) and project it on the quotient algebra. We 
have
\begin{eqnarray*}
&&\pi_{\fI_3} \left(\partial_{t_1}(\ff_{1,2})\right) =\omega_{1,2} 
(\omega_{1,3} \omega_{2,3}-1) u_3 u_2 u_1 + (\sigma_{1,2}^2+\omega_{1,2} 
\sigma_{1,3}^2) u_2^2
+(\omega_{1,2} \omega_{1,3} \sigma_{2,3}^1-\sigma_{1,2}^1) u_1^2\\
&&\ +(\omega_{1,2} \omega_{2,3} \sigma_{1,3}^3  +\omega_{2,3} \sigma_{1,2}^3 + 
\sigma_{1,2}^3 - \sigma_{1,2}^2) u_3 u_2
+(\omega_{1,2} \omega_{1,3} \sigma_{2,3}^3  +\omega_{1,3} \sigma_{1,2}^1 - 
\omega_{1,3} \sigma_{1,2}^3 - \sigma_{1,2}^3) u_3 u_1\\
&&\ +\omega_{1,2} (\omega_{1,3} \sigma_{2,3}^2+\sigma_{1,3}^1)  u_2 u_1 
+(\omega_{1,2} \sigma_{1,3}^3 \sigma_{2,3}^3+\sigma_{1,2}^1 
\sigma_{1,3}^3-\sigma_{1,2}^3 \sigma_{1,3}^3+\sigma_{1,2}^3 \sigma_{2,3}^3) 
u_3\\
&&\ +(\omega_{1,2} \sigma_{1,3}^3 \sigma_{2,3}^2+\omega_{1,2} 
\eta_{1,3}+\sigma_{1,2}^1 \sigma_{1,3}^2-\sigma_{1,2}^3 
\sigma_{1,3}^2+\sigma_{2,3}^2 \sigma_{1,2}^3+\eta_{1,2}) u_2\\
&&\ +(\omega_{1,2} \omega_{1,3} \eta_{2,3}+\omega_{1,2} \sigma_{1,3}^3 
\sigma_{2,3}^1+\sigma_{1,2}^1 \sigma_{1,3}^1-\sigma_{1,2}^3 
\sigma_{1,3}^1+\sigma_{2,3}^1 \sigma_{1,2}^3-\eta_{1,2}) u_1\\
&&\ +((\omega_{1,2} \sigma_{1,3}^3 \eta_{2,3} +\sigma_{1,2}^1 \eta_{1,3} 
-\sigma_{1,2}^3 \eta_{1,3}+\sigma_{1,2}^3 \eta_{2,3}) .
\end{eqnarray*}
In the same way, we compute $\pi_{\fI_3} 
\left(\partial_{t_1}(\ff_{2,3})\right)$ and $\pi_{\fI_3} 
\left(\partial_{t_1}(\ff_{1,3})\right)$. If $\fJ_3$ is preserved under the 
derivation $\partial_{t_1}$, all coefficients
in these expressions should vanish, which leads to an algebraic system for 
$\omega_{i,j},\sigma^r_{i,j},\eta_{i,j}, 1\le i<j\le 3$ and $r\in\{1, 2,3\}$. 
The only nontrivial solution of this system is
\begin{eqnarray*}
\omega_{1,2}=\omega_{2,3}=\frac{1}{\omega_{1,3}};\qquad  
\sigma_{1,2}^r=\sigma_{2,3}^r=-\omega_{1,2} \sigma_{1,3}^r, \ \ r=1, 2, 
3;\qquad  \eta_{1,2}=\eta_{2,3}=-\omega_{1,2} \eta_{1,3},
\end{eqnarray*}
which is the ideal presented in the statement by setting $\omega_{1,2}=\alpha$, 
$\sigma_{1,2}^1=\beta$ and $\eta_{1,2}=\eta$.

The proof of the statement for the case when $M=4$ is similar and we do not 
present it here.  Let us now prove the last part of the statement concerning 
the case $M \geq 5$. 
The condition $M \geq 5$ implies that $u_{n+2}, u_{n+1}, u_n, u_{n-1}, u_{n-2}$ 
are algebraically independent in $\fA_M / \fI_M$ for all $n \in \mathbb{Z}$.
In the quotient algebra $\fA_M / \fI_M$,  
$\pi_{\fI_M}\left(\partial_{t_1}(\ff_{i,j})\right) = 0$ for all $i < j$ is 
equivalent to all terms with the same degree vanishing. We denote its cubic 
terms as 
$Q_{i,j}^{(3)}$. Note that
the cubic terms of $\partial_{t_1}(\ff_{i,j})$ are
\begin{eqnarray} 
&&u_{i+1}u_iu_j-u_iu_{i-1}u_j+u_iu_{j+1}u_j-u_iu_ju_{j-1}\nonumber\\
&&\qquad -\omega_{i,j}\left(u_{j+1}u_ju_i -  u_ju_{j-1}u_i +  u_ju_{i+1}u_i -  
u_ju_iu_{i-1}\right). \label{eq1}
\end{eqnarray}
It is clear that $Q_{n,n+1}^{(3)} = 0$ if and only if $\omega_{n,n+2} = 1$ for 
all $n$. We have
\begin{eqnarray*} 
&&Q_{n,n+2}^{(3)}=(\omega_{n+1,n+2}-\omega_{n,n+1}) u_{n+2} u_{n+1}u_n
 +(\omega_{n,n+3}-1)u_{n+3}u_{n+2}u_n\\&&\qquad \qquad + (1-\omega_{n-1,n+2})  
u_{n+2}u_nu_{n-1},
\end{eqnarray*}
which vanishes when
 $\omega_{n,n+3} = \omega_{n-1,n+2} = 1$ and $\omega_{n,n+1}=\omega_{n+1,n+2}$. 
We set $\omega_{n,n+1}=\alpha$. 

Let $k$ be the distance between $i$ and $j$ modulo $M$. If $k>2$, the sets 
$\{i+1,i,j \},$ $\{i,i-1,j \}$, $\{i,j+1,j\}$ and  $\{i,j,j-1\}$ are all 
distinct (elements are taken modulo $M$).
It follows from (\ref{eq1}) that, for $k >2$,
\begin{eqnarray*} 
&&Q_{i,j}^{(3)}=\omega_{i,j} ( (\omega_{i+1,j}-1) u_j 
u_{i+1}u_i-(\omega_{i-1,j}-1) u_j u_iu_{i-1})\\
&&\qquad \quad +\omega_{i,j} ( (\omega_{i,j+1}-1) u_{j+1}u_j 
u_i-(\omega_{i,j-1}-1) u_ju_{j-1} u_i) 
\end{eqnarray*}
implying that $\omega_{i+1,j}=\omega_{i,j+1}=1$ for all $i$ and $j$. This leads 
to $\omega_{i,j}=1$ for all $i$ and $j$.
So far we have proved that $\omega_{n,n+1} = \alpha$ for all $n$ and 
$\omega_{i,j} = 1$ otherwise. 

We are now ready to look at the rest terms in 
$\pi_{\fI_M}\left(\partial_{t_1}(\ff_{i,j})\right)$. The condition  
$\pi_{\fI_M}\left(\partial_{t_1}(\ff_{n,n+1})\right)= 0$ is equivalent
to the following equation (we imply sums over $r$):
\begin{eqnarray}
&&\pi_{\fI_M} \left(\sigma_{n,n+1}^r(u_{r+1}u_r-u_ru_{r-1})\right) = 
\pi_{\fI_M} 
\left(\sigma_{n,n+1}^r(u_{n+1}+u_{n+2})u_r - \sigma_{n-1,n+1}^r u_nu_r \right) 
\nonumber \\
&&\qquad \qquad+\pi_{\fI_M} \left( \sigma_{n,n+2}^r 
u_ru_{n+1}-\sigma_{n,n+1}^ru_r(u_n+u_{n-1})\right) \nonumber\\
&&\qquad \qquad+ \eta_{n,n+1}(u_{n+2}+u_{n+1}-u_n-u_{n-1})  + 
\eta_{n,n+2}u_{n+1}-\eta_{n-1,n+1}u_n. \label{eq2m}
\end{eqnarray}
In this expression, if we look at quadratic terms not containing  $u_l$, 
$n-1\le 
l\le n+2$ as a factor, we get $\sigma_{n,n+1}^r= 0$ if $r \notin \{n-1, 
n,n+1,n+2\}$.
We substitute them into (\ref{eq2m}) and get 
$\sigma_{n,n+1}^{n-1}=\sigma_{n,n+1}^{n+2}=0$ after comparing to the quadratic 
terms in its both sides.
We denote the sum over $r$ of $\sigma_{n-1,n+1}^r u_r$ by $\Sigma_n$.  The 
quadratic terms in (\ref{eq2m}) becomes
\begin{eqnarray*}
&& 0 
=\sigma_{n,n+1}^{n+1}u_{n+1}^2-\sigma_{n,n+1}^{n+1}u_{n+1}u_{n-1}+\Sigma_{n+1} 
u_{n+1}  +\sigma_{n,n+1}^nu_{n+2}u_n
-u_n\Sigma_{n} -\sigma_{n,n+1}^n u_n^2,
\end{eqnarray*}
which implies that $\Sigma_n$ is proportional to $u_n$, and further leads to 
$\sigma_{n,n+1}^{n+1}=\sigma_{n,n+1}^{n}=\Sigma_{n}=0$.
Finally from the vanishing of linear terms in \eqref{eq2m} we have 
$\eta_{n,n+1} 
= \eta_{n,n+2} = 0$. Thus we have that for all $n$, $\ff_{n,n+1} = u_nu_{n+1}- 
\alpha u_{n+1}u_n$ and $\ff_{n,n+2} = u_nu_{n+2}-u_{n+2}u_n$.

We will prove that $\ff_{n,n+m} = u_nu_{n+m}-u_{n+m}u_n$ for $m>2$ by induction.
Assume that we have for all $2 \leq l \leq k$ that $\ff_{n,n+l} = 
u_nu_{n+l}-u_{n+l}u_n$. We now compute $\partial_{t_1}(\ff_{n,n+k})$. Using the 
induction assumption we have
\begin{eqnarray*}
&&0=\pi_{\fI_M}\left(\partial_{t_1}(\ff_{n,n+k})\right)=\pi_{\fI_M}\left( 
u_nu_{n+k+1}u_{n+k} -u_{n+k+1}u_{n+k}u_n+u_{n+k}u_nu_{n-1}-u_nu_{n-1}u_{n+k} 
\right)\\
&&\quad=\sigma_{n,n+k+1}^ru_ru_{n+k}-\sigma_{n-1,n+k}^r u_nu_r 
-\eta_{n-1,n+k}u_n+\eta_{n,n+k+1}u_{n+k}.
\end{eqnarray*}
Thus the coefficient $\sigma_{n,n+k+1}^r$ should be zero whenever $r$ is not 
$n$ but also whenever $r$ is not $n+k+1$ hence the $\sigma$'s are identically 
zeros, from which it follows
that $\eta_{n,n+k+1}=0$. Hence we conclude the induction and complete the proof.
\end{proof}

Note that the proof for the case $M\geq 5$ can be directly generalized to the 
non-periodic case which means that the ideal $\fJ$ is the only stable ideal for 
the nonabelian
Volterra flow within the class of ideals where $\ff_{i,j}$ has the form 
(\ref{fijm}). This justifies our choice of the ideal $\fI$ (\ref{ideal0}) in  
the case of infinite Volterra chain (\ref{vol}). 

\subsection{Bi-quantum structure of the periodic Volterra system with period $3$}\label{bsec}

In the classical commutative case the $M=3$ periodic Volterra system 
(\ref{vol3}) is bi-Hamiltonian \cite{suris03}. There are two compatible Poisson brackets
defined by
\[
 \{u_{n+1},u_{n}\}_0=1,\qquad \{u_{n},u_{n+1}\}_1=u_{n+1} u_n,\qquad n\in\bbbz_3
\]
such that a linear combination of the Poisson brackets, called a Poisson pencil, 
\[\{\cdot,\cdot\}_{\kappa}=(1-\kappa) \{\cdot,\cdot\}_0+\kappa \{\cdot,\cdot\}_1\]
is also a Poisson bracket for any choice of $\kappa$, i.e. the bracket $\{\cdot,\cdot\}_{\kappa}$ is skew-symmetric and satisfies the Jacobi 
identity. The system admits two first integrals 
\begin{equation}\label{HH3}
 H_1=u_1+u_2+u_3,\qquad H_2=u_3u_2u_1,
\end{equation}
such that equations (\ref{vol3}) with commutative variables can be written in a bi-Hamiltonian form
\begin{equation}\label{biV}
 \partial_{t_1}(u_k)=\{u_k,H_2\}_0=\{u_k,H_1\}_1,\qquad k\in\bbbz_3.
\end{equation}
These first integrals Poisson commute with each other and moreover, $H_1$ is in the kernel of 
the first Poisson bracket (is a Casimir element), while $H_2$ is in the kernel of the second one
\[
 \{u_k,H_1\}_0=\{u_k,H_2\}_1=0,\qquad k\in \bbbz_3.
\]
and $H_\kappa=(1-\kappa)H_1-\kappa H_2$ is a Casimir element of the bracket $\{\cdot,\cdot\}_{\kappa}$.

According Proposition \ref{omegadag} and Theorem \ref{propM},
the periodic Volterra system  (\ref{vol3}) on the free algebra $\fA_3$ 
admits a $\partial_{t_1}$ and $\dagger$ stable difference ideal $\fJ_{\theta,\hbar}=\langle f^{(\theta,\hbar)}_{n}; n\in\Z_3\rangle$, generated by the  polynomials 
\[
 f^{(\theta,\hbar)}_{n}=q^{-1} u_{n}u_{n+1}-q  u_{n+1}u_n-\i \theta,\qquad n\in\Z_3,\quad q=e^{\i\hbar},
\]
depending on the two real parameters $ 0\le\hbar<\pi, \theta\!\in\R$.
Thus, we have a pencil of quantised algebras $\fA^{(\theta,\hbar)}=\fA_3\diagup 
\fJ_{\theta,\hbar}$. Algebra  $\fA^{(\theta,\hbar)}$ has a central element
\[
 \cH(\theta,\hbar)= \sin (\hbar)H_2+ \theta(2+\cos(2\hbar)) H_1,
\]
where the self-adjoint elements 
\begin{eqnarray}
 H_1&=&u_1+u_2+u_3,\\ H_2&=&  
\sum_{\sigma\in\S_3}u_{\sigma(1)}u_{\sigma(2)}u_{\sigma(3)}
\\&=&
 3(q^2+1)u_3 u_2 u_1+\i\theta  \left(
 (2q +q^{-1})(u_1+u_3)-(q +2q^{-1}) u_2
 \right)\nonumber
\end{eqnarray}
are first integrals  for the quantum Volterra system
\begin{equation}\label{qvol}
 (u_{n})_{t_1}=q(u_{n+1}u_n-u_nu_{n-1}),\qquad n\in\Z_3.
\end{equation}
Moreover, system (\ref{qvol}) in algebra  $\fA^{(\theta,\hbar)}$ can be represented in the Heisenberg form
\[
 (u_{n})_{t_1}=\dfrac{\i}{2\sin\hbar }[H_1,u_n]=-\dfrac{\i}{2
\theta(2+\cos(2\hbar)) }[H_2,u_n].
\]

With two quotient algebras $\fA^{(\theta,0)}$ and $\fA^{(0,\hbar)}$ we associate 
the following {\em bi-quantum} structure (a quantum deformation of the 
bi-Hamiltonian structure (\ref{biV})) as follows:
\[
 \begin{array}{rll}
  \mbox{choice of parameters}\quad& \theta\ne0,\ \hbar=0,\ q=1 &\theta=0,\ 0<\hbar<\pi,\ q=e^{\i\hbar}\\&&\\
 \mbox{stable ideal in }\fA_3\quad & \fJ_{\theta,0}\quad &\fJ_{0,\hbar}\\&&\\
 \mbox{quantised algebra }\quad & \fA^{(\theta,0)}=\fA_3\diagup\fJ_{\theta,0}\quad & \fA^{(0,\hbar)}=\fA_3\diagup\fJ_{0,\hbar}\\&&\\
  \mbox{self-adjoint central element} \quad & H_1=u_1+u_2+u_3\quad 
&H_2=3(1+q^2)u_3u_2u_1\\&&\\
   \mbox{the Heisenberg form of (\ref{qvol})} \quad & (u_n)_{t_1}=-\dfrac{\i}{6
\theta}[H_2,u_n]\quad &(u_n)_{t_1}=\dfrac{\i}{2\sin\hbar }[H_1,u_n]\\&&\\
 \end{array}
\]

More work is required to study the quantum periodic Volterra systems with $M\ge 4$
(\ref{v4}), (\ref{v5}) as we did for $M=3$ above, which is not included in this paper.

\subsection{Quantisation of periodic reductions of the cubic symmetry 
}\label{m4sec}
In this section, we study the quantisation problem for periodical reductions 
of the cubic symmetry (\ref{voltf2}). In the infinite case this system admits 
two distinct quantisations (Proposition \ref{proV2}).

We claim that:
\begin{enumerate}
 \item In the case $M=3$ the quantisation ideal (\ref{fijm}) is generated by 
relations (\ref{v3}).
\item For odd $M\ge 5$  the quantisation ideal (\ref{fijm}) is generated by 
relations (\ref{v5}).
\item For even $M\ge 6$ there are two distinct quantisations corresponding to 
the ideal $\fI_a$ generated by the relations (\ref{v5}) and $\fI_b$ generated 
by relations 
\begin{equation}\label{ibm}
 u_nu_{n+1}=(-1)^n \omega u_{n+1}u_n,\qquad  
u_nu_m+u_mu_n=0\ \ 
\mbox{if}\ \ |n-m|\geqslant 2,\quad n,m \in\bbbz_M\, .
\end{equation}

\end{enumerate}
  
The case $M=4$ is exceptional, it admits three distinct quantisation ideals. 
One quantisation ideal is generated by by commutation relations (\ref{v4}) and 
the other two are generated by homogeneous quadratic commutation relations.
The periodical reduction of the system (\ref{voltf2}) with the period $M=4$ can 
be written in the form (Here we also add the constant $q^2$ following (\ref{qvolh})):
\begin{equation}\label{voltf24}
 \partial_{t_2}u_n=q^2 \left(u_{n+2} u_{n+1} u_n + u_{n+1}^2 u_n+u_{n+1} u_n^2 - u_n^2
u_{n+3}-u_n u_{n+3} u_{n+2}-u_n u_{n+3}^2\right),  q=e^{\i \hbar}
\end{equation}
where the lower index $n\in \bbbz_4$.
In the free algebra $\fA_4=\C\langle u_1,\ldots u_4\rangle$ we consider the 
ideal $\fJ$ 
\begin{equation}\label{ideal04}
 \fJ=\langle \ff_{i,j}\,;\, 1\leqslant i<j \leqslant 4\rangle,\qquad 
 \, \ff_{i,j}=u_i u_j-\omega_{i,j}u_j u_i,
\end{equation}
generated by six homogeneous quadratic polynomials $\ff_{i,j}$, which depend on 
six nonzero constants $\omega_{i,j}$.
The ideal $\fJ$ is $\partial_{t_2}$--stable if and only if 
$\partial_{t_2}(\ff_{i,j})\in\fJ,\  1\leqslant i<j \leqslant 4$. This is 
equivalent to the following system of equations on the parameters $\omega_{i,j}$
\begin{equation}\label{eqss}
 \omega _{2,4}^2=1 ,\ \ \omega_{1,4}^2 \omega _{3,4}^2=1 ,\ \ 
\omega_{2,3}=\omega _{2,4} \omega _{3,4} ,\ \ \omega_{1,2}=\omega _{1,4} \omega 
_{2,4} \omega _{3,4}^2 ,\ \ \omega_{1,3}=\omega _{1,4} \omega _{3,4}.
\end{equation}
Solving the above system of equations, we obtain the following statement:
\begin{theorem}\label{propS} A nonabelian system (\ref{voltf24}) admits a 
$\fJ$--quantisation of the form (\ref{ideal04}) if and only if the six  
constants $\omega_{i,j}$ take values as in one of four cases:
  \begin{eqnarray*}
 \begin{array}{lrrrrrr}
 &\omega _{1,2}&\omega _{1,3}&\omega _{2,3}&\omega _{1,4}&\omega _{2,4}&\omega 
_{3,4}\\
{\rm (a):}\quad &    \omega ,\ &   1,\ &   \omega ,\ &   \omega^{-1} ,\ &   1 
,\ &   \omega;\\
{\rm (b):}\quad &    \omega ,\ &   -1,\ &   -\omega ,\ &   -\omega^{-1} ,\ &   
-1,\ &   \omega;\\
{\rm (c):}\quad &    -\omega ,\ &   -1,\ &   \omega ,\ &   -\omega^{-1} ,\ &   
1,\ &   \omega;\\
{\rm (d):}\quad &    -\omega ,\ &   1,\ &   -\omega ,\ &   \omega^{-1} ,\ &   
-1,\ &   \omega ,
 \end{array}
  \end{eqnarray*}
and $\omega=q^2=e^{2\i\hbar}$, where $\hbar\in\R$.
Moreover, in each of the above four cases the system (\ref{voltf24}) is a 
super-integrable quantum system.
\end{theorem}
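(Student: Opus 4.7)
The plan is to prove the classification by the same approach used in Propositions \ref{proV} and \ref{proV2}, then construct enough conserved quantities to witness super-integrability in each case.

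First, for each of the six pairs $(i,j)$ with $1 \leq i < j \leq 4$, I would apply the derivation $\d_{t_2}$ from (\ref{voltf24}) to $\ff_{i,j} = u_i u_j - \omega_{i,j} u_j u_i$, then project the resulting cubic expression into $\fA_4/\fJ$ by repeated use of the rewriting rule $u_k u_l \mapsto \omega_{k,l} u_l u_k$ for $k<l$. Since the normally ordered cubic monomials $u_{i_1}u_{i_2}u_{i_3}$ with $i_1\geq i_2\geq i_3$ form an additive basis of the quotient, $\pi_{\fJ}(\d_{t_2}(\ff_{i,j})) = 0$ is equivalent to the vanishing of every coefficient. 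Collecting these equations across the six pairs reduces, after cancellations and elimination of redundancies, to the system (\ref{eqss}). The bookkeeping is substantial but essentially mechanical, paralleling the proof of Proposition \ref{proV2}; the main new feature compared to the infinite case is that the indices are to be read modulo $4$, which is why each $\omega_{i,j}$ now enters independently rather than being forced to $1$ by algebraic independence of widely separated generators.

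Second, I would solve (\ref{eqss}). The first two equations yield $\omega_{2,4} = \epsilon_1 \in \{\pm 1\}$ and $\omega_{1,4}\omega_{3,4} = \epsilon_2 \in \{\pm 1\}$. Setting $\omega_{3,4} = \omega$ then fixes $\omega_{1,4} = \epsilon_2 \omega^{-1}$, $\omega_{2,3} = \epsilon_1 \omega$, $\omega_{1,3} = \epsilon_2$, $\omega_{1,2} = \epsilon_1\epsilon_2 \omega$. The four sign combinations $(\epsilon_1,\epsilon_2) \in \{\pm 1\}^2$ yield exactly cases (a)--(d) in the statement. Finally, the parametrisation $\omega = e^{2\i\hbar}$ with $\hbar \in \R$ follows from the Hermiticity requirement of Proposition \ref{omegadag} applied to each of the four ideals, which forces $\omega^\dagger = \omega^{-1}$.

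For the super-integrability claim, the strategy is to exhibit, in each of the four cases, at least three algebraically independent self-adjoint elements of $\fA_4/\fJ$ that are $\d_{t_2}$--invariant and pairwise commute, which is strictly more than the number required for Liouville-type integrability on four generators; central elements count towards this total. Natural candidates to test are the linear sum $u_1+u_2+u_3+u_4$, a suitably sign-twisted quadratic such as $\sum_k (u_{k+1}u_k + u_k u_{k+1})$, and a top-degree symmetrisation like $u_4 u_3 u_2 u_1$ together with its permutations, all taken modulo $\fJ$. The main obstacle is that the sign twists in cases (b)--(d) alter both the admissible invariants and their commutation properties, so the construction must be redone for each case: one has to identify the correct signed combinations, verify $\d_{t_2}$--invariance in the quotient by direct computation, check mutual commutativity, and confirm algebraic independence modulo $\fJ$. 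The expectation, borne out by the analogous classical structure and by the bi-quantum example of Section \ref{bsec}, is that some of these invariants will in fact turn out to be central in $\fA_4/\fJ$, which simultaneously establishes super-integrability and provides the Heisenberg form of (\ref{voltf24}) in each case.
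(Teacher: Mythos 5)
Your derivation of the classification is correct and matches the paper's route: apply $\d_{t_2}$ to each $\ff_{i,j}$, project, and reduce the resulting coefficient conditions to the system (\ref{eqss}); your solution $\omega_{2,4}=\epsilon_1$, $\omega_{1,4}\omega_{3,4}=\epsilon_2$, $\omega_{3,4}=\omega$ and the subsequent substitutions reproduce exactly the four cases (a)--(d), and the Hermiticity argument via Proposition \ref{omegadag} forcing $\omega = e^{2\i\hbar}$ is also the one the paper uses.

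The gap is in the super-integrability part, which in your proposal remains a plan rather than a proof, and moreover points at the wrong candidates. You suggest testing $H=u_1+u_2+u_3+u_4$, the twisted quadratic $\sum_k(u_{k+1}u_k+u_ku_{k+1})$, and the top-degree product $u_4u_3u_2u_1$. The integrals the paper actually uses are simpler and of a different shape: the quadratic \emph{bilinear} elements $u_3u_1$ and $u_4u_2$ (or their squares $u_3^2u_1^2$, $u_4^2u_2^2$ in the sign-twisted cases), which turn out to be \emph{central} in the quotient algebra, not merely $\d_{t_2}$-invariant. Centrality is what makes their commutativity and invariance immediate, bypassing the case-by-case verification you were anticipating. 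The second missing observation is the key structural fact that in \emph{all four} cases the flow admits the Heisenberg form $\partial_{t_2}(u_n)=\dfrac{\i}{2\sin(2\hbar)}[H^2,u_n]$, so $H^2$ (not $H$, nor the $H_2$-like quadratic you list) is the Hamiltonian; this matters because in cases (b)--(d) the elements $u_3u_1$, $u_4u_2$ may only \emph{anti}-commute with $H$, whereas they commute with $H^2$. Without these two ingredients --- the identification of $u_3u_1$, $u_4u_2$ (or their squares) as central elements, and the uniform Heisenberg form driven by $H^2$ --- the ``redo the construction for each case'' step you defer would not go through along the lines you sketch.
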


The first and second solutions correspond to the cases (a) and (b) in the 
Proposition \ref{proV2}. Solutions (c) and (d) are new, they  are related by 
the 
automorphism $\cS$ of $\fA_4$ and thus equivalent. The commutation relations in 
the case (a) can be extended by non-homogeneous terms (\ref{v4}), while 
commutation relations (b), (c) and (d) do not admit non-homogeneous extensions.

\begin{proof} First note that the four cases listed in the statement correspond 
to the four solutions of the system (\ref{eqss}). It is obvious that in each 
case the ideal is $\dagger$--stable if and only if  $\omega^\dagger=\omega^{-1}$.
Thus we can set $\omega=e^{2\i\hbar},\ \hbar\in\R$.

We now prove the super-integrability of the obtained system in each case. Let
$$
H =u_1+u_2+u_3+u_4,
$$
which is a first integral for the quantum system  (\ref{voltf24}) in all four 
cases. Moreover,
in all four cases the quantum system  (\ref{voltf24}) for self-adjoint variables 
$u_n$  can be written in the same 
Heisenberg form (\ref{qvolb}):
\begin{equation}\label{heisym2}
\partial_{t_2}(u_n)=\dfrac{\i}{2\sin(2\hbar)}[H^2,u_n],\qquad n\in\Z_4.
\end{equation}
In the case (a), corresponding to the quantisation of the Volterra system, the 
quantisation ideal $\fI_a$ is generated by the commutation relations between 
the variables $u_k$ as follows:
\begin{equation}
 \begin{array}{lll}
 u_1u_2=\omega u_2u_1,\quad&u_1u_3=u_3u_1,\quad &u_4u_1=\omega u_1u_4,\\
 u_2u_3=\omega u_3u_2,\quad&u_2u_4=u_4u_2,\quad &u_3u_4=\omega u_4u_3.
 \end{array}
\end{equation}
The algebra $\fA_4\diagup \fJ_a$ has two central elements
\[
\cH_1=u_3u_1,\qquad \cH_2=u_4u_2 .
\]
Since the central elements of the algebra commute with the Hamiltonian, 
they are first integrals of the system (\ref{heisym2}). The system of four 
equations (\ref{voltf24}) admits three commuting first integrals and therefore 
it is super--integrable.

In the case (b) the quantisation ideal $\fI_b$ is generated by the commutation 
relations between the variables $u_k$ as follows
\begin{equation}
 \begin{array}{lll}
 u_1u_2=\omega u_2u_1,\quad&u_1u_3=-u_3u_1,\quad &u_4u_1=-\omega u_1u_4,\\
 u_2u_3=-\omega u_3u_2,\quad&u_2u_4=-u_4u_2,\quad &u_3u_4=\omega u_4u_3.
 \end{array}
\end{equation}

The dynamical system (\ref{voltf24}) on $\fA_4\diagup \fI_b$ admits two first 
integrals
\[ H_1=u_3 u_1 ,\qquad  H_2=u_4 u_2 \,.
\]
Elements $H_1, H_2$ anti-commute with $H$, but  $H^2,\ H_1$ and $ H_2$ commute 
with each other. Thus the system (\ref{voltf24}) is super--integrable on 
$\fA_4\diagup \fI_b$.
Taking $H_1$ and $H_2$ as Hamiltonians we can find two commuting symmetries of 
the quantum equation (\ref{voltf24}) on $\fA_4\diagup \fI_b$, i.e.,
\[
 \partial_{\xi}(u_n)=[H_1,u_n]=2u_3u_1u_n,\qquad 
\partial_{\eta}(u_n)=[H_2,u_n]=2u_4u_2u_n.
\]
The algebra $\fA_4\diagup \fI_b$ has three central elements
\[
 \cH=u_4u_3u_2u_1,\qquad \cH_1=u_3^2u_1^2,\qquad \cH_2=u_4^2u_2^2.
\]

In the case (c), which is new, the quantisation ideal $\fI_c$ is generated by 
the commutation relations between the variables $u_k$ as follows:
\begin{equation}
 \begin{array}{lll}
 u_1u_2=-\omega u_2u_1,\quad&u_1u_3=-u_3u_1,\quad &u_4u_1=-\omega u_1u_4,\\
 u_2u_3=\omega u_3u_2,\quad&u_2u_4=u_4u_2,\quad &u_3u_4=\omega u_4u_3.
 \end{array}
\end{equation}
The dynamical system (\ref{voltf24}) on $\fA_4\diagup \fI_c$ admits the 
first integral $H_1=u_3 u_1$ commuting with $H^2$.
The algebra $\fA_4\diagup \fI_c$ has two central elements
\[
\cH_1=u_3^2u_1^2,\qquad \cH_2=u_4 u_2.
\]
The first integrals $H^2,H_1$ and $\cH_2$ are obviously independent and 
therefore system (\ref{voltf24}) on $\fA_4\diagup \fI_c$ is super--integrable.

The last case (d) can be obtained from the case (c) by the cyclic permutation 
of the variables $\{u_1,u_2,u_3,u_4\}\mapsto \{u_2,u_3,u_4,u_1\}$. 
\if
By now, we complete the proof. For each case, we explicitly present its first 
integrals and Central elements.
\fi
\end{proof}

In the case $M=5$ the only $\partial_{t_2}$--stable ideal is defined by 
(\ref{v5}). The system admits three commuting first integrals
\[
 H_1=\sum\limits_{k\in\bbbz_5}u_k,\quad 
H_2=\sum\limits_{k\in\bbbz_5}(u_k^2+u_k u_{k+1}+u_{k+1}u_k),\quad \cH=u_5 u_4 
u_3 u_2 u_1,
\]
where $\cH$ is a central element of the algebra. The Heisenberg equations 
corresponding to $H_1$ and $H_2$ results in the periodic Volterra system and 
its cubic symmetry respectively.

\section{Quantisation of the nonabelian Volterra Hierarchy}\label{secproof}
In this section, we extend Proposition \ref{proV} and Proposition \ref{proV2} 
in Section \ref{sec2} to the whole nonabelian Volterra hierarchy. We show that 
the quantum ideal $\fI_a$ (\ref{idi}) is invariant with respect to 
 every member of the hierarchy  (\ref{volh}) (Theorem \ref{main1}) and that the 
quantum ideal $\fI_b$ (\ref{idj}) is invariant with respect to
 every even member of the nonabelian Volterra hierarchy 
 $$\d_{t_{2\ell}}(u)=\cS(X^{(2\ell)})u-u\cS^{-1}(X^{(2\ell)}),\qquad \ell\in\N,$$
that is, odd degree symmetries of the nonabelian  Volterra equation (Theorem 
\ref{main2}).

We are going to use the explicit expressions given by (\ref{volh}) to prove 
these statements.
First we introduce some notations and definitions inspired by the monomials 
appearing in $X^{(l)}$. 
 
 Let $\a=(\a_1, \a_2, \cdots, \a_k) \in \bbbz^k$ be a $k$-component vector. For 
each $\a\in \bbbz^k$, we define the $k$-degree monomial $u_{\a} =u_{\a_1} 
u_{\a_2}\cdots u_{\a_k}$. We denote the degree of $\a$ by  $|\a|=k$. 
Conventionally, we write $(\a_1+1,\a_2+1, \cdots, \a_k+1)$ as $\a+1$. Thus we 
have
 $\cS^i u_{\a}=u_{\a+i}$ for $i\in\bbbz$. The number of variable $u_i$ in 
monomial $u_{\a}$ is denoted by $\nu(\a,i)$.
 Similarly, we denote by $\nu(\a,\geq i)$ the number of $k \geq i$ such that 
$u_k$ appears in $u_{\a}$, counted with multiplicities.
 We say that two monomials $u_{\a}$ and $u_{\beta}$ are \textit{similar} 
written as ${\a} \sim {\beta}$ if $\nu(\a,i)=\nu(\beta,i)$ for all $i\in\bbbz$.
 
 We introduce two sets of distinguished monomials, for $k \geq 1$
 \begin{eqnarray*}
  && \cA^k=\left\{\a\in \bbbz^k\big{|} 1-k\leq \a_{k} \leq 0,\ k-1\geq\a_1 \geq 
0,\ \a_{i+1}+1 \geq \a_i,\ i=1,...,k-1\right\};\\
  &&\cZ^k_{\geq}=\left\{\a\in \bbbz^k\big{|} \a_{i+1}+1 \geq \a_i\geq 
\a_{i+1},\ i=1,...,k-1\right\}.
 \end{eqnarray*}
 We say that a $k$-degree monomial $u_{\a}$ is admissible if $\alpha\in \cA^k$ 
and is nonincreasing if $\a\in \cZ^k_{\geq}$.

Using these notations, we can simply write the expression $X^{(k)}$ given by 
(\ref{xl}) as
\begin{equation}\label{xk}
 X^{(k)}=\sum_{\a\in \cA^k} u_{\a} .
\end{equation}
Given an ideal $\fI$, either $\fI_a$ or $\fI_b$, the canonical projection 
$\pi_{\fI}: \fA \rightarrow \fA/\fI $ acts on $X^{(k)}$ as follows:
\begin{eqnarray*}
 \pi_{\fI} (X^{(k)})=\sum_{\a\in \cA^k\cap \cZ^k_{\geq}} P^{\fI}_{\a}(\omega) 
u_{\a},
\end{eqnarray*}
where $P^{\fI}_{\a}(\omega)$ is the unique polynomial in $\mathbb{Z}[\omega]$ 
such that for $\a\in  \cA^k\cap \cZ^k_{\geq}$, 
 \begin{equation}\label{pa}
 P^{\fI}_{\a}(\omega)u_{\a} =\pi_{\fI}  \left( \sum_{\beta\in \cA^k, \beta \sim 
\a} u_{\beta}\right).
 \end{equation}
We often write it as $P_{\a}(\omega)$ if there is no ambiguity.

We say that two polynomials $f,g\in\fA$ are $\fI$--equivalent denoted by  $f\stackrel{\fI}{\simeq}g$ if $f-g\in\fI$. Polynomials $f$ and $g$ are $\fI$  equivalent if and only if $\pi_{\fI}(f)=\pi_{\fI}(g)$.

\subsection{Quantisation of the Volterra hierarchy}\label{proofm1}
In this section, we will prove that the ideal $\fI_a$ defined by \eqref{idi} is 
preserved by the symmetry flows  \eqref{volh}, 
for all $\ell\in \bbbn$.

To do so, we need to study the polynomials $P^{\fI_a}_{\a}(\omega)$. Here we 
focus on the quantum ideal $\fI_a$. For the sake of simplicity we write the 
polynomials as $P_{\a}(\omega)$, which are in 
$\mathbb{Z}_{+}[\omega]$.
For example, we have 
\begin{eqnarray*}
&& \pI(X^{(1)})=X^{(1)}=u; \qquad \pI(X^{(2)})=X^{(2)}=u_1 u+u^2+u u_{-1};\\
&&\pI(X^{(3)})=u_2 u_1 u+u_1^2 u +(1+ \om) u_1 u^2+u^3+(1+ \om) u^2 u_{-1} +u_1 
u u_{-1}+uu_{-1}^2+u u_{-1} u_{-2}.
\end{eqnarray*}
This defines the polynomials $P_{\a}(\om)$, e.g., $P_{(0,0,-1)}(\om)=1+\om$. In 
general, we prove the following identity:
\begin{Pro}\label{pro1} Let $\a\in \cZ^k_{\geq}$. Then, we have
\begin{equation} \label{eqp}
P_{\a}(\om)+ \om^{\nu(\a,0)}P_{\a-1}(\om) =P_{\a-1}(\om)+ \om^{\nu(\a,1)} 
P_{\a}(\om).
\end{equation}
\end{Pro}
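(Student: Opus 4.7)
By applying the commutation relations (\ref{comm1}) to reorder $u_\beta$ into $u_\a$ for $\beta \in \cA^k$ with $\beta \sim \a \in \cZ^k_\geq$, one finds $u_\beta = \omega^{w(\beta)} u_\a$ in $\fA_{\fI_a}$, where
$$
w(\beta) = \#\{(i,j) : 1 \leq i < j \leq k, \ \beta_j - \beta_i = 1\}
$$
counts only pairs whose entries differ by exactly one, since $u_n u_m = u_m u_n$ for $|n-m| \geq 2$. Setting $A_\a := \{\beta \in \cA^k : \beta \sim \a\}$, this gives $P_\a(\omega) = \sum_{\beta \in A_\a} \omega^{w(\beta)}$.

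\textbf{Step 2 (Reformulation and bijection).} Rearranging (\ref{eqp}) yields
$$
P_\a(\omega)\bigl(1 - \omega^{\nu(\a,1)}\bigr) = P_{\a-1}(\omega)\bigl(1 - \omega^{\nu(\a,0)}\bigr),
$$
or, after dividing by $(1-\omega)$, the cleaner $q$-integer identity $[\nu(\a,1)]_\omega P_\a(\omega) = [\nu(\a,0)]_\omega P_{\a-1}(\omega)$ with $[n]_\omega := 1 + \omega + \cdots + \omega^{n-1}$. The trivial cases $\nu(\a,0) = 0$ or $\nu(\a,1) = 0$ are handled directly: the descent-$1$ admissibility condition forces every $\beta \in \cA^k$ (with $\beta_1 \geq 0$ and $\beta_k \leq 0$) to visit $0$, so $P_\a = 0$ when $\a$ contains no $0$, and $P_{\a-1} = 0$ when $\a$ contains no $1$; in either situation both sides of the identity vanish. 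For the generic case I would construct a weight-preserving bijection between $A_\a \times \{0, \ldots, \nu(\a,1)-1\}$ and $A_{\a-1} \times \{0, \ldots, \nu(\a,0)-1\}$, matching pairs $(\beta, r) \leftrightarrow (\gamma, s)$ so that $w(\beta) + r = w(\gamma) + s$. The core of this bijection is the shift $\beta \mapsto \beta - 1$, which lands in $A_{\a-1}$ precisely when $\beta_1 \geq 1$ and $\beta_k \geq 2 - k$; the remaining boundary permutations (those with $\beta_1 = 0$ or $\beta_k = 1 - k$) must be matched against rank indices that absorb the resulting weight deficit.

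\textbf{Main obstacle.} The heart of the argument is the boundary analysis: explicitly describing the matching between those exceptional admissible permutations and the surplus rank indices on the opposite side, and verifying weight preservation. This combinatorial bookkeeping is what an inductive approach on $k$ or on $\nu(\a,0) + \nu(\a,1)$, based on a recursive decomposition of $X^{(k)}$ by the position of its leftmost $u_0$, would also have to confront.
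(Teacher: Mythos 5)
Your Step 1 (the weight formula $P_\a(\omega) = \sum_{\beta\in A_\a}\omega^{w(\beta)}$ with $w(\beta)$ counting adjacent inversions, i.e.\ pairs $i<j$ with $\beta_j-\beta_i=1$) is correct, and your Step 2 reduction to the $q$-integer identity $[\nu(\a,1)]_\omega\, P_\a = [\nu(\a,0)]_\omega\, P_{\a-1}$ is algebraically sound, as is your dispatch of the degenerate cases $\nu(\a,0)=0$ or $\nu(\a,1)=0$. But the proof is not finished: the crucial step --- constructing a weight-preserving bijection between $A_\a\times\{0,\dots,\nu(\a,1)-1\}$ and $A_{\a-1}\times\{0,\dots,\nu(\a,0)-1\}$ --- is only named, not carried out. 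You observe that $\beta\mapsto\beta-1$ handles the ``interior'' and then say the boundary cases ``must be matched against rank indices that absorb the resulting weight deficit,'' and in your own closing paragraph you flag exactly this boundary matching as the unresolved obstacle. That obstacle is the substance of the proposition; without it nothing has been proved.

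For contrast, the paper avoids the explicit matching problem entirely by a double-counting device. It introduces the larger set $E_\a=\{\beta\sim\a:\ \beta_1\geq 0,\ \beta_k\leq 1,\ \beta_i\leq\beta_{i+1}+1\}$ of ``almost admissible'' arrangements and splits it two ways as disjoint unions $E_\a = A_\a\sqcup C_\a = B_\a\sqcup D_\a$, where $A_\a$ ($\beta_k\leq 0$) projects to $P_\a u_\a$, $B_\a$ ($\beta_1\geq 1$) is $\cS$ of the admissible set for $\a-1$ and projects to $P_{\a-1}u_\a$, and bijections $D_\a\to A_\a$ and $C_\a\to B_\a$ (pushing positive entries leftward past the zeros, respectively the transposed move) pick up the factors $\omega^{\nu(\a,1)}$ and $\omega^{\nu(\a,0)}$. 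Equating the two decompositions of $\pi_{\fI_a}\!\left(\sum_{\beta\in E_\a}u_\beta\right)$ gives the four-term identity (\ref{eqp}) directly, with no division by $1-\omega$ and no bookkeeping of rank indices. If you want to complete your route, the sorting maps $D_\a\to A_\a$ and $C_\a\to B_\a$ from the paper are essentially the missing ingredient: each ``boundary permutation'' on your side corresponds to a definite position among the zeros (resp.\ ones), and tracking that position is exactly what your rank index $r$ or $s$ is doing. Until that matching is written out and its weight-preservation checked, the argument has a genuine gap.
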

\begin{proof}
First note that this formula holds whenever $\a\notin\cA^k$ or 
$\a-1\notin\cA^k$ since for $\a\in \cZ^k_{\geq}$, $\a\in \cA^k$ if and only if 
$\nu(\a,0)\neq 0$. If $\a\notin\cA^k$, then  $P_{\a}(\om)=0$ and $\nu(\a,0)=0$. 
Similarly,
if $\a-1\notin\cA^k$, then  $P_{\a-1}(\om)=0$ and $\nu(\a,1)=0$. Thus the 
formula holds in both cases.

We now assume that $\a\in\cA^k$ and $\a-1\in\cA^k$.
Consider the set $E_{\a}$ defined as
\begin{eqnarray*}
 E_{\a}=\left\{\beta\in \bbbz^k\big{|} \b\sim \a,\ \b_1\geq 0,\ \b_k\leq 1, \ 
\b_{i} \leq \b_{i+1}+1, i=1,..., k-1 \right\}.
\end{eqnarray*}
We split $E_{\a}$ in two different ways by defining four subsets of $E_{\a}$:
\begin{equation*}
\begin{split}
A_{\a}&=\{\b \in E_{\a}  \, |\,  \b_k \leq 0 \}, \qquad B_{\a}=\{\b \in E_{\a}  
\, |\,  \b_1 \geq 1 \},\\
C_{\a}&=\{\b \in E_{\a} \, |\,  \b_k =1 \}, \qquad D_{\a}=\{\b \in E_{\a}  \, 
|\,  \b_1=0 \}.
\end{split}
\end{equation*}
It is clear that $E_{\a}=A_{\a}\cup C_{\a}=B_{\a} \cup D_{\a}$, $A_{\a} \cap 
C_{\a}=\emptyset$  and $B_{\a} \cap D_{\a}=\emptyset$. We now have
\begin{eqnarray}\label{pie}
 \pI\left(\sum_{\beta\in E_{\a}} u_{\b} \right)=\pI\left(\sum_{\beta\in A_{\a}} 
u_{\b} \right)+\pI\left(\sum_{\beta\in C_{\a}} u_{\b} \right)
 =\pI\left(\sum_{\beta\in B_{\a}} u_{\b} \right)+\pI\left(\sum_{\beta\in 
D_{\a}} u_{\b} \right) .
\end{eqnarray}
We are going to evaluate each term in it. Note that $A_{\a}= \cA^k$ is the set 
of all elements equivalent to $\a$. Thus by definition (\ref{pa}), we have
\begin{equation}\label{pia}
\pI\left(\sum_{\beta\in A_{\a}} u_{\b} \right)=P_{\a}(\om) u_{\a}.
\end{equation}
For any $\beta\in B_{\a}$, we have $\beta-1\in \cA^k$ and
$\b-1 \sim \a-1$  and thus 
\begin{equation}\label{pib}
 \pI\left(\sum_{\beta\in B_{\a}} u_{\b} \right)=\cS \pI\left(\sum_{\beta-1\in 
\cA^k,\beta-1\sim \a-1} u_{\b-1} \right)
=\cS\left(P_{\a-1}(\om) u_{\a-1}\right)=P_{\a-1}(\om) u_{\a}.
\end{equation}
Let $\b\in D_{\a}$. There is $\b_i>0$ for some $0<i<k$ since $\b\sim \a$ and 
$\a-1\in \cA^{k}$. 
Assume that there are $0<m\leq k$ positive components at positions $i_1 \leq 
i_2\leq \cdots \leq i_m$ in $\b$. Starting from $i_1$, we
find the first zero entry on the left of $i_1$, that is, $l_1=\max_{1\leq j\leq 
i_1-1}\left\{\b_j=0\right\}$ and move the components from $l_1$ to $i_1-1$ to 
the right of $i_1$ and obtain $\b^1$ with 
$$\b^1_j=\b_{j}, 1\leq j\leq l_1-1;\ \b^1_{l_1}=\b_{i_1};\ \b^1_j=\b_{j-1}, 
l_1+1\leq j\leq i_1;\ \b^1_j=\b_{j}, i_1+1\leq j \leq k.$$
For $\b^1$, we find the first zero entry on the left of $i_2$, that is, 
$l_2=\max_{l_1+1\leq j\leq i_2-1}\left\{\b^1_j=0\right\}$ and move the 
components from $l_2$ to $i_2-1$ to the right of $i_2$ and obtain $\b^2$. We 
repeat this procedure for all positive components in $\b$.
Thus we obtain a $k$-component vector $\gamma=\b^l\in A_{\a}$. This leads to
\begin{equation}\label{pid}
\pI\left(\sum_{\beta\in D_{\a}} u_{\b} \right)=\pI\left(\sum_{\gamma\in A_{\a}} 
\om^{\nu(\b,1)} u_{\gamma} \right)
=\om^{\nu(\a,1)} \pI\left(\sum_{\gamma\in A_{\a}} u_{\gamma} \right)= 
\om^{\nu(\a,1)} P_{\a}(\om) u_{\a}.
\end{equation}
Similarly, let $\b\in C_{\a}$. There is $\b_i\leq 0$ for some $0<i<k$ since 
$\b\sim \a$ and $\a\in \cA^k$.
For all nonpositive components,
we move the first component being $1$ on its right to its left,
taking with all the components of $\b$ on its left that are larger than $1$. 
Thus we obtain a $k$-component vector
$\gamma\in B_{\a}$. This leads to
\begin{equation}\label{pic}
\pI\left(\sum_{\beta\in C_{\a}} u_{\b} \right)=\pI\left(\sum_{\gamma\in B_{\a}} 
\om^{\nu(\b,0)} u_{\gamma} \right)
=\om^{\nu(\a,0)} \pI\left(\sum_{\gamma\in B_{\a}} u_{\gamma} \right)= 
\om^{\nu(\a,0)} P_{\a-1}(\om) u_{\a}.
\end{equation}
We substitute (\ref{pia})-(\ref{pic}) into (\ref{pie}) and thus we obtain the 
required identity (\ref{eqp}).
\end{proof}
In the same way as the proof of Proposition \ref{pro1}, we are able to show that
\begin{equation}\label{eqpm}
 P_{\a+m}(\om)+ \om^{\nu(\a,-m)}P_{\a+m-1}(\om) =P_{\a+m-1}(\om)+ 
\om^{\nu(\a,1-m)} P_{\a+m}(\om) \quad \mbox{for all $m \in \mathbb{Z}$.}
\end{equation}
This leads to the following statement:
\begin{Cor}\label{cor1}
Let $\a\in \cZ^k_{\geq}$. There exists a non zero rational function 
$R_{\a}(\om) \in \mathbb{Q}(\om)$ such that
\begin{equation}\label{pm0}
P_{\a+m}( \om)=R_{\a}(\om)(1-\om^{\nu(\a, -m)}) \quad \mbox{for all $m \in 
\mathbb{Z}$.}
\end{equation}
\end{Cor}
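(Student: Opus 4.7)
The plan is to interpret the recurrence (\ref{eqpm}) as saying that the ratio $P_{\a+m}(\om)/(1-\om^{\nu(\a,-m)})$ is independent of $m$, and to take this common value as the definition of $R_{\a}(\om)$.

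First I would rewrite (\ref{eqpm}) in the factored form
\[
P_{\a+m}(\om)\bigl(1-\om^{\nu(\a,1-m)}\bigr) = P_{\a+m-1}(\om)\bigl(1-\om^{\nu(\a,-m)}\bigr).
\]
Using $\nu(\a,1-m)=\nu(\a,-(m-1))$, this rearranges formally in $\mathbb{Q}(\om)$ to
\[
\frac{P_{\a+m}(\om)}{1-\om^{\nu(\a,-m)}}=\frac{P_{\a+m-1}(\om)}{1-\om^{\nu(\a,-(m-1))}},
\]
whenever both denominators are nonzero.

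Next I would exploit the hypothesis $\a\in\cZ^k_{\geq}$: since consecutive entries of $\a$ differ by $0$ or $1$, every integer in $[\a_k,\a_1]$ actually occurs as a component, so $\nu(\a,-m)\neq 0$ precisely on the gap-free integer interval $I_{\a}:=\{-\a_1,\ldots,-\a_k\}$. Propagating the displayed recurrence step by step across $I_{\a}$ then produces a well-defined rational function $R_{\a}(\om)\in\mathbb{Q}(\om)$ such that $P_{\a+m}(\om)=R_{\a}(\om)(1-\om^{\nu(\a,-m)})$ for every $m\in I_{\a}$.

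Finally I would verify the formula outside $I_{\a}$ and the non-vanishing of $R_{\a}$. For $m\notin I_{\a}$, we have $\nu(\a,-m)=0$, so the right-hand side of (\ref{pm0}) vanishes; and on the left, $\a+m\notin\cA^k$ (the admissibility conditions $\a_1+m\geq 0$ and $\a_k+m\leq 0$ both require $m\in I_{\a}$), so $P_{\a+m}(\om)=0$ by definition. For non-vanishing, I would pick any $m_0\in I_{\a}$ and note that $\a+m_0\in\cA^k\cap\cZ^k_{\geq}$, whence $u_{\a+m_0}$ is already in normal form and contributes $1$ to $P_{\a+m_0}(\om)$; thus $P_{\a+m_0}$ is a nonzero polynomial with nonnegative integer coefficients, and $R_{\a}=P_{\a+m_0}/(1-\om^{\nu(\a,-m_0)})\neq 0$. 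I do not anticipate a substantial obstacle; the only point that needs a touch of care is the consistency of the definition at the endpoints of $I_{\a}$, which is automatic from the chain of equivalences $\nu(\a,-m)=0 \Leftrightarrow \a+m\notin\cA^k \Leftrightarrow P_{\a+m}=0$.
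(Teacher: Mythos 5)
Your proof is correct and follows essentially the same route as the paper's: you rearrange (\ref{eqpm}) into the factored form $P_{\a+m}(1-\om^{\nu(\a,1-m)})=P_{\a+m-1}(1-\om^{\nu(\a,-m)})$ and propagate it, which is exactly what the paper means by ``iterating (\ref{eqpm})'' to obtain $P_{\a+m}(\om)(1-\om^{\nu(\a,-l)})=P_{\a+l}(\om)(1-\om^{\nu(\a,-m)})$. You are somewhat more careful than the paper in spelling out the interval $I_\a$ on which the division is licit, the trivial verification outside $I_\a$ via $\nu(\a,-m)=0\Leftrightarrow\a+m\notin\cA^k\Leftrightarrow P_{\a+m}=0$, and the non-vanishing of $R_\a$ (via the constant term $1$ in $P_{\a+m_0}$), all of which the paper leaves implicit.
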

\begin{proof} For $\a\in \cZ^k_{\geq}$, there exists $l\in\bbbz$ such that 
$\nu(\a+l,0)=\nu(\a, -l)\neq 0$.
By iterating (\ref{eqpm}) we get 
\begin{equation*}
P_{\a+m}( \om) (1-\om^{\nu(\a, -l)})=P_{\a+l}( \om)(1-\om^{\nu(\a,-m)}) \quad 
\mbox{for all $m \in \mathbb{Z}$.}
\end{equation*}
Hence choosing
\begin{equation*}
R_{\a}(\om)=P_{\a+l}( \om)(1-\om^{\nu(\a,-l)})^{-1},
\end{equation*}
we obtain the required result.
\end{proof}

\begin{theorem}\label{main1}
The quantisation ideal $\fI_a$ is stable with respect to every member of the Volterra hierarchy
 $\d_{t_\ell}(u)=\cS(X^{(\ell)})u-u\cS^{-1}(X^{(\ell)}),\ \ell\in\N$.
\end{theorem}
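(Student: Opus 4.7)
The plan is as follows. It suffices to verify $\partial_{t_\ell}(\ff_{i,j}) \in \fI_a$ on each of the two types of generator. Since $\cS \cdot \partial_{t_\ell} = \partial_{t_\ell} \cdot \cS$ and $\fI_a$ is $\cS$-invariant, I take $i = 0$, leaving the nearest-neighbour generator $\ff_{0,1} = u_0 u_1 - \omega u_1 u_0$ and the distant generators $\ff_{0,j} = u_0 u_j - u_j u_0$ for $j \geq 2$.

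Writing $Y_m = \cS^m(X^{(\ell)})$ so that $\partial_{t_\ell}(u_n) = Y_{n+1} u_n - u_n Y_{n-1}$, I would expand $\partial_{t_\ell}(\ff_{0,j})$ by the Leibniz rule into eight terms. Projecting onto $\fA/\fI_a$ and applying $u_0 u_j \equiv \omega_{0,j} u_j u_0$ to the relevant pairs, these eight terms regroup into a sum of four commutators of the form $[Y_m, u_p]$. For $j=1$ a direct computation yields, modulo $\fI_a$,
\[
\partial_{t_\ell}(\ff_{0,1}) \stackrel{\fI_a}{\simeq} \omega [Y_1, u_1] u_0 + [u_0, Y_2] u_1 + \omega u_1 [Y_0, u_0] - u_0 [Y_{-1}, u_1],
\]
with an analogous four-commutator expression for $j \geq 2$ (without the $\omega$-factors). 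Each commutator is then unfolded via $\pi_{\fI_a}(Y_m) = \sum_{\a \in \cA^\ell \cap \cZ^\ell_\geq} P_\a(\omega) u_{\a+m}$ together with the quotient commutation rule $u_p u_\beta \equiv \omega^{\nu(\beta, p+1) - \nu(\beta, p-1)} u_\beta u_p$, producing a sum of the shape $\sum_\a P_\a(\omega)(1 - \omega^{\phi_{m,p}(\a)}) u_{\a+m} u_p$ for an explicit linear exponent $\phi_{m,p}(\a)$ in the multiplicities $\nu(\a, k)$.

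The final step is to show that when the contributions of all four commutators to any fixed normally ordered monomial are summed, the result vanishes. This is where Proposition \ref{pro1} and especially Corollary \ref{cor1} enter: the identity $P_{\a+m}(\omega) = R_\a(\omega)(1 - \omega^{\nu(\a, -m)})$ encodes the factor $(1 - \omega^{\nu(\a+m, 0)})$ sitting in each shifted polynomial, which is exactly what matches the commutator exponents $\phi_{m,p}(\a)$. After reindexing so that all four commutator sums are written in terms of a common $R_{\a_0}$, the coefficients collapse into telescoping differences of powers of $\omega$ that cancel term by term. The hard part will be the combinatorial bookkeeping: aligning the four contributions on each normally ordered monomial, tracking the shifts $\a \mapsto \a + m$, and treating boundary cases where $\a \pm 1$ falls outside $\cA^\ell$ so that $P_{\a \pm 1}(\omega)=0$ while the corresponding multiplicities $\nu(\a,\cdot)$ are still in play. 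Proposition \ref{pro1} is already formulated to handle these boundary effects, so once the bookkeeping is organised, Corollary \ref{cor1} delivers the cancellation and the case $j \geq 2$ proceeds in the same fashion.
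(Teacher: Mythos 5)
Your proposal follows essentially the same route as the paper's proof: reduce via $\cS$-invariance to the generators $\ff_{0,m}$, project the Leibniz expansion of $\partial_{t_\ell}(\ff_{0,m})$ onto $\fA/\fI_a$ using the explicit description $\pi_{\fI_a}(X^{(\ell)}_l) = \sum_{\a} P_{\a-l}(\omega)u_\a$ together with the commutation exponents $\nu(\a,\cdot)$, and then invoke Corollary~\ref{cor1} to replace every shifted polynomial $P_{\a+m'}(\omega)$ by $R_\a(\omega)(1-\omega^{\nu(\a,-m')})$, whereupon the coefficient of each normally ordered monomial collapses to zero by a short algebraic identity. Your intermediate regrouping into four commutators $[Y_m,u_p]$ is a cosmetic repackaging of the same eight-term computation, and your worry about boundary cases is handled exactly as in the paper by summing over $\a\in\cZ^\ell_\geq$ with $P_\b=0$ whenever $\b\notin\cA^\ell$.
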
 

\begin{proof} We fix $k$ and let $u_{\tau}=Q^{(k)}$ be the $(k+1)$-degree 
symmetry of the Volterra equation given by (\ref{volh}). Since 
$\mathcal{S}(\fI)=\fI$ we only need to show that 
\begin{equation*}
\pI\left(\partial_{\tau}(uu_m-\om^{\delta_{1,m}} u_mu) \right)=0, \quad m \in 
\bbbn.
\end{equation*}
This means that 
\begin{equation*}
\pI\left(Q^{(k)} u_m+u Q^{(k)}_m-\om^{\delta_{1,m}} Q^{(k)}_m 
u-\om^{\delta_{1,m}}u_m Q^{(k)} \right)=0 .
\end{equation*}
We rewrite it in terms of $X$. Here we simply drop its upper index of $X^{(k)}$.
\begin{eqnarray} 
&&\pI\left(uX_{m+1}u_m-\om^{\delta_{1,m}}X_{m+1}u_mu-uu_mX_{m-1}+\om^{\delta_{1,
m}}u_mX_{m-1}u \right.\nonumber\\
&&\qquad \left. 
+X_1uu_m-\om^{\delta_{1,m}}u_mX_1u-uX_{-1}u_m+\om^{\delta_{1,m}}u_muX_{-1}
\right) = 0. \label{idq}
\end{eqnarray}
It is clear that, for any $\a\in \cZ^k_{\geq}$, we have
\begin{eqnarray*}
&&u u_{\a} u_m \stackrel{\fI_a}{\simeq} \om^{ \nu(\a,1)-\nu(\a,-1)} u_{\a} u u_m, \\
&&u_m u_{\a} u\stackrel{\fI_a}{\simeq} \om^{\delta_{1,m}} \om^{ \nu(\a,m+1)-\nu(\a,m-1)} u_{\a} u 
u_m, \\
&&uu_m u_{\a} \stackrel{\fI_a}{\simeq}  \om^{ \nu(\a,m+1)+ \nu(\a,1)-\nu(\a,-1)-\nu(\a,m-1)} u_{\a} 
u u_m.
\end{eqnarray*}
Note that for all $l \in \mathbb{Z}$, we have
\begin{equation*}
\pI(X_l)=\pI(\cS^l X) =\cS^l \pI(X)=\cS^l\left(\sum_{\a\in \cZ^k_{\geq}} 
P_{\a}(\omega) u_{\a}\right)
=\sum_{\a\in  \cZ^k_{\geq}} P_{\a}(\omega) u_{\a+l}
=\sum_{\a\in  \cZ^k_{\geq}} P_{\a-l}(\omega) u_{\a}.
\end{equation*}
Here the sum is over all $\a\in \cZ^k_{\geq}$ including the ones not in $\cA^k$.
Hence, the left-handed side of (\ref{idq}) becomes
\begin{eqnarray*}
&&\sum_{\a\in  \cZ^k_{\geq}} \left(P_{\a-m-1}( \om) 
-P_{\a-m+1}( \om) \om^{ \nu(\a,m+1)-\nu(\a,m-1)}
\right) \left(\om^{ \nu(\a,1)-\nu(\a,-1)}-1\right) \pI(u_{\a} u u_m)\\
&&+\sum_{\a\in  \cZ^k_{\geq}} \left(P_{\a-1}(\om) -P_{\a+1}(\om) \om^{ 
\nu(\a,1)-\nu(\a,-1)}\right) \left(1-\om^{ \nu(\a,m+1)-\nu(\a,m-1)}\right)
\pI(u_{\a} u u_m)
\end{eqnarray*}
For any $\a\in \cZ^k_{\geq}$, we need to check that the coefficient of 
$\pI(u_{\a} u u_m)$ vanishes.
Using Corollary \ref{cor1}, it amounts to compute
\begin{eqnarray*}
 &&\left(1-\om^{\nu(\a, m+1)}-(1-\om^{\nu(\a, m-1)})\om^{ 
\nu(\a,m+1)-\nu(\a,m-1)}
\right) \left(\om^{ \nu(\a,1)-\nu(\a,-1)}-1\right)\\
&&+\left(1-\om^{\nu(\a, 1)} -(1-\om^{\nu(\a, -1)}) \om^{ 
\nu(\a,1)-\nu(\a,-1)}\right) \left(1-\om^{ \nu(\a,m+1)-\nu(\a,m-1)}\right),
\end{eqnarray*}
which equals zero after the simplification and thus we complete the proof.
\end{proof}

\subsection{Non-deformation quantisation for all odd-degree Volterra 
symmetries}\label{proofm2}
In this section, we will prove that all odd-degree symmetries of the nonabelian 
Volterra hierarchy admit the quantisation $\fI_b$, that is, the ideal $\fI_b$ 
defined by \eqref{idj} is preserved by the symmetry flows  \eqref{volh} when $\ell$ is even.
We extend the automorphism $\cS$ and the
antiautomorphism $\cT$ to the algebra $\fA[\omega]$ by letting $\cS(\omega) = 
\cT(\omega)=- \omega$ so that these operators are well-defined on the quotient 
$\fA / \fI_b$.

The ideas guiding the proof essentially are the same as in the previous section 
with the notable difference of the equivalence of Proposition \ref{pro1}, which 
is much harder in this case. 

As in the previous section, for an ideal $\fI_b$, we define uniquely 
$P_{\a}(\om) \in \mathbb{Z}[\om]$ by the canonical projection $\pJ: \fA 
\rightarrow \fA/\fI_b $ acting on $X^{(k)}$.
For example, we have 
\begin{eqnarray*}
&& \pJ(X^{(1)})=X^{(1)}=u; \qquad \pJ(X^{(2)})=X^{(2)}=u_1 u+u^2+u u_{-1};\\
&&\pJ(X^{(3)})=u_2 u_1 u+u_1^2 u +(1+\om) u_1 u^2+u^3+(1- \om) u^2 u_{-1} +u_1 
u u_{-1}+uu_{-1}^2+u u_{-1} u_{-2}.
\end{eqnarray*}
This leads to the polynomials $P_{\a}(\om)$, e.g., $P_{(0,0,-1)}(\om)=1-\om$.

To prove that the ideal $\fI_b$ defined by (\ref{idj}) is preserved by the 
symmetry flows $Q^{(2k)}$, 
we first prove the equivalents of Proposition \ref{pro1} only in this case for 
$\a\in \cZ^{2k}_{\geq}$. 
 We now assume that $\a\in\cA^{2k}$ and $\a-1\in\cA^{2k}$.
In the same way as we prove Proposition \ref{pro1}, we define the set $E_{\a}$ 
as
\begin{eqnarray*}\label{setE}
 E_{\a}=\left\{\beta\in \bbbz^{2k}\big{|} \b\sim \a,\ \b_1\geq 0,\ \b_{2k}\leq 
1, \ \b_{i} \leq \b_{i+1}+1, i=1,..., 2k-1 \right\},
\end{eqnarray*}
and split $E$ in two different ways by defining four subsets of $E_{\a}$:
\begin{equation*}\label{abcd}
\begin{split}
A_{\a}&=\{\b \in E_{\a}  \, |\,  \b_{2k} \leq 0 \}, \qquad B_{\a}=\{\b \in 
E_{\a}  \, |\,  \b_1 \geq 1 \},\\
C_{\a}&=\{\b \in E_{\a}  \, |\,  \b_{2k} =1 \}, \qquad D_{\a}=\{\b \in E_{\a}  
\, |\,  \b_1=0 \}.
\end{split}
\end{equation*}
It follows that
\begin{eqnarray}\label{pie2}
 \pJ\left(\sum_{\beta\in E_{\a}} u_{\b} \right)=\pJ\left(\sum_{\beta\in A_{\a}} 
u_{\b} \right)+\pJ\left(\sum_{\beta\in C_{\a}} u_{\b} \right)
 =\pJ\left(\sum_{\beta\in B_{\a}} u_{\b} \right)+\pJ\left(\sum_{\beta\in 
D_{\a}} u_{\b} \right) .
\end{eqnarray}
We need to evaluate each term under the ideal $\fI_b$. Since $A_{\a} = 
\cA^{2k}$ is the set of all elements equivalent to $\a$, it follows from 
(\ref{pa}) that
\begin{equation}\label{pia2}
\pJ\left(\sum_{\b\in A_{\a}} u_{\b} \right)=P_{\a}(\om) u_{\a}.
\end{equation}
For any $\beta\in B$, note that $\beta-1\in \cA^{2k}$ and
$\b-1 \sim \a-1$  and thus 
\begin{equation}\label{pib2}
 \pJ\left(\sum_{\beta\in B} u_{\b} \right)=\pJ \cS \left(\sum_{\beta-1\in 
\cA^{2k},\beta-1\sim \a-1} u_{\b-1} \right)
=P_{\a-1}(-\om) \cS u_{\a-1}=P_{\a-1}(-\om) u_{\a}.
\end{equation}
We are now left to evaluate the terms for $D_{\a}$ and for $C_{\a}$ and we do 
so in Proposition \ref{proa1} and Proposition \ref{proa2}, respectively. 
\begin{Pro}\label{proa1}
Let $u_{\alpha} = u_{\mu} u^n u_{\gamma}$, where $\a=(\mu,0, \cdots,0,\gamma) 
\in \bbbz^{2k}_{\geq}$. Then we have 
\begin{equation} 
\pJ\left(\sum_{\beta \in D_{\a}} u_{\beta} \right)= (-1)^{ \nu(\alpha,\geq 2)} 
\omega^{\nu(\alpha,1)} P_{\alpha}(\omega) u_{\alpha}. 
\label{pid2}
\end{equation}
\end{Pro}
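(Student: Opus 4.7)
My plan is to mirror the structure of the proof of Proposition~\ref{pro1}. First, for each $\b \in D_{\a}$, I would apply the same iterative rearrangement: processing the positive entries of $\b$ from left to right, move the $k$-th positive from position $i_k$ leftward to the target position $l_k$ of the rightmost available zero. By the argument already given in Proposition~\ref{pro1}, this produces the same bijection $D_{\a} \to A_{\a}$, which I denote $\b \mapsto \b^\natural$, and writes $\pJ(u_{\b}) = c(\b)\,\pJ(u_{\b^\natural})$ for a factor $c(\b) \in \mathbb{Z}[\omega]$ that now reflects the commutation relations of $\fI_b$ rather than those of $\fI_a$.

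The local computation is as follows: when a positive entry $u_v$ (with $v \geq 1$) is moved past a non-positive entry $u_w$ (with $w \leq 0$) to the left, the factor is $\omega$ precisely when $(v,w) = (1,0)$ (from $u_0 u_1 = \omega u_1 u_0$), and is $-1$ in all other cases, since such pairs have $|v-w|\geq 2$ and therefore anti-commute in $\fI_b$. Because the block form $\a = (\mu, 0^n, \gamma)$ guarantees that each moved positive crosses exactly one $u_0$ (namely the one at $l_k$) together with some number of strictly negative entries, gathering the local contributions yields
\[
c(\b) \;=\; \omega^{\nu(\a,1)}\,(-1)^{T(\b)-\nu(\a,1)}, \qquad T(\b) := \sum_{k=1}^{m}(i_k - l_k),
\]
where the power of $\omega$ is exactly $\nu(\a,1)$ because only positives of value $1$ contribute an $\omega$-factor, upon crossing their adjacent target $u_0$.

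The main obstacle will be that $T(\b)\pmod 2$ is \emph{not} constant across $\b \in D_{\a}$, so the naive pointwise identity $c(\b) = (-1)^{\nu(\a,\geq 2)}\omega^{\nu(\a,1)}\,c'(\b^\natural)$ (with $c'(\b^\natural)$ the $\fI_b$-factor expressing $u_{\b^\natural}$ in the standard form $u_{\a}$) does not hold. What must actually be proved is the summed identity $\sum_{\b \in D_{\a}} c(\b) = (-1)^{\nu(\a,\geq 2)}\,\omega^{\nu(\a,1)}\,P_{\a}(\omega)$. To establish it I would exploit the block structure of $\a$, noting that for every $\b \in D_{\a}$ the positive entries form a permutation of $\mu$ and the strictly negative entries a permutation of $\gamma$, interleaved with the $n$ zeros. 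A promising route is to construct a sign-reversing involution on the subset of $D_{\a}$ where $T(\b)\not\equiv \nu(\a,\geq 1)\pmod 2$, realized by transposing two adjacent entries lying within the $\mu$-part or within the $\gamma$-part of $\b$ and altering $T(\b)$ by an odd amount; an alternative is induction on $\nu(\a,\geq 2)$, with the base case (all positives equal $1$) being manageable directly. Once this summed identity is established, the conclusion of Proposition~\ref{proa1} is immediate from the definition $P_{\a}(\omega)\,u_{\a} = \pJ\big(\sum_{\b^\natural \in A_{\a}} u_{\b^\natural}\big)$.
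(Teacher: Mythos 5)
Your reading of the overall architecture is correct and matches the paper: after reparametrising the left-hand side you reduce the statement to showing that a signed sum over a ``bad'' subset vanishes, and indeed your parity condition $T(\beta)\not\equiv\nu(\alpha,\geq 1)\bmod 2$ is equivalent, in the paper's block parametrisation $\beta\leftrightarrow(\vec a,\vec b)$ with $u_\beta=\prod_i u\,u_{b_i}u_{a_i}$, to the condition $\vec a\cdot\vec b\equiv 1\bmod 2$ that the paper isolates. Your accounting of the scalar $c(\beta)=\omega^{\nu(\alpha,1)}(-1)^{T(\beta)-\nu(\alpha,1)}$ is also consistent with the paper's computation $\prod_i(-1)^{\nu(a_i,\geq 2)+|a_i||b_i|}\omega^{\nu(a_i,1)}$. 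Up to that point you and the paper agree; the paper merely computes $c(\beta)$ in closed form block by block rather than tracking the displacement $T(\beta)$ through the iterative sweep.

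The gap is the involution, which is the entire content of the proposition. Your proposed map --- ``transposing two adjacent entries lying within the $\mu$-part or within the $\gamma$-part'' --- cannot serve as a sign-reversing involution in $\fA/\fI_b$. Two adjacent positive entries $u_m,u_{m+1}$ satisfy $u_m u_{m+1}=(-1)^m\omega\,u_{m+1}u_m$, so such a transposition contributes $\pm\omega^{\pm 1}$ rather than $-1$; equal entries contribute $+1$; only entries differing by $\geq 2$ contribute $-1$. There is therefore no single local swap that uniformly reverses the sign while preserving the $\mathbb Z[\omega]$-factor, and nothing in your sketch explains how such a swap should interact with the bijection $\beta\mapsto\beta^\natural$ or why it would preserve the bad subset. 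Your fallback (``induction on $\nu(\alpha,\geq 2)$ with the base case all positives equal $1$ manageable directly'') is also unfounded: the paper's own worked example $\alpha=(1,1,0,0,0,-1)$ has $\nu(\alpha,\geq 2)=0$ and already requires the full cancellation argument over a six-element bad set, so the base case is not trivial.

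What the paper actually does is much more elaborate. Using the parities $c$ (number of blocks with $|a_i|,|b_i|$ both odd) and $d$ (both even) it shows both $c$ and $d$ are odd, selects a distinguished matched pair of block indices $k<m$ where $|a_k|$ and $|a_m|$ have opposite parities, partitions the bad set into $Y\sqcup Z$ according to the parity of $|a_k|$, and builds $\phi=\xi_{m-1}\cdots\xi_k$ from the maps $\xi_j$ of Lemma~\ref{lem3}, which transfer a \emph{minimal odd suffix/prefix} from one block to the next using the bijections $\psi:\Lambda_l\to\Gamma_l$ and $\rho:\Theta_l\to\Phi_l$ of the Appendix. Each $\xi_j$ contributes a controlled sign, and only because of the odd-length bookkeeping does the product of signs come out to $-1$. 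None of that structure is visible in your sketch. So while the skeleton of your argument is right, the proof as written would not close: you would still have to invent, essentially from scratch, the block-pairing and the $\psi/\rho/\xi$ machinery that the paper develops in its Appendix.
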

\begin{proof} We divide $\mu$ and $\gamma$ into $n$ parts and denote each part 
by $a_i$ for $\mu$ and $b_i$ for $\gamma$, where $i=1, 2, \cdots, n$, 
such that $\vec{a}=(a_1,...,a_n)\sim \mu$  and $\vec{b} = (b_1,... ,b_n)\sim 
\gamma$. Note that it is possible that the length of some $a_j$ (and/or $b_j$) 
is zero, in which case we take the convention $u_{a_j}=1$, $|a_j|=0$.
Clearly we have
$$
p=(0,b_1, a_1,0,b_2,a_2\cdots, 0, b_n,a_n)\in D_{\a}; \qquad q=(a_1,0,b_1, 
a_2,0,b_2\cdots, a_n, 0, b_n)\in A_{\a}.
$$
Thus in the quotient algebra, we obtain
\begin{eqnarray*}
&&\pJ(\prod_{i = 1}^n { u u_{b_i} u_{a_i}})= \prod_{i=1}^n{(-1)^{\nu(a_i,\geq 
2) + |a_i| |b_i|} \omega^{\nu(a_i,1)} u_{a_i} u u_{b_i}}\\
&&\qquad \qquad=  {\omega}^{ \nu(\mu,1)} (-1)^{ \nu(\mu,\geq 2)} (-1)^{ \sum_{i 
= 1}^n {|a_i||b_i| }}\prod_{i = 1}^n {u_{a_i} u u_{b_i}} .
 \end{eqnarray*}
We denote $ \sum_{i = 1}^n {|a_i||b_i| }$ by $\vec{a}\cdot\vec{b}$ and note 
that $\nu(\mu,1)=\nu(\alpha,1)$ and $ \nu(\mu,\geq 2)= \nu(\alpha,\geq 2)$. 
Hence
\begin{eqnarray*} 
&&\qquad \pJ(\sum_{p \in D_{\a}} u_p)= \pJ( \sum_{(\vec{a},\vec{b})} { \prod_{i 
= 1}^n { u u_{b_i} u_{a_i}}}) \\
&& =  {\omega}^{ \nu(\alpha,1)} (-1)^{ \nu(\alpha,\geq 2)} 
\pJ\left(\sum_{\vec{a}\cdot \vec{b}= 0\!\!\!\!\! \mod 2} { \prod_{i = 1}^n 
{u_{a_i} u u_{b_i} }}- \sum_{ \vec{a}\cdot \vec{b} =1\!\!\!\!\!\mod 2} 
{ \prod_{i = 1}^n {u_{a_i} u u_{b_i}}} \right) \\
&& =  {\omega}^{ \nu(\alpha,1)} (-1)^{ \nu(\alpha,\geq 2)} \pJ\left(\sum_{q\in 
A_{\a}} { \prod_{i = 1}^n {u_{a_i} u u_{b_i} }}- 2 \sum_{ \vec{a}\cdot \vec{b} 
=1\!\!\!\!\!\mod 2} 
{ \prod_{i = 1}^n {u_{a_i} u u_{b_i}}} \right).
\end{eqnarray*}
Note that the first term gives us the required identity (\ref{pid2}) using 
(\ref{pia2}).
Thus we are left to prove that  
$$  \pJ\left( \sum_{ \vec{a}\cdot \vec{b} =1\!\!\!\!\!\mod 2} 
{ \prod_{i = 1}^n {u_{a_i} u u_{b_i}}} \right) = 0.$$
From now on, we identify a pair of vectors $(\vec{a},\vec{b})$ with $ \prod_{i 
= 1}^n {u_{a_i} u u_{b_i}}$.
Let $$\xx = \{(\vec{a},\vec{b}) , \, \, \vec{a}\cdot\vec{b}= 1\!\!\mod 2 \}.$$ 
We split this set in two equal parts $Y$ and $Z$ after the following remarks.  
Let $c$ be the number of indices $i$ such that $|a_i|$ and $|b_i|$ are both 
odd, $d$ the number of indices such that $|a_i|$ and $|b_i|$ are both even. 
When none of this is true, the parity of $|a_i|+ |b_i|$ is odd. 

Since the length of $\alpha$ is even , the parity of $|\mu|+ |\gamma|$ is the 
same as $n$. Hence,
\begin{eqnarray*}
n  =  \sum_{i = 1}^n {|a_i| + |b_i| } \mod 2
    = n -c-d \mod 2 ,
\end{eqnarray*}
which implies that $c+d$ is even. Moreover, we know that $\vec{a}\cdot\vec{b}$ 
is odd, that is,
 \begin{eqnarray*}
1=\sum_{i = 1}^n {|a_i||b_i| } \mod 2
   = c \mod 2.
\end{eqnarray*}
Thus we have that both $c$ and $d$ are odd. 

Let $\mathcal{I} = \{i_1,..., i_{c+d} \}$ be the set of indices $i$ such that 
$|a_i|+|b_i|$ is even (We know that this set has cardinal $c+d$). Let $l$ be 
minimal so that $|a_{i_l}|$ and $|a_{i_{c+d+1-l}}|$ 
have different parity. Such $l$ exists and is unique. Indeed, if it did not
exist we would have $|a_{i_l}| \equiv |a_{i_{l+c+d-1}}|$ for all $l$ implying 
that $c$ and $d$ are even. 

We denote $i_l$ by $k(\vec{a},\vec{b})$ and $ i_{c+d+l-1}$ by $m(\vec{a}, 
\vec{b})$. However, in the sequel we will abuse notation and simply write $k$ 
and $m$, 
knowing that we have fixed the element $(\vec{a},\vec{b})$ in the set $\xx$.
Based on these definitions, we put the pair $(\vec{a}, \vec{b})$ in the set $Y$ 
if $|a_k|$ is odd and we put it in $Z$ if $|a_k|$ is even.

Let $q\in Y$ and $u_q=\prod_{i = 1}^n {u_{a_i} u u_{b_i}}$.
We are going to construct a bijective map $\phi: Y \mapsto Z$ such that 
$\phi(u_q) \stackrel{\fI_b}{\simeq} -u_q$ in the quotient algebra for all $q \in Y$. 
Define $$\phi(u_q) = (\xi_{m-1} ... \xi_k)(u_q), $$
where the maps $\xi_i$ are defined in Lemma \ref{lem3} in Appendix. Thus $\phi$ 
only transforms the product from the block $k$ to the block $m$, i.e.,
$   \prod_{i = k}^m {u_{a_i} u u_{b_i}}.$ 

By definition of the maps $\xi_i$, if we represent  $\phi(u_q)$ as $(\vec{c}, 
\vec{d})$ we see that $c_k$ and $d_k$ will have even length and that $c_m$ and 
$d_m$ will have odd length. 
It means that $\phi(u_q)$ is an element of $Z$, but also that we still have 
$k(\phi(u_q)) = k$ and $m(\phi(u_q))=m$. That is because we have left the first 
$k-1$ blocks and the last $n-m$ blocks intact. 
Since the values of $k$ and $m$ are unchanged by $\phi$ and that all the 
$\xi_i$'s are bijections, it follows that $\phi$ is a bijection as well. So it 
only remains to check that $\phi(u_q) \stackrel{\fI_b}{\simeq} -u_q$. 
By Lemma \ref{lem3} we have 
$$\phi(u_q) \stackrel{\fI_b}{\simeq} (-1)^{\eta} u_q$$
with 
$$\eta = |b_k| + |a_{k+1}| +  |b_{k+1}| + 1 + |a_{k+2}| + ... +  
|b_{m-1}|+1+|a_m|$$
We know that $|b_k|=1\mod 2$ and $|a_m|= 0\mod 2$. Hence,
$$\eta = 1 + \sum_{i = k+1}^{m-1}\left(|a_i|+|b_i|+1\right)\mod 2=1\mod 2$$
since there is a even number of indices $i$ for which $|a_i|  \equiv |b_i|$  
between $k$ and $m$. 
\end{proof}
Below we give an example to illustrate this proposition.
\begin{Ex}
Let $\alpha =(1,1,0,0,0,\text{-}1)$. We write as $\alpha= 11000\text{-}1$ for 
short. There are $18$ elements in the set $A_{\a}$. Indeed, to get an 
admissible monomial equivalent to $\a$ 
one needs to pick an element in $$\{11000, 10100, 10010, 01100, 01010, 00110 \} 
$$ and an element in $$\{ 000\text{-}1, 00 \text{-}1 0, 0 \text{-}1 00 \}.$$ 
Under the ideal $\fI_b$, we have 
$$P_{\a}(\omega) = 1 + 2 \omega^2 +2 \omega^4 + \omega^6.$$
Similarly there are $18$ elements in $D_{\a}$ since they are determined by the 
choice of an element in $\{ 01100, 01010, 01001, 00110, 00101, 00011 \}$ and an 
element in 
$\{ 000\text{-}1, 00 \text{-}1 0, 0 \text{-}1 00 \}$. So we have
$$\pJ(\sum_{\beta \in D_{\a}} u_{\beta}) = \omega^2+ 2 \omega^4 + 2 \omega^6 + 
\omega^8=\omega^2 P_{\a}(\omega),$$
which is consistent with (\ref{pid2}) since $\nu(\a,\geq 2)=0$ and $\nu(\a, 
1)=2$.

Following the line of Proposition \ref{proa1}'s proof,  with this example we 
first give a full description of the set $\xx$, then split it as $\xx = Y \cup 
Z$. 
 An admissible monomial is given by a partition of $|a_1|+|a_2|+|a_3|=2$ and a 
partition $|b_1|+|b_2|+|b_3|=1$. 
For this monomial to be in $\xx$ we need $|a_1||b_1|+|a_2||b_2|+|a_3||b_3|$ to 
be odd. It must be that $(|b_1|,|b_2|,|b_3|)$ is one of $(1,0,0)$, $(0,1,0)$ 
and $(0,0,1)$.
 Hence there are $6$ elements in $\xx$:
$$\xx = \{ 10\text{-}1100, 1010 \text{-}10, 010\text{-}110, 01010\text{-}1, 
10\text{-}1010, 10010\text{-}1 \}, $$
where $3$ elements belong to $Y$, namely, $$Y = \{ 10\text{-}1100, 1010 
\text{-}10, 10\text{-}1010 \}. $$
For each element in $Y$, we first identify the blocks $k$ and $m$, to remove a 
$1$ and a $-1$
from the block $k$ and to add them to the block $m$. We now write $Z$ in the 
same order, that is, $Z=\phi(Y)$:
  $$Z = \{ 010\text{-}110, 01010\text{-}1, 10010\text{-}1 \}. $$
One can check that $\pJ(\sum_{ \beta \in \xx} u_{\beta})= 0$ and  $ \pJ(\sum_{ 
\beta \in Y} u_{\beta})= -\pJ(\sum_{ \beta \in Z} u_{\beta})$.  
\end{Ex}

\begin{Pro}\label{proa2}
Let $u_{\alpha} = u_{\mu} u^n u_{\gamma}$, where $\a=(\mu,0, \cdots,0,\gamma) 
\in \bbbz^{2k}_{\geq}$. Then we have 
\begin{equation} 
\pJ\left(\sum_{\beta \in C_{\a}} u_{\beta} \right)= (-1)^{\nu(\a,\geq 
0)}\om^{\nu(\a,0)} P_{\a-1}(-\omega) u_{\a}. \label{pic2}
\end{equation}
\end{Pro}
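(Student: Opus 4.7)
The plan is to mirror the proof of Proposition~\ref{proa1}, with $C_\a$ playing the role of $D_\a$ and the identity (\ref{pib2}) providing the ``target'' contribution in place of (\ref{pia2}). The change in role of the boundary conditions ($\b_{2k}=1$ instead of $\b_1=0$) is reflected on the right-hand side by the exponent $\nu(\a,0)$ (rather than $\nu(\a,1)$) and by the overall sign $(-1)^{\nu(\a,\ge 0)}$ (rather than $(-1)^{\nu(\a,\ge 2)}$).

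First I would parameterize $C_\a$ in direct analogy with the parameterization used for $D_\a$. Every $\b\in C_\a$ is forced to end with $\b_{2k}=1$; peeling off this trailing entry and writing $\mu=(\mu',1)$, the remaining $2k-1$ entries interleave the $n$ copies of $u_0$ with ordered blocks $\vec{a}=(a_1,\dots,a_n)\sim\mu'$ and $\vec{b}=(b_1,\dots,b_n)\sim\gamma$, so that
\[
u_\b \;=\; \Bigl(\prod_{i=1}^{n}{}^{\!\!\rightarrow}\, u_{a_i}\,u\,u_{b_i}\Bigr)\,u_1,
\]
with the convention $u_{a_i}=1$ when $|a_i|=0$.

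Next I would use the commutation relations (\ref{comm2}) to bring each such word to a standard form indexed by $B_\a$: within each block $u_{a_i}\,u\,u_{b_i}$, apply Lemma~\ref{lem3} to move $u$ leftward across $u_{a_i}$ and rightward across $u_{b_i}$ (yielding a factor $\om^{\nu(a_i,1)-\nu(b_i,-1)}$ together with a sign depending on parities), and then drag the trailing $u_1$ leftward across the entire tail of non-positive entries. Summing exponents and signs over the $n$ blocks and the final sweep, I expect the overall prefactor to collapse to $(-1)^{\nu(\a,\ge 0)}\om^{\nu(\a,0)}$, exactly parallelling how the prefactor $(-1)^{\nu(\a,\ge 2)}\om^{\nu(\a,1)}$ arose in Proposition~\ref{proa1}. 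Combined with (\ref{pib2}) this produces the main contribution $(-1)^{\nu(\a,\ge 0)}\om^{\nu(\a,0)}P_{\a-1}(-\om)\,u_\a$ plus a residual sum over the ``bad parity'' pairs $(\vec a,\vec b)$ for which the naive sign differs from the target.

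To dispose of this residual I would construct an involution $\phi': Y'\to Z'$ on the bad set in direct analogy with the map $\phi$ of Proposition~\ref{proa1}, using the auxiliary maps of Lemma~\ref{lem3} to exchange parities between two blocks indexed by the positions $k(\vec a,\vec b)$ and $m(\vec a,\vec b)$ and to verify that $\phi'(u_q)\stackrel{\fI_b}{\simeq}-u_q$, so that the residual sum vanishes in $\pJ$. The main obstacle, as in Proposition~\ref{proa1}, will be the sign bookkeeping for the involution: because the trailing $u_1$ now sweeps through the whole word, each crossing picks up either a bare $-1$ (non-adjacent swap) or a $(-1)^{\min(i,j)}\om^{\pm 1}$ (adjacent swap), so the relative positions of the $u_0$ and $u_{-1}$ factors enter more intricately than in the $D_\a$ case; verifying that the parity of the total exchange on paired bad configurations is always odd will require a careful analogue of the argument establishing $c+d$ even and $c$ odd in Proposition~\ref{proa1}.
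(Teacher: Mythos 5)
Your proposal misses the key shortcut the paper exploits, and as written it is a plan rather than a proof: the ``sign bookkeeping'' you defer to the end is precisely the hard part, and it is appreciably harder than in Proposition~\ref{proa1} because the trailing $u_1$ must cross the entire word, interacting with zeros, negatives, and the remaining positives with position-dependent signs. You say you ``expect the overall prefactor to collapse'' and that verifying the involution ``will require a careful analogue'' — but none of this is actually carried out, so there is no proof here, only a programme.

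More importantly, the paper does something qualitatively different and much more economical. It observes that the antiautomorphism $\cT$ composed with $\cS^{-1}$ gives a bijection
\[
\cT\cS^{-1}: C_{\a} \longrightarrow D_{\cT(\a-1)},
\]
and that $\cT(A_{\a-1}) = A_{\cT(\a-1)}$. Since $\cT$ and $\cS$ are (anti)automorphisms that descend to $\fA/\fI_b$ after extending them by $\cT(\om)=\cS(\om)=-\om$, one simply writes $\sum_{\b\in C_\a}u_\b = \cS\cT\bigl(\sum_{\b\in D_{\cT(\a-1)}}u_\b\bigr)$, applies Proposition~\ref{proa1} to $D_{\cT(\a-1)}$, and then pushes the prefactor $(-1)^{\nu(\cT(\a-1),\geq 2)}\omega^{\nu(\cT(\a-1),1)}$ back through $\cS\cT$. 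The counting identities $\nu(\cT(\a-1),\geq 2)=\nu(\a,\leq -1)=\nu(\a,\geq 0)\ (\mathrm{mod}\ 2)$ (using $|\a|=2k$) and $\nu(\cT(\a-1),1)=\nu(\a,0)$ give exactly the stated coefficient, and the final $\cS$ turns $P_{\a-1}(\om)u_{\a-1}$ into $P_{\a-1}(-\om)u_\a$. No new partitioning, no new involution, and no fresh parity argument are needed.

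So: your route is not wrong in spirit — the statement is symmetric to Proposition~\ref{proa1} and a direct combinatorial mirror could in principle be made to work — but you would be re-deriving from scratch what the $\cT\cS^{-1}$ symmetry gives for free. If you want to salvage your approach you should first verify the sign of the ``final sweep'' of $u_1$ carefully; but the cleaner fix is to notice that $\cT\cS^{-1}$ exchanges the boundary conditions $\b_{2k}=1$ and $\b_1=0$ and reduces your whole problem to the one already solved.
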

\begin{proof} Note that 
 $\beta \in C_{\a}$ if and only if $\cT \cS^{-1} (\beta) \in D_{\cT(\a-1)}$, 
where $\cT$ is the antiautomorphism. Hence we have
\begin{equation*}
\cT\cS^{-1}(C_{\a}) = D_{\cT(\a-1)}.
\end{equation*}
Moreover, by definition of the map $\cT$, it is clear that
$\cT(A_{\a-1}) = A_{\cT(\a-1)}.$
Using these facts and Proposition \ref{proa1}, we obtain
\begin{eqnarray*}
&& \sum_{\beta \in C_{\a}} {u_{\beta}} = \cS \cT(  \sum_{\beta \in C_{\a}}{\cT 
\cS^{-1}(u_{\beta}) }) 
  = \cS \cT(\sum_{\beta \in D_{\cT(\a-1)}}{u_{\beta}}) \\
 &&\qquad \stackrel{\fI_b}{\simeq} \cS \cT \left( (-1)^{\nu(\cT(\a-1), \geq 2)} 
\omega^{\nu(\cT(\a-1),1)} \sum_{\beta \in A_{\cT(\a-1)}}{u_{\beta}}  \right) \\
  &&\qquad \stackrel{\fI_b}{\simeq} (-1)^{\nu(\a,\leq -1)} \omega^{\nu(\a,0)} \cS \cT  \sum_{\beta 
\in A_{\cT(\a-1)}}{u_{\beta}}  
  \stackrel{\fI_b}{\simeq} (-1)^{\nu(\a,\geq 0)} \omega^{\nu(\a,0)} \cS \sum_{\beta \in 
A_{\a-1}}{u_{\beta}}  \\
  &&\qquad \stackrel{\fI_b}{\simeq} (-1)^{\nu(\a,\geq 0)} \omega^{\nu(\a,0)} \cS 
\left(P_{\a-1}(\omega) u_{\a-1} \right)  \stackrel{\fI_b}{\simeq} (-1)^{\nu(\a,\geq 0)} 
\omega^{\nu(\a,0)} P_{\a-1}(-\omega) u_{\a},
 \end{eqnarray*}
which leads to (\ref{pic2}) since $\a\in \bbbz^{2k}_{\geq}$.
\end{proof}
Having evaluated all terms in (\ref{pie2}),
we are now in the position to prove the similar result as Proposition 
\ref{pro1} for the ideal $\fI_b$ defined by (\ref{idj}).
\begin{Pro}\label{pro2} Let $\a\in \cZ^{2k}_{\geq}$. Then, we have
\begin{equation} \label{eqp2}
P_{\a}(\om)+ (-1)^{\nu(\a,\geq 0)}\om^{\nu(\a,0)}P_{\a-1}(-\om) 
=P_{\a-1}(-\om)+ (-1)^{\nu(\a,\geq 2)}\om^{\nu(\a,1)} P_{\a}(\om).
\end{equation}
\end{Pro}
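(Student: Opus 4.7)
The plan is to assemble the four evaluations of $\pJ$ on the subsets $A_\a, B_\a, C_\a, D_\a$ of $E_\a$ and use the partition identity (\ref{pie2}), in direct analogy with the proof of Proposition \ref{pro1}. All the hard work has already been done in Propositions \ref{proa1} and \ref{proa2}; what remains is essentially a bookkeeping assembly.

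First I would dispose of the degenerate cases. For $\a \in \cZ^{2k}_{\geq}$, membership $\a \in \cA^{2k}$ is equivalent to $\nu(\a,0) \neq 0$. If $\a \notin \cA^{2k}$, then $P_{\a}(\om)=0$ and $\nu(\a,0)=0$, and similarly if $\a-1\notin\cA^{2k}$, then $P_{\a-1}(-\om)=0$ and $\nu(\a,1)=0$; in each such scenario the identity (\ref{eqp2}) collapses to a trivial equality and holds automatically. Hence we may assume both $\a$ and $\a-1$ lie in $\cA^{2k}$, which in particular forces $\a$ to have the shape $\a=(\mu,0,\ldots,0,\gamma)$ with $n\geq 1$ interior zeros, matching the hypothesis of Propositions \ref{proa1} and \ref{proa2}.

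In this main case I would substitute the four known evaluations into (\ref{pie2}): the $A_\a$-sum equals $P_\a(\om)u_\a$ by (\ref{pia2}); the $B_\a$-sum equals $P_{\a-1}(-\om)u_\a$ by (\ref{pib2}); the $D_\a$-sum equals $(-1)^{\nu(\a,\geq 2)}\om^{\nu(\a,1)}P_\a(\om)u_\a$ by Proposition \ref{proa1}; and the $C_\a$-sum equals $(-1)^{\nu(\a,\geq 0)}\om^{\nu(\a,0)}P_{\a-1}(-\om)u_\a$ by Proposition \ref{proa2}. Inserting these into the equality $\pJ(\sum_{A_\a}u_\b)+\pJ(\sum_{C_\a}u_\b)=\pJ(\sum_{B_\a}u_\b)+\pJ(\sum_{D_\a}u_\b)$ and cancelling the nonzero basis monomial $u_\a$ in $\fA\diagup\fI_b$ yields (\ref{eqp2}) verbatim.

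The genuine obstacle in this whole argument is not Proposition \ref{pro2} itself but the evaluation of the $D_\a$-sum carried out in Proposition \ref{proa1}, where one must exhibit a sign-reversing involution $\phi:Y\to Z$ on the ``bad'' pairs $(\vec{a},\vec{b})$ with $\vec{a}\cdot\vec{b}$ odd in order to kill the off-diagonal contribution. Once that is in hand, Proposition \ref{proa2} follows by applying $\cS\cT$, and the present proposition is essentially an accounting step: the four pieces of $E_\a$ add up in two different ways and the two resulting polynomial expressions must coincide.
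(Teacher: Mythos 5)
Your proof is correct and is essentially identical to the paper's: both dispose of the cases $\a\notin\cA^{2k}$ or $\a-1\notin\cA^{2k}$ by the parity/vanishing argument, then substitute (\ref{pia2}), (\ref{pib2}), (\ref{pid2}), (\ref{pic2}) into the two splittings (\ref{pie2}) of $E_\a$ and cancel $u_\a$ to obtain (\ref{eqp2}). Your remark that the real content lives in Propositions \ref{proa1} and \ref{proa2} is also accurate; the paper treats this proposition as the same kind of bookkeeping assembly you describe.
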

\begin{proof}
First note that this formula holds whenever $\a\notin\cA^{2k}$ or 
$\a-1\notin\cA^{2k}$ in the same reason as in the proof for Proposition 
\ref{pro1}. When $\a\in\cA^{2k}$ and $\a-1\in\cA^{2k}$,
we substitute (\ref{pia2}), (\ref{pib2}), (\ref{pid2})  and (\ref{pic2}) into 
(\ref{pie2}) and this leads to the required identity (\ref{eqp2}).
\end{proof}
Similar to Corollary \ref{cor1} for the case of ideal $\fI_a$, we have the 
following statement for the case of ideal $\fI_b$: 
\begin{Cor}\label{corj2}
Let $\a\in \cZ^{2k}_{\geq}$. There exists a non zero rational function 
$R_{\a}(\om) \in \mathbb{Q}(\om)$ such that
\begin{equation}\label{pm02}
P_{\a+m}( (-1)^m\om)=R_{\a}(\om)(1-(-1)^{\nu(\a, \geq -m)+m 
\nu(\a,-m)}\om^{\nu(\a, -m)}) \quad \mbox{for all $m \in \mathbb{Z}$.}
\end{equation}
\end{Cor}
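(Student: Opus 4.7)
The plan is to mirror exactly the proof of Corollary \ref{cor1} given earlier, using Proposition \ref{pro2} in place of equation \eqref{eqpm}, and being careful to track the sign alternations coming from $\cS(\om)=-\om$.

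First, I would substitute $\a\mapsto \a+m$ in Proposition \ref{pro2}. Using $\nu(\a+m,i)=\nu(\a,i-m)$ and $\nu(\a+m,\geq i)=\nu(\a,\geq i-m)$, this produces
\begin{equation*}
P_{\a+m}(\om)\bigl(1-(-1)^{\nu(\a,\geq 2-m)}\om^{\nu(\a,1-m)}\bigr) \;=\; P_{\a+m-1}(-\om)\bigl(1-(-1)^{\nu(\a,\geq -m)}\om^{\nu(\a,-m)}\bigr).
\end{equation*}
Substituting $\om\mapsto(-1)^m\om$ and introducing
\begin{equation*}
g_m(\om) := P_{\a+m}((-1)^m\om), \qquad \phi_j(\om) := 1-(-1)^{\nu(\a,\geq -j)+j\,\nu(\a,-j)}\om^{\nu(\a,-j)},
\end{equation*}
the identity collapses to the telescoping recursion $g_m\,\phi_{m-1}=g_{m-1}\,\phi_m$. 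The only nontrivial check is that the left-hand factor matches $\phi_{m-1}$, which reduces to the parity identity $\nu(\a,\geq 2-m)+m\,\nu(\a,1-m) \equiv \nu(\a,\geq 1-m)+(m-1)\nu(\a,1-m)\pmod 2$, itself immediate from $\nu(\a,\geq 1-m)-\nu(\a,\geq 2-m)=\nu(\a,1-m)$.

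Next I would fix a base index $l$ at which both sides are nonzero. Since $\a\in\cZ^{2k}_{\geq}$ is weakly decreasing, any integer $l$ with $\nu(\a,-l)\neq 0$ automatically satisfies $\a_{2k}\leq -l\leq \a_1$, hence $\a+l\in\cA^{2k}$ and $P_{\a+l}$ is a sum over a nonempty set of admissible monomials and is in particular a nonzero polynomial in $\om$; thus $g_l\neq 0$ and $\phi_l=1\pm\om^{\nu(\a,-l)}\neq 0$. Setting $R_{\a}(\om) := g_l(\om)/\phi_l(\om) \in \bbbq(\om)^{\times}$, the recursion $g_m\phi_{m-1}=g_{m-1}\phi_m$ propagates in both directions and yields $g_m = R_{\a}(\om)\,\phi_m$ for every $m\in\bbbz$. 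For values of $m$ where $\a+m\notin\cA^{2k}$ one has $P_{\a+m}=0$ and simultaneously $\phi_m=0$, so \eqref{pm02} holds trivially.

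All the genuine combinatorial work has been done in Propositions \ref{proa1}, \ref{proa2} and \ref{pro2}; what remains is essentially formal iteration. The main obstacle is notational rather than conceptual: one must carefully track how the signs $(-1)^{\nu(\a,\geq\cdot)}$ interact with the substitution $\om\mapsto(-1)^m\om$ and ensure that the exponent shift in $\nu(\a,\geq\cdot)$ telescopes cleanly into the expression given in \eqref{pm02}.
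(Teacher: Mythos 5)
Your proof follows essentially the same route as the paper's: both iterate the identity of Proposition \ref{pro2} after substituting $\a\mapsto\a+m$ and $\om\mapsto(-1)^m\om$, verify the resulting telescoping relation, and define $R_\a$ as the common ratio at a base index. Your $g_m\phi_{m-1}=g_{m-1}\phi_m$ and the parity check $\nu(\a,\geq 1-m)-\nu(\a,\geq 2-m)=\nu(\a,1-m)$ are a more explicit bookkeeping of what the paper states compactly as $R_{\a-1}(-\om)=R_\a(\om)$, and your observation that $\phi_m$ vanishes precisely when $\a+m\notin\cA^{2k}$ (because $\nu(\a,\geq -m)\in\{0,2k\}$ is then even) is the same remark the paper makes at the end of its proof. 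One small caveat: the reason you give for $g_l\neq 0$ --- that $P_{\a+l}$ is ``a sum over a nonempty set of admissible monomials and hence a nonzero polynomial'' --- is not a valid inference, since under $\fI_b$ the individual reduced terms carry signs and could in principle cancel. The conclusion is nonetheless correct (for instance, the top power of $\om$ in $P_{\a+l}$ is contributed by a unique maximally reordered $\b\sim\a+l$, so its coefficient is $\pm 1$), but this deserves its own argument; the paper is equally silent on the point, so your treatment is at parity with it there.
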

\begin{proof} Without the loss of generality, we assume that $\a-l\in 
\cA^{2k}$, for $0\leq l\leq q$. Let
$$
R_{\a}(\om)=\frac{P_{\a}(\om)}{1- (-1)^{\nu(\a,\geq 0)}\om^{\nu(\a,0)}} .
$$
The identity (\ref{eqp2}) implies that
$$
R_{\a-1}(-\om)=R_{\a}(\om).
$$
Thus for $0\leq l\leq q$  we have
\begin{eqnarray*}
&&P_{\a-l}( (-1)^l\om)=R_{\a-l}((-1)^l\om)\left(1- (-1)^{\nu(\a,\geq l)+l 
\nu(\a,l)}\om^{\nu(\a,l)}\right)\\
&&\qquad =R_{\a}(\om)\left(1- (-1)^{\nu(\a,\geq l)+l 
\nu(\a,l)}\om^{\nu(\a,l)}\right).
\end{eqnarray*}
When $\a+m\notin \cA^{2k}$, we have $P_{\a+m}(\om)=0$ following the definition 
of (\ref{pa}).
\end{proof}

\begin{theorem}\label{main2}
The quantisation ideal $\fI_b$ is stable with respect to every even member of the Volterra hierarchy
 $\d_{t_{2\ell}}(u)=\cS(X^{(2\ell)})u-u\cS^{-1}(X^{(2\ell)}),\ \ell\in\N$.
\end{theorem}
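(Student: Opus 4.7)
\textbf{Proof plan for Theorem \ref{main2}.} The plan is to replay the argument of Theorem \ref{main1} in Section \ref{proofm1} essentially verbatim, with the ideal $\fI_b$ and its projection $\pi_{\fI_b}$ in place of $\fI_a$ and $\pi_{\fI_a}$, the commutation relations (\ref{comm2}) replacing (\ref{comm1}), and Proposition \ref{pro2} and Corollary \ref{corj2} playing the roles of Proposition \ref{pro1} and Corollary \ref{cor1}. The even parity of the degree of $X^{(2\ell)}$ will enter at exactly the same point at which it entered Proposition \ref{proa1}.

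First I would fix $\ell\in\bbbn$ and set $u_\tau = \cS(X^{(2\ell)})u - u\cS^{-1}(X^{(2\ell)})$. Since $\cS(\fI_b)=\fI_b$, it suffices to show
\[
\pi_{\fI_b}\bigl(\partial_\tau(uu_m-\chi_m u_m u)\bigr)=0, \qquad m\geq 1,
\]
where $\chi_1=\om$ and $\chi_m=-1$ for $m\geq 2$ are the structure constants of $\fI_b$ for positive shifts. Writing $X=X^{(2\ell)}$ and $X_i=\cS^i X$, Leibniz's rule reproduces the direct analogue of (\ref{idq}) with $\om^{\delta_{1,m}}$ replaced throughout by $\chi_m$.

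Second, I would normally order the resulting eight triple products modulo $\fI_b$. For any $\a\in\cZ^{2\ell}_\geq$, moving $u$ or $u_m$ through $u_\a$ yields a factor of the form $(-1)^{\epsilon(\a,\cdot)}\om^{\nu(\a,\cdot)-\nu(\a,\cdot')}$, where $\epsilon$ records the parity of the number of anti-commutations. Because $|\a|=2\ell$ is even, the parity bookkeeping underlying Propositions \ref{proa1}--\ref{proa2} allows these signs to be packaged so that the coefficient of $\pi_{\fI_b}(u_\a u u_m)$ takes the same algebraic shape as in the proof of Theorem \ref{main1}, but with $P_{\a+n}(\om)$ replaced by $P_{\a+n}((-1)^n\om)$ and with explicit $\pm 1$ prefactors dictated by $\nu(\a,\geq\cdot)$. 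I would then apply Corollary \ref{corj2} to each such $P_{\a+n}((-1)^n\om)$, writing it as a common rational function $R_\a(\om)$ times the explicit factor $1-(-1)^{\nu(\a,\geq -n)+n\nu(\a,-n)}\om^{\nu(\a,-n)}$. After this substitution the coefficient collapses to a single polynomial identity in $\om$ with the same structural shape as at the end of the proof of Theorem \ref{main1}, and vanishes by elementary simplification.

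The main obstacle I anticipate is sign bookkeeping. Whereas in the $\fI_a$ case normal ordering produces only powers of $\om$, here every long-range commutation contributes a $-1$ and each nearest-neighbour commutation contributes a position-dependent $(-1)^{\text{index}}$. The parity $|\a|=2\ell$ is precisely what makes the eight contributions interlock in a way compatible with Corollary \ref{corj2}: the same computation applied to $X^{(2\ell+1)}$ would leave a residual sign that does not cancel, which is the algebraic reason why only even members of the Volterra hierarchy admit the $\fI_b$-quantisation.
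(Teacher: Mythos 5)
Your plan is correct and mirrors the paper's own proof of Theorem \ref{main2}: reduce by $\cS$-equivariance (with $\cS(\omega)=-\omega$) to a single generator of $\fI_b$, normal-order the eight triple products picking up signs $(-1)^{\nu(\beta,\cdot)}$ and powers $\omega^{\nu(\beta,\cdot)}$, invoke Proposition \ref{pro2} and Corollary \ref{corj2} in place of Proposition \ref{pro1} and Corollary \ref{cor1}, and verify that the resulting coefficient of $u_\beta u u_k$ is an identically vanishing polynomial in $\omega$. Your observation that the even degree $|\alpha|=2\ell$ is what makes the sign bookkeeping close up is exactly the role it plays in the paper, where Propositions \ref{proa1} and \ref{proa2} are stated only for $\alpha\in\bbbz^{2k}_{\geq}$.
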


\begin{proof}
Let $u_{\tau}=G=X^{(2\ell)}_1u-uX^{(2\ell)}_{-1}$, where $X^{(2\ell)}$ is the sum of all
admissible monomials of size $2\ell$, $\ell \geq 1$. Let $k \geq 2$. We want to show
that $\partial_{\tau}(uu_k+u_ku)$ is in the ideal $\fI_b$. By definition of 
$u_{\tau}$, this means that 
\begin{equation}\label{eqg}
\pJ(Gu_k+uG_k+G_ku+u_kG)=0,
\end{equation}
or, in terms of $X$ (we drop its upper index):
\begin{equation} \label{eq2}
\begin{split}
uX_{k+1}u_k+X_{k+1}u_ku-uu_kX_{k-1}-u_kX_{k-1}u& \\
+X_1uu_k+u_kX_1u-uX_{-1}u_k-u_kuX_{-1}&\stackrel{\fI_b}{\simeq} 0.
\end{split}
\end{equation}
Let us fix an element $\beta \in \cZ^{2\ell}$.
 We are going to show that the terms equivalent to $u_{\beta} uu_k$ modulo 
multiplication by an element of $\mathbb{Z}[\om]$ in (\ref{eq2}) cancel out. 
It is clear that 
\begin{equation*}
\begin{split}
u u_{\beta} u_k &\stackrel{\fI_b}{\simeq} (-1)^{ \nu(\beta,0)+ \nu(\beta,1)}\om^{ 
\nu(\beta,1)-\nu(\beta,-1)} u_{\beta} u u_k ,\\
u_k u_{\beta} u& \stackrel{\fI_b}{\simeq} (-1)^{ \nu(\beta,k)+ \nu(\beta,k +(-1)^k)}\om^{ 
\nu(\beta,k+1)-\nu(\beta,k-1)} u_{\beta} u_k u, \\
uu_k u_{\beta} & \stackrel{\fI_b}{\simeq}  (-1)^{ \nu(\beta,k)+ \nu(\beta,k +(-1)^k)+ \nu(\beta,0) 
+\nu(\beta,1)}\om^{ \nu(\beta,k+1)+ 
\nu(\beta,1)-\nu(\beta,0)-\nu(\beta,k-1)}u_{\beta} u u_k.
\end{split}
\end{equation*}
We know that for all $m \in \mathbb{Z}$,
\begin{equation*}
\pJ(X_{m})= \sum_{\a \in \cZ_{\geq}^{2n}} P_{\a}((-1)^{m} \om) u_{\a+m}.
\end{equation*}
Hence the $\mathbb{Z}[\om]$ coefficient of $u_{\beta} uu_k$ in 
$uX_{k+1}u_k+X_{k+1}u_ku$ is
\begin{equation*}
P_{\beta-k-1}((-1)^{k+1} \om)((-1)^{ \nu(\beta,0)+ \nu(\beta,1)}\om^{ 
\nu(\beta,1)-\nu(\beta,-1)}-1).
\end{equation*}
We compute the terms coming from $X_{-k-1}$, $X_1$ and $X_{-1}$ in a similar 
way. Thus, to prove that the coefficient of $u_{\beta} uu_k$ in (\ref{eq2}) is 
zero amounts to check that 
\begin{equation*}
\begin{split}
0 =&P_{\beta-k-1}((-1)^{k+1} \om)((-1)^{ \nu(\beta,0)+ \nu(\beta,1)}\om^{ 
\nu(\beta,1)-\nu(\beta,-1)}-1)\\
+&P_{\beta-k+1}((-1)^{k-1} \om)(-1)^{ \nu(\beta,k)+ \nu(\beta,k +(-1)^k)}\om^{ 
\nu(\beta,k+1)-\nu(\beta,k-1)}(1- (-1)^{  \nu(\beta,0) +\nu(\beta,1)}\om^{ 
\nu(\beta,1)-\nu(\beta,0)} ) \\
+& P_{\beta-1}(- \om)(1-(-1)^{ \nu(\beta,k)+ \nu(\beta,k +(-1)^k)}\om^{ 
\nu(\beta,k+1)-\nu(\beta,k-1)}) \\
+& P_{\beta+1}(-\om)(-1)^{  \nu(\beta,0)+ \nu(\beta,1)}\om^{ 
\nu(\beta,1)-\nu(\beta,-1)}((-1)^{ \nu(\beta,k)+ \nu(\beta,k +(-1)^k)}\om^{ 
\nu(\beta,k+1)-\nu(\beta,k-1)}- 1).
\end{split}
\end{equation*}
Using Corollary \ref{corj2}, we need to verify 
\begin{equation*}
\begin{split}
&(1-(-1)^{\nu(\beta,\geq k+1)+(k+1) \nu(\beta,k+1)}\om^{\nu(\beta,k+1)})((-1)^{ 
\nu(\beta,0)+  \nu(\beta,1)}\om^{ \nu(\beta,1)-\nu(\beta,-1)}-1)\\
+&(1-(-1)^{\nu(\beta,\geq k-1)+(k+1) \nu(\beta,k-1)}\om^{\nu(\beta,k-1)})(-1)^{ 
\nu(\beta,k)+ \nu(\beta,k +(-1)^k)}\om^{ \nu(\beta,k+1)-\nu(\beta,k-1)} \times\\
&(1- (-1)^{ \nu(\beta,0) + \nu(\beta,1)}\om^{ \nu(\beta,1)-\nu(\beta,0)} ) \\
+&(1-(-1)^{\nu(\beta, \geq 1)+ \nu(\beta,1)}\om^{\nu(\beta,1)}) (1-(-1)^{ 
\nu(\beta,k)+ \nu(\beta,k +(-1)^k)}\om^{ \nu(\beta,k+1)-\nu(\beta,k-1)}) \\
+&(1-(-1)^{\nu(\beta,\geq -1)+ \nu(\beta,-1)}\om^{\nu(\beta,-1)}) (-1)^{ 
\nu(\beta,0)+  \nu(\beta,1)}\om^{ \nu(\beta,1)-\nu(\beta,-1)}\times \\
&((-1)^{ \nu(\beta,k)+ \nu(\beta,k+(-1)^k)}\om^{ 
\nu(\beta,k+1)-\nu(\beta,k-1)}- 1)\\
&=0
\end{split}
\end{equation*}
and thus the identity (\ref{eqg}) holds. The proof that 
$\pJ\left(\partial_{\tau}(u_ku_{k+1}-(-1)^k\omega u_{k+1}u_k)\right)=0$ for all $k 
\in \bbbz$ is similar 
and we will not repeat it.
\end{proof}

\section{Summary and discussion}

In this paper we develop the method of quantisation of dynamical systems 
defined on free associative algebras based on the concept of quantisation 
ideals \cite{AvM20}. It enables us to determine possible commutation relations 
between the dynamical variables which are consistent with the dynamical system 
and define associative multiplication in the quotient algebra. The method does 
not use any information on the Poisson structure of the dynamical system and 
enables us to find non-deformation quantisations of the system.  To determine 
commutation relations consistent with a system is a very first 
step to its quantum theory. Next steps will require the development of the 
representation theory for the quantised algebras obtained and study  the 
spectral theory of the operators involved.

In this paper we explicitly proved that the nonabelian Volterra system
(\ref{vol}) and  its infinite hierarchy of symmetries admit the 
deformation quantisation with commutation relations (\ref{comm1}). We also 
proved that the sub-hierarchy, consisting of all odd degree symmetries, admits 
a non-deformation quantisation with commutation relations (\ref{comm2}). The 
existence of non-deformation quantisations is quite surprising. Further study 
is 
required to explore the properties of these new remarkable quantum 
algebra and quantum integrable equations.

Recently, when the paper has already been submitted to the journal, we found explicit expressions for the infinite sequence of quantum Hamiltonians $H_n$ corresponding to the $\fI_a$ quantisation of the Volterra hierarchy
\[
 H_\ell=\sum_{k\in\bbbz}\ \sum_{\alpha\in\cA_0^\ell}\frac{\omega^\ell-1}{\omega^{\nu(\alpha,0)}-1}P_\alpha(\omega)u_{\alpha+k},
\]
where $\cA_0^\ell=\{\alpha\in\cA^\ell\cap\cZ_{\geqslant}^\ell\,;\, \alpha_\ell=0\}$. Assuming that $\omega=e^{2\i\hbar},\ \hbar\in\R$, the Hamiltonians $H_\ell$ are self-adjoint $H_\ell^\dagger=H_\ell$. They commute with each other, and the dynamical equations of the quantum hierarchy can be written in the Heisenberg form (compare with (\ref{qvola})):
\[
 \partial_{t_\ell}(u_n)=\frac{\i}{2\sin(\ell\hbar)}[H_\ell,u_n],\qquad n\in\bbbz,\ \ell\in\bbbn\, .
\]
We have also  found explicit expressions for self-adjoint commuting quantum Hamiltonians  corresponding to non-deformation quantisation (\ref{comm2}) and present the quantum hierarchy with even times in the Heisenberg form. A detail proof of these results will be published elsewhere soon.

The Volterra hierarchy admits periodic reductions with any positive 
integer period $M$. We have 
shown that the Volterra system with periods $M=3,4$ admit quantisations with 
non-homogeneous commutation relations (Theorem \ref{propM}). When $M=3$, we proved the resulting quantum system is not only super integrable but also admits
bi-quantum structure, similar to its bi-Hamiltonian structure in the classical case. The cubic symmetry
of the Volterra system with period $M=4$ admits three distinct quantisations. In each case,
the quantum system is a super-integrable systems (Theorem \ref{propS}). Systems with periods $M\ge 5$ require more work, they
have not been studied in this paper in any detail.  
  
The methods developed in \cite{AvM20} and this paper can be applied to the 
nonabelian Narita-Itoh-Bogoyavlensky lattice \cite{Bog91} 
\begin{equation}\label{nib}
	u_t=\sum_{k=1}^p \left( u_{k} u- u u_{-k}\right), \quad p\in\mathbb{N}.
\end{equation}
The Volterra equation is corresponding to the case when $p=1$. Our study shows 
that system (\ref{nib}) and all equations of its hierarchy admit the 
quantisation with commutation relations 
\begin{eqnarray*}
 u_nu_{n+k}=\omega u_{n+k}u_n\, , \ \,1\leq k\leq p,\quad   
 u_nu_m=u_mu_n\, ,\quad   |n-m| >p \quad  n,m \in \mathbb{Z},
\end{eqnarray*}
where $\omega$ is a nonzero constant. The proof of this statement will be
published elsewhere. These commutation relations were also obtained by Inoue and Hikami \cite{InKa} using ultra-local Lax representation and  $R$--matrix technique.

Besides quadratic ideals, our computations for the nonabelian Volterra 
equation and its lower degree symmetries suggest that there is a
$\partial_{t_\ell}$--stable ideal generated by quadratic and cubic homogeneous 
polynomials.  For example, as far as we have checked, the first few symmetries 
in the 
 nonabelian Volterra hierarchy leave the following cubic ideal invariant:
\begin{eqnarray*}
 \tilde{\fJ}=\langle   u_nu_{n+1}u_{n-1}- u_{n+1}u_{n-1} u_{n}\, , \ 
\,u_nu_m-u_mu_n\, ;\, \  |n-m| >1 ,\  n,m \in 
\mathbb{Z}  \rangle .
 \end{eqnarray*}
Further research is needed to  study the properties of the Volterra 
chain which is well defined on the quotient algebra $\fA\diagup\tilde{\fJ}$.
Very little is known about this new invariant ideal and the quotient algebra 
which does not satisfy the condition (ii). 

The concept of quantisation ideals 
has not been linked yet with Lax representations, recursion operators, 
master-symmetries and other objects associated with the theory of integrable 
systems.  We think that further development of this theory will enable 
us to embrace a wide range of integrable systems as well as to clarify and 
simplify rather technical proofs of the statements presented in this paper.

\section*{Acknowledgments}
AVM and JPW are grateful for the support by the EPSRC small grant scheme EP/V050451/1, and partially by grants EP/P012655/1 and EP/P012698/1.
SC thanks for the National Research Foundation of Korea(NRF) grant funded by the Korea government(MSIT) (No.2020R1A5A1016126).

\section*{Appendix: Lemmas used for the proof of Proposition \ref{proa1}}
In this appendix, we are going to prove the lemmas used in constructing the 
bijection map between sets $A_{\a}$ and $D_{\a}$ (Proposition \ref{proa1}) in 
Section \ref{proofm2}.

Let $l$ be any integer. We denote by $\Lambda_l$ the set of admissible monomials 
of the form $u_a u_l u_b$  satisfying
 
 \begin{enumerate}
 \item[${\rm (i)}$] both $a$ and $b$ have components greater than $l$ if they 
are not empty.
 \item[${\rm (ii)}$] there exists a suffix $d$ of $a$ of odd length $a = cd$ 
where $c$ is either empty or ends with $l+1$. 
 \item[${\rm (iii)}$] if $b$ is non-empty then it ends with $l+1$. 
 \end{enumerate}
 If the length of $d$ in (ii) is minimal, we say that $d$ is the minimal odd 
suffix of $a$.

 We denote by $\Gamma_l$ the set of admissible monomials of the form $u_a u_l 
u_b$  where
 \begin{enumerate}
 \item[${\rm (i)}$] both $a$ and $b$ have components greater than $l$.
 \item[${\rm (ii)}$]  there exists a prefix $c$ of $b$ of odd length $b = cd$ 
where $c$ ends with $l+1$. 
 \item[${\rm (iii)}$] $b$ ends with $l+1$. 
 \end{enumerate}
If the length of $c$ in (ii) is minimal, we say that $c$ is the minimal odd 
prefix of $b$.

\begin{Lem} For all $l \in \bbbz$, we construct a bijection $\psi: \Lambda_l 
\rightarrow \Gamma_l$ such that for all $x \in \Lambda_l$,  $\pJ(\psi(x))=  
(-1)^l\omega \, \, x$.  Moreover, if $x = u_au_lu_b$ and $\psi(x) = u_cu_lu_d$, 
then $|c| =  |a| -|m|$ and $|d| = |b| + |m|$, where $m$ is the minimal odd 
suffix of $a$.
\end{Lem}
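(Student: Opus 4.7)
Given $x = u_a u_l u_b \in \Lambda_l$ with $a = c \cdot m$ where $m$ is the minimal odd suffix of $a$, my plan is to define $\psi(x) = u_c u_l u_d$ by transferring the block $u_m$ across $u_l$, with a canonical internal reordering, so that $d = \widetilde m \cdot b$ for a suitable permutation $\widetilde m$ of $m$. To select $\widetilde m$ I first extract the combinatorial structure of $m$: admissibility of $u_a u_l$ forces $m_{|m|} = l+1$, and minimality of $|m|$ forces any other $m_i = l+1$ to occur at an odd position $i$, since otherwise $(m_{i+1}, \ldots, m_{|m|})$ would be a shorter odd suffix of $a$ beginning just after a copy of $u_{l+1}$, contradicting minimality. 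In particular, the entries of $m$ at even positions all satisfy $m_i \geq l+2$.

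The reordering $\widetilde m$ should be chosen so that (i) $u_l u_{\widetilde m}$ is admissible, (ii) the minimal odd prefix of $d$ ending with $u_{l+1}$ has length exactly $|m|$, placing $\psi(x)$ in $\Gamma_l$, and (iii) the ratio of $u_l u_{\widetilde m}$ to $u_m u_l$ computed modulo $\fI_b$ equals $(-1)^l \omega$. The relations available are $u_l u_{l+1} = (-1)^l \omega u_{l+1} u_l$, $u_l u_k = -u_k u_l$ for $k \geq l+2$, and the analogous relations internal to $m$ needed to rewrite $u_m$ into $u_{\widetilde m}$. Bijectivity is then proved by constructing the explicit inverse $\psi^{-1}\colon \Gamma_l \to \Lambda_l$ that locates the minimal odd prefix $m'$ of the $\Gamma_l$-element, transfers $u_{m'}$ back across $u_l$ using the reversed canonical rewriting, and returns an element of $\Lambda_l$; the identities $\psi^{-1} \circ \psi = \mathrm{id}$ and $\psi \circ \psi^{-1} = \mathrm{id}$ will follow from the matching between the minimal odd suffix of $a$ and the minimal odd prefix of $d$ under this canonical rewriting.

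The main obstacle will be the explicit description of $\widetilde m$ and the verification of condition (iii). When $m$ contains a unique copy of $u_{l+1}$ (forced to be at position $|m|$), one can take $\widetilde m = m$: moving $u_l$ past $u_{l+1}$ contributes $(-1)^l \omega$, and the remaining $|m| - 1$ swaps past entries $\geq l+2$ contribute $(-1)^{|m|-1} = 1$ since $|m|$ is odd, yielding the required factor. When $m$ contains several copies of $u_{l+1}$, the naive choice $\widetilde m = m$ produces the wrong factor, so $\widetilde m$ must be chosen to cancel the discrepancy using internal commutations of $u_{l+1}$ with $u_k$ for $k \geq l+2$. I expect to proceed by induction on the number of $u_{l+1}$'s in $m$: the inductive step performs a single local rewriting (exchanging a $u_{l+1}$ with an adjacent $u_{\geq l+2}$) that reduces the analysis to a smaller case while tracking the compensating sign and $\omega$ factors.
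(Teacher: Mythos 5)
Your reading of the combinatorics of the minimal odd suffix $m$ is correct (the $l+1$'s in $m$ all sit at odd positions, $m_{|m|}=l+1$, the even positions are $\geq l+2$), the base case with a single $u_{l+1}$ is handled correctly, and you rightly observe that $\widetilde m=m$ fails when $m$ contains several $u_{l+1}$'s because the $\omega$-power comes out wrong. But the step where the work really lives — producing the reordering $\widetilde m$ in general — is only sketched, and the sketch as stated does not close.

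The proposed induction ``on the number of $u_{l+1}$'s in $m$, with the step performing a single local exchange of a $u_{l+1}$ with an adjacent $u_{\geq l+2}$'' does not obviously reduce anything: an adjacent swap neither removes a $u_{l+1}$ from $m$ nor shortens $m$, and after such a swap a $u_{l+1}$ can land on an even position, which destroys the structural invariant you established for minimal odd suffixes. What the paper does instead is structurally different: it inducts on the total length $|a|+|b|$ and, crucially, the recursion shifts level. Writing $m = e,\,l{+}1,\,d_1,\,l{+}1,\ldots,d_p,\,l{+}1$ with $|e|$ even and each $|d_i|$ odd, the reordering of $m$ is assembled from $\psi^{-1}$ applied to the shorter words $u_{l+1}u_{d_i}\in\Gamma_{l+1}$. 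This level shift $l\mapsto l+1$ while the length strictly drops is the engine of the construction, and it is exactly what replaces your ``single local rewriting.'' Without it there is no clear way to both get the correct power of $\omega$ (you need to recover $\omega^{1-k}$ from internal rewritings when $m$ has $k$ copies of $u_{l+1}$, and this requires moving $u_{l+1}$'s past blocks of $u_{\geq l+2}$'s in an organised way, which is itself an instance of the same problem one level up) and keep $u_c u_l u_{\widetilde m}u_b$ admissible.

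A second gap: you assert that bijectivity ``will follow from the matching between the minimal odd suffix of $a$ and the minimal odd prefix of $d$,'' but you do not prove that $\widetilde m$ is the \emph{minimal} odd prefix of $d$, which is what $\psi^{-1}$ must detect. This is not automatic; even with $\widetilde m = m$ one would have to rule out a shorter odd prefix of $m$ ending in $l+1$, and with the true $\widetilde m$ it is subtler still. In the paper this is where the ``Moreover'' clause of the Lemma earns its keep: the inductive length-tracking assertion is used to show that the recursively produced pieces $f_i$ have no proper odd suffix, which is precisely what guarantees that the assembled $\widetilde m$ is self-minimal, and from there injectivity of $\psi$ is argued via unicity of minimal odd prefixes and suffixes. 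You treat the ``Moreover'' part as a bookkeeping remark, but it is load-bearing in the bijectivity proof, and your plan has no substitute for it.
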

\begin{proof} We construct $\psi$ by induction on $|a|+|b|$.  The only element 
of length $2$ in $\Lambda_l$ is $u_{l+1}u_l$, while the only element of length 
$2$ 
in $\Gamma_l$ is $u_lu_{l+1}$. We let $\psi(u_{l+1}u_l) =u_l u_{l+1}$. The 
minimal odd suffix of $u_{l+1}$ is itself and we have $\pJ(u_lu_{l+1}) = 
(-1)^{l} \omega u_{l+1}u_l$, 
hence the statement of the Lemma holds for elements of length $2$. 

Suppose that we have constructed $\psi$ for all lengths strictly less than $n$ 
satisfying the statement. We now construct $\psi$ for elements of length $n$ and 
prove it satisfies the statement.
Let $u_au_lu_b$ be an element of $\Lambda_l$ of length $n$. Let $d$ be the 
minimal odd suffix of $a$. Explicitly, this $u_d$ has the form $u_eu_{l+1} 
u_{d_1}u_{l+1}... u_{d_p} u_{l+1}$,
where the $|d_i|$'s are odd and $|e|$ is even (hence possibly $e$ is empty). 
Note that in this decomposition of $u_d$, the elements $d_i$ and $e$ do not 
contain any $j < l+2$ and all end with $l+2$ 
(except if $e$ is empty). Hence for all $i = 1,..., p$, $u_{l+1}u_{d_i}$ is an 
element of $\Gamma_{l+1}$ whose length is strictly less than $n$. By the 
induction hypothesis, there exist $f_i$ of odd length and $g_i$ of even length 
such that 
$$ \psi^{-1}(u_{l+1} d_i)= u_{f_i} u_{l+1} u_{g_i} .$$
Note that $f_i$ does not have a proper odd suffix due to the last assertion in 
the Lemma. Recall that all elements in $f_i$ and $g_i$ are greater than $l+1$. 
The element
$ u_e\psi^{-1}(u_{l+1}u_{d_1} ) ... \psi^{-1}(u_{l+1}u_{d_p}) u_{l+1}$ is 
well-defined. It has exactly the same (odd) length as $d$ without any proper odd 
prefix and  
$$\pJ(u_e\psi^{-1}(u_{l+1}u_{d_1} ) ... \psi^{-1}(u_{l+1}u_{d_p}) u_{l+1})= 
((-1)^{l+1}\omega)^{-p} u_eu_{l+1} u_{d_1} u_{l+1}... u_{d_p} u_{l+1} .$$
We let
$$ \psi( u_a u_l u_b) = u_c u_lu_ e \psi^{-1}(u_{l+1}u_{d_1} ) ... 
\psi^{-1}(u_{l+1}u_{d_p}) u_{l+1} u_b .$$
Note that the last statement in the Lemma is satisfied. 
Let $$\chi=  u_e\psi^{-1}(u_{l+1}u_{d_1} ) ... \psi^{-1}(u_{l+1}u_{d_p}) 
u_{l+1}.$$
It has odd length and the number of $u_{l+1}$ in $\chi$ is $p+1$. Thus we have 
in the quotient algebra 
$$\pJ( u_l \chi)= (-1)^{1+(l+1)(p+1)} \omega^{p+1} \chi u_l .$$
hence
$$\pJ( u_l u_e \psi^{-1}(u_{l+1}u_{d_1} ) ... \psi^{-1}(u_{l+1}u_{d_p}) u_{l+1})=\pJ( u_l \chi)= (-1)^l \omega  u_eu_{l+1} u_{d_1}u_{l+1}... u_{d_p}u_{l+1} u_l$$
and a fortiori,
$$\pJ(\psi(u_au_lu_b)) = (-1)^l \omega u_au_lu_b.$$
We know that there are as many elements of length $n$ in $\Gamma_l$ as in $\Lambda_l$, hence it remains to check the injectivity of $\psi$ for length $n$.  Suppose that we have $\psi(u_au_lu_b) = \psi(u_{\tilde{a}} u_{l}u_{\tilde{b}})$.
In other words, we have 
\begin{eqnarray*}
&u_c u_lu_ e \psi^{-1}(u_{l+1}u_{d_1} ) ... \psi^{-1}(u_{l+1}u_{d_p}) u_{l+1} u_b = \\
& u_{\tilde{c}} u_lu_{\tilde{e}} \psi^{-1}(u_{l+1}u_{\tilde{d}_1} ) ... \psi^{-1}(u_{l+1}u_{\tilde{d}_q}) u_{l+1} u_{\tilde{b}}
\end{eqnarray*}
This equality implies that $c = \tilde{c}$ so we can simplify it slightly:
\begin{eqnarray*}
&u_ e \psi^{-1}(u_{l+1}u_{d_1} ) ... \psi^{-1}(u_{l+1}u_{d_p}) u_{l+1} u_b = 
u_{\tilde{e}} \psi^{-1}(u_{l+1}u_{\tilde{d}_1} ) ... \psi^{-1}(u_{l+1}u_{\tilde{d}_q}) u_{l+1} u_{\tilde{b}}
\end{eqnarray*}
Recall that $u_ e \psi^{-1}(u_{l+1}u_{d_1} ) ... \psi^{-1}(u_{l+1}u_{d_p}) u_{l+1}$ is the minimal odd prefix of the left hand side  and 
that $u_{\tilde{e}} \psi^{-1}(u_{l+1}u_{\tilde{d}_1} ) ... \psi^{-1}(u_{l+1}u_{\tilde{d}_q}) u_{l+1}$ is the minimal odd prefix of the right hand side. 
By unicity of the minimal odd prefix, they are equal. In particular, we have $b = \tilde{b}$ and $p = q$. Recall the definition of $f_i$ and $g_i$ such that $\psi^{-1}(u_{l+1}u_{d_i}) = u_{f_i} u_{l+1} u_{g_i}$. 
Similarly we write 
$$ \psi^{-1}(u_{l+1} u_{\tilde{d}_i}) = u_{ \tilde{f}_i} u_{l+1} u_{\tilde{g}_i}.$$
We have 
$$ u_{g_0} u_{f_1} u_{l+1}u_{g_1} u-{f_2} u_{l+1} ...   u_{f_p} u_{l+1} u_{g_p} = u_{\tilde{g}_0} u_{\tilde{f}_1} u_{l+1} u_{\tilde{g}_1} u_{\tilde{f}_2} u_{l+1} ...   u_{\tilde{f}_p} u_{l+1} u_{\tilde{g}_p},$$
where we have let $g_0 = e$ and $\tilde{g}_0 = \tilde{e}$. Therefore we have for all $i = 0,..., p-1$
$$ g_i f_{i+1} = \tilde{g}_i \tilde{f}_{i+1}.$$
Recall that both $f_{i+1}$ and $\tilde{f}_{i+1}$ are their own minimal odd suffix. Hence $f_{i+1}$ is the minimal odd suffix of $g_i f_{i+1}$ and $\tilde{f}_{i+1}$ is the minimal odd suffix of $ \tilde{g}_i \tilde{f}_{i+1}$. By unicity of the minimal odd suffix we have $f_{i+1} = \tilde{f}_{i+1}$, from where it follows that $g_{i} = \tilde{g}_i$. Hence
$$u_{l+1} u_{d_i} =\psi( u_{f_i} u_{l+1} u_{g_i}) = \psi( u_{\tilde{f}_i} u_{l+1} u_{\tilde{g}_i} )  = u_{l+1} u_{\tilde{d}_i}$$
and thus we complete the proof.
\end{proof}

Let $l$ be any integer. We denote by $\Theta_l$ the set of admissible monomials of the form $u_a u_l u_b$  where
 \begin{enumerate}
 \item[${\rm (i)}$] both $a$ and $b$ have components strictly smaller than $l$.
 \item[${\rm (ii)}$] there exists a suffix $d$ of $a$ of odd length $a = cd$ where $d$ starts with $l-1$. 
 \item[${\rm (iii)}$] $a$ starts with $l-1$. 
 \end{enumerate}
If the length of $d$ in (ii) is minimal, we say that $d$ is the minimal odd suffix of $a$.

 We denote by $\Phi_l$ the set of admissible monomials of the form $u_a u_l u_b$  where
 \begin{enumerate}
 \item[${\rm (i)}$] both $a$ and $b$ have components strictly smaller than $l$.
 \item[${\rm (ii)}$]  there exists a prefix $c$ of $b$ of odd length $b = cd$ where $d$ is either empty or starts with $l-1$. 
 \item[${\rm (iii)}$] $a$ is either empty or starts with $l-1$. 
 \end{enumerate}
 If the length of $c$ in (ii) is minimal, we say that $c$ is the minimal odd prefix of $b$.

\begin{Lem}  For all $l \in \bbbz$, we construct a bijection $\rho: \Theta_l \rightarrow \Phi_l$ such that   
$\pJ(\rho(x))=(-1)^{l+1}\omega^{-1} x$ for all $x \in \Theta_l$.  Moreover, if $x = u_au_lu_b$ and $\psi(x) = u_cu_lu_d$, 
then $|c| =  |a| -|m|$ and $|d| = |b| + |m|$, where $m$ is the minimal odd suffix of $a$. 
\end{Lem}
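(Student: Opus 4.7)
The plan is to mirror the construction of $\psi$ from the preceding lemma, with $u_{l+1}$ replaced by $u_{l-1}$ and the target multiplicative factor adjusted from $(-1)^l\omega$ to $(-1)^{l+1}\omega^{-1}$. I proceed by induction on $|a|+|b|$.

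For the base case, the only length-$2$ element of $\Theta_l$ is $u_{l-1}u_l$ and the only length-$2$ element of $\Phi_l$ is $u_lu_{l-1}$; I set $\rho(u_{l-1}u_l)=u_lu_{l-1}$. The relation $u_{l-1}u_l=(-1)^{l-1}\omega u_lu_{l-1}$ built into $\fI_b$ rearranges to $\pJ(u_lu_{l-1})=(-1)^{l+1}\omega^{-1}u_{l-1}u_l$, matching the required factor. For the inductive step, given $x=u_au_lu_b\in\Theta_l$ of length $n>2$, I let $d$ be the minimal odd suffix of $a$, so $a=cd$. Mirroring the decomposition used for $\psi$, I write $u_d=u_{l-1}u_{d_1}u_{l-1}u_{d_2}\cdots u_{l-1}u_{d_p}u_e$ with each $|d_i|$ odd, $|e|$ even, and all entries in $d_i,e$ at most $l-2$. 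Minimality of $d$ then guarantees that each block $u_{l-1}u_{d_i}$ lies in $\Phi_{l-1}$ and has no proper odd prefix on its right side, so I may apply the inductive $\rho^{-1}$ at level $l-1$ to each. Assembling yields
\[
\rho(x)\;=\;u_c\,u_l\,u_e\,\rho^{-1}(u_{l-1}u_{d_1})\,\cdots\,\rho^{-1}(u_{l-1}u_{d_p})\,u_{l-1}\,u_b,
\]
which lies in $\Phi_l$ and satisfies $|c|=|a|-|d|$.

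The sign verification telescopes as in the preceding lemma: moving $u_l$ rightward across the odd-length middle segment contributes $p+1$ factors of $(-1)^{l+1}\omega^{-1}$ from the $u_l,u_{l-1}$ commutation, together with an even number of anticommutation signs coming from entries $\leq l-2$, while the $p$ inductive applications of $\rho^{-1}$ at level $l-1$ contribute the inverse of the corresponding factor from the previous step; these combine to exactly $(-1)^{l+1}\omega^{-1}$. Injectivity follows from the uniqueness of the minimal odd prefix on the $\Phi_l$ side by the same argument used for $\psi$, and surjectivity is then automatic by a length-preserving counting of $\Theta_l$ and $\Phi_l$.

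The main obstacle is the structural mismatch: condition (ii) on $\Theta_l$ is phrased in terms of $d$ starting with $u_{l-1}$ rather than $c$ ending with $u_{l+1}$ as for $\Lambda_l$, so some care is needed to set up the alternating-block decomposition and to verify that each subword $u_{l-1}u_{d_i}$ genuinely satisfies the $\Phi_{l-1}$ hypotheses needed to invoke the inductive call. Once this structural compatibility is established, the remainder of the argument is a line-by-line transposition of the previous proof.
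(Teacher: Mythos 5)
Your proposal and the paper's proof take genuinely different routes. The paper's argument is a three-line reduction to the $\psi$ lemma via the antiautomorphism $\cT$: after extending $\cT$ to $\fA[\omega]$ by $\cT(\omega)=-\omega$, one checks that $\cT$ carries $\Theta_l$ into the $\Gamma$ class and $\Lambda$ into $\Phi_l$ (with the level reflected to $-l$), and then simply sets $\rho=\cT\circ\psi^{-1}\circ\cT$. Bijectivity and the length bookkeeping are inherited immediately, and the sign factor falls out in one line: $\psi^{-1}$ at level $-l$ contributes $(-1)^{-l}\omega^{-1}=(-1)^{l}\omega^{-1}$, and applying the outer $\cT$ negates $\omega$, giving the required $(-1)^{l+1}\omega^{-1}$. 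Your approach, a parallel induction mimicking the $\psi$ construction, repeats all the delicate block-decomposition and sign bookkeeping that the $\cT$-conjugation lets you inherit for free.

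Beyond that, your inductive step has a concrete defect. Writing $u_d=u_{l-1}u_{d_1}u_{l-1}u_{d_2}\cdots u_{l-1}u_{d_p}u_e$ with each $|d_i|$ odd and $|e|$ even forces $|d|=p+\sum_i|d_i|+|e|\equiv p+p+0\equiv 0\pmod 2$, so $d$ would have even length, contradicting the requirement that $d$ is an odd suffix of $a$. (Compare the $\psi$ proof, where $u_d=u_eu_{l+1}u_{d_1}u_{l+1}\cdots u_{d_p}u_{l+1}$ has $p+1$ copies of $u_{l+1}$, making the total length odd.) This is a symptom of the deeper asymmetry you flagged but did not resolve: condition (ii) in $\Lambda_l$ constrains the prefix $c$, while in $\Theta_l$ it constrains the suffix $d$, so these sets are not na\"ive mirror images at the same level; the correspondence only becomes exact after the simultaneous order-reversal and index-negation that $\cT$ implements. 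The ``line-by-line transposition'' you defer to is therefore not automatic and needs genuine re-derivation at exactly the points that $\rho=\cT\psi^{-1}\cT$ sidesteps.
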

\begin{proof}
Take $ \rho =\cT \psi^{-1} \cT$, where $\cT$ maps $\Theta_l$ to $\Gamma_l$ and maps $\Lambda_l$ to $\Phi_l$. Let $u_a u_l u_b \in \Theta_l$. We have
 $$ \psi^{-1}(\cT(b)u_{-l} \cT(a)) \equiv (-1)^{l} \omega^{-1} \cT(b) u_{-l} \cT(a)$$
 and since $\cT(\omega) = -\omega$,
 $$\cT( \psi^{-1}(\cT(b)u_{-l} \cT(a))) \equiv (-1)^{l+1} \omega^{-1} a u_l b.$$
 Let $m$ be the minimal odd prefix of $\cT(a)$. We know that $\psi^{-1}(\cT(b) u_{-l} \cT(a)) = c u_{-l} d$ with $|c| = |\cT(b)| + |m|$ and $|d| = |\cT(a)|- |m|$. We have $\rho(a u_l b )  = \cT(d) u_l \cT(c)$. We conclude by noting that $\cT(m)$ is the minimal odd suffix of $a$.
\end{proof}

Recall that we identify an element of $\xx$, that is a pair  $(\vec{a}, \vec{b})$ such that $ \vec{a}\cdot \vec{b}= 1\!\!\!\mod 2$ with the product $ \prod_{i = 1}^n {u_{a_i} u u_{b_i}}$. We denote a subset of $X$ 
consisting of a part of $\xx$ such that $u_{a_j} u \in \Lambda_0$ and $u_{b_j} u \in \Theta_0$ for some $1\leq j \leq n$ by $\xx_j$ . We are going to construct bijections $\xi_j : \xx_j \rightarrow \xx_{j+1}$.
\begin{Lem}\label{lem3}
There exists a bijection $\xi_j : \xx_j \rightarrow \xx_{j+1}$, $1\leq j \leq n-1$,  so that 
$$ \xi_j(u_p) \stackrel{\fI_b}{\simeq} (-1)^{|a_{j+1}| + |b_j|} u_p, \qquad p\in \xx_j. $$
\end{Lem}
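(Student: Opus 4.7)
\textbf{Proof plan for Lemma \ref{lem3}.} The plan is to build $\xi_j$ as a local modification of blocks $j$ and $j+1$ of the product $u_p = \prod_{i=1}^{n} u_{a_i} u u_{b_i}$, leaving all other blocks intact. The structural hypothesis $u_{a_j}u \in \Lambda_0$ provides $a_j$ with a (unique) minimal odd suffix $d_a$ whose entries are positive and which ends with $u_1$; similarly $u_{b_j}u \in \Theta_0$ provides $b_j$ with a minimal odd suffix $d_b$ whose entries are negative and which begins with $u_{-1}$. The map $\xi_j$ will transfer $d_a$ out of the $a_j$-piece into the $a_{j+1}$-piece, and $d_b$ out of the $b_j$-piece into the $b_{j+1}$-piece, via the two bijections $\psi$ and $\rho$ (with $l=0$) constructed in the previous two lemmas.

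Concretely, I would first apply $\psi$ to the subword $u_{a_j}u$, which sends it (modulo $\fI_b$) into a subword where the odd suffix $d_a$ has been transported past a central $u$; this yields, up to a factor $\omega$, a configuration in which $d_a$ is ready to be absorbed into the beginning of the next block's $a$-part. Symmetrically, I would apply $\rho$ to the image of the subword attached to $b_j$ to transport $d_b$, picking up a factor $-\omega^{-1}$. The $\omega$-factors from $\psi$ and $\rho$ cancel exactly (this is why the hypotheses are placed simultaneously on $a_j$ and $b_j$). To reassemble a product of the form $\prod_i u_{a'_i} u u_{b'_i}$ with the structural condition now attached to index $j+1$, the transported blocks $d_a$ and $d_b$ must be commuted past the untouched factors $u_{b_j}$ and $u_{a_{j+1}}$ sitting between them; the anticommutation rules $u_n u_m + u_m u_n = 0$ in $\fA/\fI_b$ (the entries of $d_a$ are all $\geq 1$ and those of $a_{j+1}$ are $\geq 1$ with no forced adjacencies, and similarly for the negative side) yield precisely the signs $(-1)^{|a_{j+1}|}$ and $(-1)^{|b_j|}$, giving the stated factor $(-1)^{|a_{j+1}|+|b_j|}$.

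Bijectivity of $\xi_j$ is inherited from that of $\psi$ and $\rho$: one defines $\xi_j^{-1}$ by inverting these two bijections on the image blocks and commuting $d_a$ and $d_b$ back through the intervening factors, which reproduces the identical sign. The main obstacle, and the one requiring the most delicate bookkeeping, is verifying that the output actually lies in $\xx_{j+1}$, i.e.\ that the new $(j+1)$-th block $u_{a'_{j+1}} u u_{b'_{j+1}}$ satisfies $u_{a'_{j+1}}u \in \Lambda_0$ and $u_{b'_{j+1}}u \in \Theta_0$ with the appropriate minimal odd suffixes. This uses the last assertion in each of the two preceding lemmas (on how $\psi$ and $\rho$ transport the minimal odd suffix exactly, without creating shorter ones), together with the fact that no new $u_1$ or $u_{-1}$ at the wrong place can arise, since the intervening blocks $u_{b_j}$ and $u_{a_{j+1}}$ have entries of opposite sign to the transported pieces. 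Once $\xx_{j+1}$-membership and the sign are established, the fact that $\xi_j$ covers all of $\xx_{j+1}$ follows because the inverse construction lands in $\xx_j$ by the same analysis applied in reverse.
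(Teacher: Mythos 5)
Your overall strategy matches the paper's: define $\xi_j$ as a local modification of blocks $j$ and $j+1$ via the maps $\psi$ and $\rho$ at level $l=0$, derive bijectivity from bijectivity of $\psi$ and $\rho$, and verify membership in $\xx_{j+1}$ by tracking how $\psi,\rho$ transport the minimal odd suffix. The structural outline is right.

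However, the sign tally is not a correct derivation, and this is the part I would push back on. First, the $\omega$-factors from $\psi$ and $\rho$ do \emph{not} ``cancel exactly'': $\psi$ contributes $\omega$ and $\rho$ contributes $-\omega^{-1}$ (for $l=0$), so their product is $-1$; that residual sign must appear somewhere in the ledger. Second, and more seriously, the bookkeeping you describe omits commutations that cannot be avoided. The maps $\psi$ and $\rho$ act on $u_{a_j}u\,u_{a_{j+1}}$ and $u_{b_j}u\,u_{b_{j+1}}$, but those subwords are not contiguous in $u_p$: to bring $u_{b_j}$ next to the second $u$ you must move it past $u_{a_{j+1}}$, contributing $(-1)^{|b_j||a_{j+1}|}$, and after applying $\psi,\rho$ you must move $u_{\tilde{\tilde a}_j}$ past $u_{\tilde b_j}$ to restore the block form $\prod_i u_{a'_i}u\,u_{b'_i}$, contributing $(-1)^{|\tilde{\tilde a}_j||\tilde b_j|}$. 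Your count of ``$d_a$ past $u_{b_j}$ and $d_b$ past $u_{a_{j+1}}$'' gives $(-1)^{|b_j|+|a_{j+1}|}$ only on its face; it leaves out both the residual $-1$ and the commutation factors above. The total does come out to $(-1)^{|a_{j+1}|+|b_j|}$, but only after one uses the crucial parity relations $|\tilde b_j|\equiv |b_j|+1$ and $|\tilde{\tilde a}_j|\equiv |a_{j+1}|+1 \pmod 2$ (a consequence of transporting an odd-length piece), which you never invoke. As written, your sign argument lands on the correct answer by coincidence, not by a sound count. The parity relation is exactly what the paper records after defining $\tilde a_j,\tilde{\tilde a}_j,\tilde b_j,\tilde{\tilde b}_j$, and it is the ingredient that makes the cancellation $(-1)^{|b_j||a_{j+1}|}\cdot(-1)\cdot(-1)^{|\tilde{\tilde a}_j||\tilde b_j|}=(-1)^{|b_j|+|a_{j+1}|}$ work. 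Your write-up should introduce those length parities explicitly and redo the tally; the rest of your plan (bijectivity and $\xx_{j+1}$-membership) is on the right track.
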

\begin{proof}
Let $(\vec{a}, \vec{b})$ be an element of $\xx_j$. Consider the product of block $j$ with block $j+1$, i.e.,
$$u_{a_j}uu_{b_j} u_{a_{j+1}} u u_{b_{j+1}} .$$ 
We have $a_j 0 a_{j+1} \in \Lambda_0$ and $b_j 0 b_{j+1} \in \Theta_0$. Hence there exist $\tilde{a}_j$, $\tilde{\tilde{a}}_j$, $\tilde{b}_j$, $\tilde{\tilde{b}}_j$ such that, 
\begin{eqnarray*}
 \psi(u_{a_j} u u_{a_{j+1}}) = u_{\tilde{a}_j}u u_{\tilde{\tilde{a}}_j}, \qquad
  \rho(u_{b_j} u u_{b_{j+1}})= u_{\tilde{b}_j} u u_{\tilde{\tilde{b}}_j} .
\end{eqnarray*}
From the definitions of $\rho$ and $\psi$ it follows  that $\tilde{\tilde{a}}_j 0 \in \Lambda_0$, $\tilde{\tilde{b}}_j 0 \in \Theta_0$ and
$$(|\tilde{a}_j|, | \tilde{\tilde{a}}_j|, |\tilde{b}_j|, | \tilde{\tilde{b}}_j|) = (|a_j| + 1, | a_{j+1}| +1, |b_j| + 1, | b_{j+1}| +1)\!\!\! \mod 2.$$
We now define $\xi_j: ((\vec{a}, \vec{b})  \mapsto (\vec{c}, \vec{d})$ as follows:
\begin{eqnarray*}
 & c_i = a_i \text{ and } d_i  = b_i \text{ if } i \neq j \text{ and } i \neq j+1 \\
 & c_j = \tilde{a}_j, \, \, d_j =  \tilde{b}_j, \, \, c_{j+1} = \tilde{\tilde{a}}_j, \, \text{ and } d_{j+1} = \tilde{\tilde{b}}_j.
 \end{eqnarray*}
It is clear that $(\vec{c}, \vec{d})$ is in the subset $\xx_{j+1}$. The map $\xi_j$ is a bijection since both $\psi$ and $\rho$ are bijections.
Moreover, we have
\begin{eqnarray*}
 \pJ( u_{\tilde{a}_j} u u_{\tilde{\tilde{a}}_j })= \omega u_{a_j} u u_{a_{j+1}}, \qquad
\pJ(u_{\tilde{b}_j} u u_{\tilde{\tilde{b}}_j})= - \omega^{-1} u_{b_j} u u_{b_{j+1}}.
\end{eqnarray*}
We know $\pJ(u_{a_j}u u_{b_j} u_{a_{j+1}} u b_{j+1})= (-1)^{|b_j| |a_{j+1}|} u_{a_j}u u_{a_{j+1}} u_{b_j}u u_{b_{j+1}}$.
Therefore, we obtain
\begin{eqnarray*}
\pJ( \xi_j(u_p))_{p\in X_j}= (-1)^{1+|b_| |a_{j+1}|} u_{a_1} u u_{b_1} \cdots u_{\tilde{a}_j} u u_{\tilde{\tilde{a}}_j} u_{\tilde{b}_j} u u_{\tilde{\tilde{b}}_j} \cdots u_{a_n} u u_{b_n}
 = (-1)^{|b_j| +|a_{j+1}|}u_q, 
\end{eqnarray*}
where $q=(\vec{c}, \vec{d})\in \xx_{j+1}$ and thus we complete the proof.
\end{proof}

\subsection*{Data availability statement}
Data sharing is not applicable to this article as no datasets were generated or 
analyzed during the current study.
\subsection*{Code availability statement}
Not applicable.
\subsection*{Conflict of interests}
We declare that there is no conflict of interests.

\end{document}